\definecolor{myred}{HTML}{ea4335}
\definecolor{mygreen}{HTML}{41a756}
\definecolor{myblue}{HTML}{4285f4}
\newtheorem{theorem}{Theorem}[section]
\newtheorem{corollary}[theorem]{Corollary}
\newtheorem{lemma}[theorem]{Lemma}
\newtheorem{proposition}[theorem]{Proposition}
\newtheorem{example}{Example}
\let\cd\cdot
\let\a\alpha
\let\g\gamma
\let\e\epsilon
\let\d\delta
\let\l\lambda
\let\sub\subseteq
\def\cblue{\color{myblue}}
\def\cred{\color{myred}}
\def\nameS{Dynamic MMF\xspace}
\def\nameM{Dynamic DRF\xspace}
\def\nameW{Dynamic weighted MMF\xspace}
\newcommand{\N}{\mathbb{N}}
\newcommand{\ou}[3][=]{\overset{#2}{\underset{#3}{#1}}}
\newcommand{\ov}[2][=]{\overset{#2}{#1}}
\newcommand{\One}[1]{\mathbbm{1}\left[#1\right]}
\newcommand{\citewithauthor}[1]{\citeauthor{#1}~\cite{#1}}
\title{Incentives in Dominant Resource Fair Allocation under Dynamic Demands}
\author{
    Giannis Fikioris\thanks{Supported in part by NSF grants CCF-1408673 and AFOSR grant FA9550-19-1-0183.}\\
    Cornell University\\
    \texttt{gfikioris@cs.cornell.edu}
    \and
    Rachit Agarwal\thanks{Supported by NSF grants CNS-1704742 and CNS-2047283, and a Sloan fellowship.}\\
    Cornell University\\
    \texttt{ragarwal@cornell.edu}
    \and
    \'Eva Tardos\thanks{Supported in part by NSF grants CCF-1408673, CCF-1563714 and AFOSR grant FA9550-19-1-0183.}\\
    Cornell University\\
    \texttt{eva.tardos@cornell.edu}
}
\begin{document}

\maketitle
\begin{abstract}
    Every computer system---from schedulers in clouds (e.g., Amazon, Google, Microsoft, etc.) to computer networks to hypervisors to operating systems---performs resource allocation across system users. The defacto allocation policies used in most of these systems are max-min fairness for single resource settings and dominant resource fairness for multiple resources. These allocation schemes guarantee many desirable properties like incentive compatibility, envy-freeness, and Pareto efficiency, assuming user demands are static (time-independent). However, in modern real-world production systems, user demands are dynamic, that is, vary over time. As a result, there is now a fundamental mismatch between the resource allocation goals of computer systems, and the properties enabled by classical resource allocation policies. This paper aims to bridge this mismatch. When demands are dynamic, instant-by-instant max-min fairness can be extremely unfair over a longer period of time, i.e., lead to unbalanced user allocations, as previous large allocations have no effect in the current time step.
We consider a natural generalization of the classic algorithm for max-min fair allocation and dominant resource fairness for multiple resources when users have dynamic demands: this algorithm guarantees Pareto optimality while ensuring that resources allocated to users are as max-min fair as possible \emph{up to} any time instant, given the allocation in previous periods. While this dynamic allocation scheme remains Pareto optimal and envy free, unfortunately, it is not incentive compatible. We study the strength of the incentive to misreport; our results show that the possible increase in utility by misreporting demand is bounded and, since this misreporting can lead to significant decrease in overall useful allocation, this suggests that it is not a useful strategy. 
Our main result is to show that our dynamic version of the dominant resource fairness algorithm is $(1+\rho)$-incentive compatible, where $\rho$ quantifies the relative importance of a resource for different users; we also show that this factor is tight even with only two resources. 
We also present a $3/2$ upper bound and a $\sqrt 2$ lower bound for the incentive compatibility when there is only one resource. We also offer extensions of the results for the case  when the mechanism uses weights to prioritize every user differently. Our results indicate a big disparity between single resource and multiple resources even with just two resources showing a surprising difference in the users' incentive to deviate, with significantly less incentive to deviate in the case of one resource, and this is especially true in the case when the mechanism uses weights to prioritize every user differently.
\end{abstract}

\section{Introduction}
Resource allocation is a fundamental problem in computer systems. Companies like Google~\cite{borg} and Microsoft~\cite{DBLP:conf/sigcomm/GrandlAKRA14} use schedulers in private clouds to allocate a limited and divisible amount of resources (e.g., CPU, memory, servers, etc.) among a number of selfish and possibly strategic users that want to maximize their allocation; the goal of the scheduler is to maximize resource utilization while achieving fairness in resource allocation. The defacto allocation policies used in many of these systems are the classic \textit{max-min fairness} (MMF) and its relatively recent generalization, \textit{dominant resource fairness} (DRF)~\cite{DBLP:conf/nsdi/GhodsiZHKSS10} policies, for single and multiple resource settings, respectively. For instance, these policies are used in most schedulers in private clouds~\cite{yarn, mesos, borg, carbyene, apollo, DBLP:conf/nsdi/GhodsiZHKSS10, DBLP:conf/osdi/ShueFS12, DBLP:conf/sigcomm/GrandlAKRA14, DBLP:conf/osdi/GrandlKRAK16}; they are deeply entrenched in congestion control protocols like TCP and its variants~\cite{chiu1989analysis, dctcp}; and are the default policies for resource allocation in most operating systems and hypervisors~\cite{kvm, esxi}. DRF has also attracted a lot of attention in the economics and computing community, starting with \citewithauthor{DBLP:conf/sigecom/ParkesPS12} and with followup work \cite{DBLP:conf/ijcai/LiLL18, DBLP:conf/sigecom/FriedmanPV15, DBLP:conf/sigecom/FriedmanPV17, DBLP:conf/sigmetrics/ImMMP20, DBLP:conf/atal/KashPS13}.
%

Such prevalence of MMF and DRF is rooted in the properties they guarantee: Pareto-efficiency, sharing incentives (users are not better of by getting their fair share of resources every round), incentive compatibility, envy-freeness (no user envies the allocation of another user), and fairness. However, to guarantee these properties, both MMF and DRF policies assume that user demands do not change over time. This assumption is far from realistic in modern real-world deployments: several recent studies in the systems community have shown that user demands have become highly dynamic, that is, vary over time~\cite{vuppalapati2020building, cheng2018analyzing, reiss2012heterogeneity, yang2020large}. For such dynamic user demands, na\"ively using these policies (e.g., to perform a new instantaneously max-min fair allocation every unit of time) can result in vastly disparate user allocations over time---intuitively, since MMF does not take past allocations into account, dynamic user demands can result in increasingly unfair user allocations over time.

Motivated by the realistic case of dynamic user demands over divisible resources, we study {\nameM}{\footnote{The name Dynamic DRF has also been used by \cite{DBLP:conf/atal/KashPS13} for the extension of DRF when users arrive and depart sequentially. See more about the difference in \cref{sec:related}.}}, a mechanism that generalizes DRF for dynamic demands; just like DRF generalizes MMF for multi-resource allocations, Dynamic DRF generalizes {\nameS}~\cite{DBLP:journals/pomacs/FreemanZCL18} for multi-resource allocations over dynamic demands by taking past allocations into account. Our model is the same as the original DRF paper~\cite{DBLP:conf/nsdi/GhodsiZHKSS10}: every round, each user specifies a vector of \textit{ratios} (the proportions according to which the different resources are used by the user, e.g., for her application) and a \textit{demand} (the maximum allocation of resources that would be useful to her). Users have \textit{Leontief preferences}---as they are known in economics, i.e., their utility each round is equal to the minimum over the amount of every resource they receive divided by their ratio for it, up to their demand. However, in contrast to the DRF paper, we consider scenarios where the ratios and the demands can vary over time and users want to maximize the sum of their utilities across rounds.

In every round, \nameM allocates resources while being as fair as possible given the past allocations: first the minimum total utility of any user is maximized, then the second minimum, etc. Besides being fair, \nameM is also Pareto-efficient by construction: every round, either every user's demand is satisfied or for every user a resource she wants to use is saturated. However, similar to {\nameS}~\cite{DBLP:journals/pomacs/FreemanZCL18}, {\nameM} is not \textit{incentive compatible}, i.e., it is possible that a user can misreport her demand or her ratios on one round to increase her total useful allocation in the future (in fact, our study leads to stronger lower bounds on incentive compatibility of Dynamic MMF; see \cref{thm:single:lower,thm:mult:lower}). Nevertheless, studying {\nameM} is both important and interesting. First, similar to the widely-used classic MMF and DRF (also referred to as static MMF and DRF), {\nameM} is simple and easy to understand; thus, it has the potential for real-world adoption (similar to many other non-incentive compatible mechanisms used in practice, e.g., non-incentive compatible auctions used by U.S. Treasury to sell treasury bills since 1929 and by the U.K to sell electricity~\cite{krishna2009auction, harada2018, parkin2018}). Second, our results show that \nameM is approximately incentive compatible, that is, strategic users can increase their allocation by misreporting their demands but this increase is bounded by a relatively small constant factor, independent of the number of users and the number of resources; moreover, effective misreporting not only requires knowledge of future demands, but can lead to a significant decrease in overall useful allocation, suggesting that misreporting is unlikely to be a useful strategy for any user. 

\medskip
\noindent \textbf{Our Contribution.}
Our goal is to study \nameS and \nameM and the incentive to misreport in them. A popular relaxation of incentive compatibility is \textit{$\g$-incentive compatibility} \cite{DBLP:conf/soda/ArcherPTT03, DBLP:conf/sigecom/KotharPS03, DBLP:conf/soda/DekelFP08, DBLP:conf/sigecom/DuttingFJLLP12, DBLP:conf/sigecom/MennleS14, 10.1093/restud/rdy042, DBLP:conf/ec/BalcanSV19}, which requires that the possible increase in utility by untruthful reporting must be bounded by a factor of $\g \ge 1$ ($\g$ is referred to as \textit{incentive compatibility ratio}). Using this notion we show that users have limited incentive to be untruthful:
\begin{itemize}
    \item We start the technical part of this paper in \cref{sec:single} by considering the simpler problem of single resource environments, \nameS. We show that users have no incentive to over-report their demand (\cref{thm:single:no_over_report}), that \nameS is $3/2$-incentive compatible (\cref{thm:single:upper}), and give an almost matching lower bound of $\sqrt 2$ (\cref{thm:single:lower}).
    We also show that \nameS is envy-free (\cref{cor:mult:envy}) and the variant where every user is guaranteed an $\a$ fraction of her fair share of resources (for $\a\in[0,1]$) satisfies $\a$-sharing incentives while maintaining the incentive compatibility ratio, as well as every one of the aforementioned properties.
    
    \item Our main results are presented in \cref{sec:mult}, where we focus on the setting of multiple resources.
    In the case that every user demands every resource when using the system, we show that in \nameM user $i$ cannot increase her utility more than a factor of $(1+\rho_i)$ (\cref{thm:mult:upper}) and give a matching lower bound (\cref{thm:mult:lower}) where $\rho_i$ is a parameter that quantifies the relative importance of every resource between user $i$ and the other users. We also show that in this case users have no incentive to over-report their demand or misreport their ratios (\cref{thm:mult:no_over_report}); this guarantees that resources allocated to the users are always in use.
    The assumption that every user demands every resource when using the system, even if in different ratios, is quite natural; in computer systems where the resources shared are CPU, memory, storage, etc. the users run applications that use every resource. This assumption is also used by \cite{DBLP:conf/atal/KashPS13}, where they extend DRF to the case of users arriving and leaving sequentially resulting in dynamically changing the allocations in the system.
    
    Additionally, we show that \nameM is envy-free (\cref{thm:mult:envy}), and the variant where every user is guaranteed an $\a$ fraction of her fair share satisfies $\a$-sharing incentives while retaining every one of the aforementioned properties.
    
    \item We also consider some generalizations. First, we consider the case where every user $i$ is associated with a positive weight $w_i$ indicating their priority. This extra assumption does not change our results in \cref{sec:mult}, other than slightly altering the definition of $\rho_i$ affected by the weights $w$---the bound of $(1+\rho_i)$ remains tight for the new $\rho_i$. We also study weighted users in single resource settings in \cref{sec:coalitions}, along with the assumption that users can collude. In this case, \textit{\nameW} is $2$-incentive compatible (\cref{thm:gen:upper}) and demand over-reporting does not increase utility (\cref{thm:gen:no_over_report}). The former of these results strikes a big contrast between single and multiple resource settings: assuming that users do not collude and directly applying our results from \nameM in \nameW proves a $1 + \rho_i = 1 + \max_{k\ne i}(w_i/w_k)$ upper bound for the incentive compatibility of user $i$, which is possibly unbounded; in contrast, we manage to prove a $2$-upper bound.
    
\end{itemize}

\subsection{Related Work}\label{sec:related}

The simplest algorithm for resource allocation is {\em strict partitioning}~\cite{DBLP:conf/sigmod/VerbitskiGSBGMK17, DBLP:conf/nsdi/VuppalapatiMATM20}, that allocates a fixed amount of resources to each user independent of their demands. While incentive compatible, strict partitioning can have arbitrarily bad efficiency. Static MMF and DRF~\cite{DBLP:conf/nsdi/GhodsiZHKSS10, DBLP:conf/osdi/ShueFS12, DBLP:conf/sigcomm/GrandlAKRA14, DBLP:conf/osdi/GrandlKRAK16, DBLP:journals/pomacs/FreemanZCL18, DBLP:conf/sigecom/ParkesPS12} are Pareto-efficient, incentive compatible, envy-free, and satisfy sharing incentives, but fair only when user demands are static.

\citeauthor{DBLP:journals/pomacs/FreemanZCL18} \cite{DBLP:journals/pomacs/FreemanZCL18} prove that \nameS is not incentive compatible under the same utility model as ours.
The papers \cite{DBLP:journals/pomacs/FreemanZCL18, DBLP:conf/atal/Hossain19} study resource allocation for single resource settings with dynamic demands focusing on the case when users have utility for resources above their demand, only at a lower rate. They offer alternate mechanisms where past allocations have some effect on the current ones (unlike static MMF) while maintaining incentive compatibility, but the mechanisms they consider are closer to MMF separately in each epoch, and less aim to be fair overall. Under this model, they present two mechanisms that are incentive compatible but either satisfy sharing incentives and have no Pareto-efficiency guarantees, or approximately satisfy sharing incentives and are approximately Pareto-efficient under strong assumptions (user demands being i.i.d. random variables and number of rounds growing large).

\citewithauthor{DBLP:journals/tc/SadokCC21} present minor improvements in fairness over static DRF for dynamic demands while maintaining incentive compatibility. Their mechanism allocates resources in an incentive compatible way according to DRF while marginally penalizing users with larger past allocations using a parameter $\delta\in [0,1)$. Specifically, if $t$ rounds ago a user received an allocation of $r$, that allocation penalizes the user in the current round by $r(1-\d)\d^t$. This means that the penalty of $(1-\d)\d^t\le 0.25$ reduces exponentially fast with time for any fixed $\delta<1$ and as $\delta\to 1$ the penalty tends to $0$. Thus, for every $\delta$ (and, especially for $\delta=0$ and $\delta \to 1$), their mechanism suffers from similar problems as static DRF: past allocations are barely taken into account.

Several other papers study resource allocation where user demands can be dynamic, but with significantly different setting than ours. \cite{DBLP:conf/pricai/AleksandrovW19, DBLP:conf/sigecom/ZengP20} examine the setting where indivisible items arrive over time and have to be allocated to users whose utilities are random; however \citewithauthor{DBLP:conf/pricai/AleksandrovW19} study a very weak version of incentive compatibility in which a mechanism is incentive compatible if misreporting cannot increase a user's utility in the current round and \citewithauthor{DBLP:conf/sigecom/ZengP20} do not consider strategic users. \citewithauthor{DBLP:journals/corr/abs-2012-08648} study single resource allocation and assume the users do not know their exact demands every round and need to provide feedback to the mechanism after each round of allocation to allow the mechanism to learn.
The goal of the paper is to offer a version of MMF that approximately satisfies incentive compatibility, sharing incentives, and Pareto-efficiency, despite the lack of information, but is not considering the long-term fairness that is the focus of our mechanisms.

Another series of work study the setting where users arrive consecutively and potentially leave after some period of time: \cite{DBLP:conf/ijcai/LiLL18, DBLP:conf/sigecom/FriedmanPV15} focus on single resource settings, while \cite{DBLP:conf/sigecom/FriedmanPV17, DBLP:conf/sigmetrics/ImMMP20, DBLP:conf/atal/KashPS13} also study the allocation of multiple resources.
Even though their setting is dynamic, users have constant demands and cannot re-arrive after leaving, making the user demands static.
After every arrival or departure of a user, resources need to be re-distributed while maintaining some sort of fairness, e.g., the users' utilities need to be approximately similar. \citewithauthor{DBLP:conf/atal/KashPS13} focus on never decreasing a user's allocation when re-distributing resources, thus creating a mechanism that allocates at most $k/n$ fraction of every resource when $k$ out of $n$ users are present in the system. They offer a mechanism for this setting that they also call dynamic DRF, however, in their setting the dynamic nature of the problem comes from churn in users, as well as the corresponding changes in total resources, and not from dynamic demands.
The papers \cite{DBLP:conf/ijcai/LiLL18, DBLP:conf/sigecom/FriedmanPV15, DBLP:conf/sigecom/FriedmanPV17, DBLP:conf/sigmetrics/ImMMP20} consider resource allocation, but incentive compatibility is not taken into account and the focus is to maximize fairness while bounding the ``disruptions'' of the system every time a user enters or leaves, which is how many users' allocations are altered.

There are many reasons why mechanisms used in practice are often not incentive compatible, including the relative simplicity of these mechanisms that makes it easier for users to understand and use them and the fact that incentive compatible mechanisms are not actually incentive compatible depending on the information structure, e.g., when participants collude (see \citewithauthor{DBLP:conf/ec/BalcanSV19} for a nice discussion of a long list of other reasons). In most settings, even if the mechanism is manipulable, finding a profitable manipulation is hard. In our setting, finding such manipulation requires knowledge of all users' future demands, and while under-reporting demand can lead to increased future utility, it may also lead to decreased utility. When using a manipulable mechanism, it is important to understand how large is the incentive to manipulate.
$\g$-incentive compatibility has been considered in many settings.
As we mentioned before, \citewithauthor{DBLP:journals/corr/abs-2012-08648} study a setting similar to ours, but where long-term fairness is not their focus and they try to learn users' demands; to achieve their results they propose several mechanisms most of which are approximate incentive compatible.
\cite{DBLP:conf/soda/ArcherPTT03, DBLP:conf/sigecom/KotharPS03, DBLP:conf/sigecom/DuttingFJLLP12, DBLP:conf/sigecom/MennleS14} study combinatorial auctions that are almost incentive compatible. \citewithauthor{DBLP:conf/soda/DekelFP08} studies approximate incentive compatibility in machine learning, when users are asked to label data. \citewithauthor{10.1093/restud/rdy042} examines approximate incentive compatibility in large markets, where the number of users grows to infinity. \citewithauthor{DBLP:conf/ec/BalcanSV19} develops algorithms that can estimate how incentive compatible mechanisms for buying, matching, and voting are. \citewithauthor{DBLP:conf/ec/HartlineT19} develop auctions that use samples from past non-incentive compatible auctions to improve social welfare or revenue guarantees.
In the same spirit, \nameM is aimed at improving fairness.
\section{Preliminaries}\label{sec:prelims}

We first make some definitions that apply to all sections. We use $[n]$ to denote the set $\{1,2,\ldots,n\}$ for any $n\in\mathbb{N}$. Additionally, we define $x^+ = \max(x,0)$ and denote with $\One{\cd}$ the indicator function.

There are $n$ users, where $n\geq 2$. The set of users is denoted with $[n]$. In some settings, every user $i$ is associated with a weight, $w_i > 0$ which indicates each user's priority and in this case we view the allocation fair, if user $i$ has (approximately) $w_i/w_j$ more utility than user $j$. Additionally, we sometimes assume that user $i$ has initial endowment or \textit{fair share} a $w_i/\sum_{j\in[n]}w_j$ fraction of the total resources. 
The game is divided into \textit{epochs} $1, 2, \ldots, t, \ldots$

A mechanism is called \textit{envy-free} if for any pair of users $i$ and $j$, and if user $i$ is truthful, she would not have gained utility if she had been allocated the resources user $j$ was allocated.
Similarly, we define \textit{weighted envy-freeness}: if every user $i$ is associated with a weight $w_i>0$, a mechanism is envy-free if for any pair of users $i$ and $j$, user $i$ would not have gained utility if she had been allocated the resources user $j$ was allocated, scaled by $w_i/w_j$ (note that this scaling sometimes results in comparing to usage with more resources than what is available).

A mechanism satisfies \textit{sharing incentives} if every truthful user's utility is not less than her utility if she had been allocated her fair share every epoch, i.e., a $\frac{1}{n}$ fraction of the total resources.
For weighted users, a mechanism satisfies sharing incentives if the utility of user $i$ is not less than her utility if she had been allocated her fair share every epoch, i.e., a $\nicefrac{w_i}{\sum_{j\in[n]}w_j}$ fraction of the total resources.
For $\a\in[0,1]$, there is also the notion of \textit{$\a$-sharing incentives}, in which user $i$'s utility must be at least $\a$ times her utility if she had been allocated her fair share every epoch.
\section{Single Resource Setting}\label{sec:single}

Before we present our results on the multiple resource setting in \cref{sec:mult}, we first present results on the simpler setting of dynamic max-min fairness (\nameS) where there is only one resource and users are unweighted ($w_i = w_j$ for all users $i,j$).

\medskip
\noindent\textbf{Notation.}
Every epoch there is a fixed amount of a resource shared amongst the users. We denote the total amount of that resource with $\mathcal{R}$.

We denote with $r_i^t$ the \textit{allocation} of user $i$ in epoch $t$, i.e., the amount of resources the user receives. We also denote with $R_i^t$ the cumulative allocation of user $i$ up to round $t$, i.e., $R_i^t = \sum_{\tau=1}^t r_i^\tau$. By definition, $R_i^0=0$.

Every epoch $t$, each user $i$ has a \textit{demand}, denoted with $d_i^t$. This represents the maximum allocation that is useful for that user in that epoch, i.e., a user is indifferent between getting an allocation equal to her demand and an allocation higher than her demand, so the utility of user $i$ in epoch $t$ is $u_i^t = \min(r_i^t, d_i^t)$. The total utility of user $i$ after epoch $t$ equals the sum of utilities up to that round, i.e., $U_i^t = \sum_{\tau=1}^t u_i^\tau = \sum_{\tau=1}^t \min(r_i^\tau, d_i^\tau)$.

\paragraph{Dynamic Max-min Fairness}
In MMF the resources are allocated such that the minimum amount of resources is maximized, then the second minimum is maximized, etc., as long as every user gains an amount of resources that does not exceed her demand. If for example, we have $\mathcal{R}=1$ total resource and three users with demands $1/4$, $3/8$, and $1$, then the first user gets $1/4$ resources and the other two get $3/8$ resources each.

In \nameS, every epoch the MMF algorithm is applied to the users' cumulative allocations constrained by what they have already been allocated in previous iterations.
Our mechanism will have an additional parameter $\a\in[0,1]$ that guarantees a fraction of the fair share of each user every single epoch, independent of allocations in previous epochs: we guarantee every user at least $\a\mathcal R/n$ resources (assuming they have big enough demand to use it).
Formally, given an epoch $t$ and that every user $i$ has cumulative allocation $R_i^{t-1}$, \nameS solves the following problem
\begin{align}\label{eq:single:algo}
\begin{split}
    \textrm{choose }\;\; & r_1^t,\, r_2^t,\, \ldots,\, r_n^t \\
    \textrm{applying max-min fairness on }\;\; & R_1^{t-1} + r_1^t,\, R_2^{t-1} + r_2^t,\, \ldots,\, R_n^{t-1} + r_n^t \\
    \textrm{given the constraints }\;\; & 
        \sum_{i\in [n]} r_i^t \le \mathcal{R}\;\;\;\textrm{ and } \;\;\;
        \forall i\in [n],\, \min\left\{d_i^t, \a\frac{\mathcal R}{n}\right\} \le r_i^t \le d_i^t
\end{split}
\end{align}
We are going to call the value $\min\left\{d_i^t, \a\frac{\mathcal R}{n}\right\}$ the \textit{guarantee} of user $i$ in epoch $t$, and denote it with $g(d_i^t) = \min\left\{d_i^t, \a\frac{\mathcal R}{n}\right\}$. Note that $g(\cd)$ is a non-decreasing function. Because of the guarantee of every user it is easy to see that \nameS satisfies $\a$-sharing incentives.

\begin{theorem}\label{thm:single:si}
    When every user is guaranteed an $\a$-fraction of their fair share, \nameS satisfies $\a$-sharing incentives.
\end{theorem}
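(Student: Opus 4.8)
The plan is to prove the per-epoch statement that a truthful user's realized utility is at least her guarantee $g(d_i^t)=\min\{d_i^t,\a\mathcal R/n\}$, then to observe that this guarantee always dominates an $\a$ fraction of the per-epoch fair-share utility $\min\{\mathcal R/n,d_i^t\}$, and finally to sum over epochs. Before any of that, I would record that the optimization problem in \cref{eq:single:algo} is feasible, so that the guarantees are actually delivered: since $\sum_{i\in[n]} g(d_i^t)\le\sum_{i\in[n]}\a\mathcal R/n=\a\mathcal R\le\mathcal R$, the lower-bound constraints $r_i^t\ge g(d_i^t)$ are jointly compatible with $\sum_{i}r_i^t\le\mathcal R$, and $g(d_i^t)\le d_i^t$ makes them compatible with the upper bound $r_i^t\le d_i^t$.

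For the per-epoch bound, fix an epoch $t$ and a truthful user $i$. The constraints of \cref{eq:single:algo} give $g(d_i^t)\le r_i^t\le d_i^t$, hence $u_i^t=\min(r_i^t,d_i^t)=r_i^t\ge g(d_i^t)=\min\{d_i^t,\a\mathcal R/n\}$. It then remains to check the elementary inequality $\min\{d_i^t,\a\mathcal R/n\}\ge\a\min\{d_i^t,\mathcal R/n\}$, which I would do by cases on whether $d_i^t\le\a\mathcal R/n$: in that case the left side is $d_i^t$ while the right side is at most $\a d_i^t\le d_i^t$; otherwise the left side is $\a\mathcal R/n$, which is at least $\a\min\{d_i^t,\mathcal R/n\}$ since $\min\{d_i^t,\mathcal R/n\}\le\mathcal R/n$. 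Either way, $u_i^t\ge\a\min\{d_i^t,\mathcal R/n\}$.

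Finally I would sum over $\tau=1,\dots,t$ to get $U_i^t=\sum_{\tau=1}^t u_i^\tau\ge\a\sum_{\tau=1}^t\min\{d_i^\tau,\mathcal R/n\}$, and the right-hand side is exactly $\a$ times the total utility user $i$ would accrue if she were allocated her fair share $\mathcal R/n$ in every epoch; this is precisely the definition of $\a$-sharing incentives.

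There is no serious obstacle here; the proof is essentially a one-line consequence of the guarantee constraint. The only points that need care are (i) confirming feasibility so the mechanism (and hence the guarantee) is well defined, and (ii) getting the inequality $\min\{x,\a c\}\ge\a\min\{x,c\}$ right for $\a\in[0,1]$ — this is the reason the un-truncated-then-scaled guarantee $\a\mathcal R/n$ is enough, and it is the only slightly non-obvious step.
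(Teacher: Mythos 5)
Your proof is correct and follows exactly the route the paper intends: the paper gives no explicit proof, remarking only that the claim follows ``because of the guarantee of every user,'' and your argument simply fills in the details of that remark (feasibility of \eqref{eq:single:algo}, the per-epoch bound $u_i^t = r_i^t \ge g(d_i^t)$, and the inequality $\min\{x,\a c\}\ge \a\min\{x,c\}$ for $\a\in[0,1]$). The last of these is the one genuinely non-obvious step, and you handle it correctly.
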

 
\paragraph{Incentives in \nameS}
It is easy to see \cite{DBLP:conf/sigecom/ParkesPS12} that applying MMF when there is a single epoch is \textit{incentive compatible}, i.e., users can never increase their utility by misreporting their demand. In dynamic settings, however, this is not the case. As was shown by \cite{DBLP:journals/pomacs/FreemanZCL18}, a user can increase her allocation by misreporting. See also the improved lower bound (\cref{thm:single:lower}).

\subsection{Bounds on Incentive Compatibility}

We will now focus in how far \nameS is from incentive compatible. W.l.o.g. we are usually  going to study the possible deviations of user $1$, i.e., how much user $1$ can increase her allocation by lying about her demand. We use the symbols $\hat d_i^t$, $\hat r_i^t$, $\hat R_i^t$, $\hat u_i^t$, $\hat U_i^t$ or $\bar d_i^t$, $\bar r_i^t$, $\bar R_i^t$, $\bar u_i^t$, $\bar U_i^t$ to denote the claimed demand and resulting outcome of some deviation of user $1$. We want to prove that for some $\g\ge 1$, \nameS is always \textit{$\g$-incentive compatible}, i.e., for every users' true demands $\{d_i^t\}_{i\in[n], t\in\N}$, for every deviation of user $1$ $\{\hat d_1^t\}_{t\in\N}$, and for every epoch $t$, to prove that $\hat U_1^t \le \g U_1^t$. $\g$ is often referred to as the \textit{incentive compatibility ratio}.

\paragraph{Lower bound on incentive compatibility}
As mentioned before, \nameS does not guarantee incentive compatibility. This is demonstrated in the following theorem, where user $1$ can misreport her demand to increase her utility by a factor of almost $\sqrt 2$.

\begin{restatable}{theorem}{counterexample}\label{thm:single:lower}
    For any value of $\a\in[0,1]$, there is an instance with $n$ users, where in \nameS a user can under-report her demand to increase her utility by a factor of $\sqrt 2$ as $n\to\infty$.
\end{restatable}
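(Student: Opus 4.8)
The plan is to exhibit, for each $n$, an explicit instance together with an explicit under-reporting strategy for user $1$ and a target epoch $t$, and then to show $\hat U_1^t \ge (\sqrt 2 - o(1))\,U_1^t$; since the definition of $\g$-incentive compatibility quantifies over all $t$, exhibiting one bad epoch suffices, and the matching $3/2$ upper bound tells us $\sqrt 2$ is close to the best one can hope for from such a construction.

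The instance is organized around a \emph{hoarding} idea. In the opening epoch only users $1$ and $2$ have positive demand ($d_1^1, d_2^1$ large, all other $d_i^1 = 0$), so truthfully $R_1^1 = R_2^1 \approx \mathcal R/2$; if user $1$ instead reports $\hat d_1^1 \approx 0$, the whole resource is forced onto user $2$, giving $\hat R_2^1 \approx \mathcal R$ at a one-epoch cost of $\approx \mathcal R/2$ to user $1$. Over the next block of epochs the demand pattern keeps user $1$ active while cycling the remaining users through the active set and keeping user $2$ idle: user $1$'s cumulative allocation ``catches up'' through the water-filling steps, and one verifies that by the end of this block $\hat R_1$ has caught back up to $R_1$ (up to exponentially small terms), while in the deviated run the over-endowed user $2$ still sits far above every other user, whereas truthfully user $1$ now sits weakly above user $2$. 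Finally, user $2$ is made active again in a last block: because user $2$ is the unique maximum in the deviated run but the minimum among active users in the truthful run, max-min fairness routes the last block's resource to user $1$ in the deviated run and to user $2$ in the truthful run, and comparing the two totals at the end of this block produces the gap. If a single hoard--catch up--reactivate pass yields only a constant below $\sqrt 2$, the same cycle is repeated a bounded number of times, each pass multiplying user $1$'s relative advantage, with block lengths tuned so the product of the per-pass gains tends to $\sqrt 2$ as $n\to\infty$.

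Carrying this out requires: (i) writing the demands down explicitly, with every ``idle'' user given demand exactly $0$ so that $g(\cdot)=0$ for them and the $\a$-guarantee is vacuous, and with user $1$ free to report any $\hat d_1^1$ below her truthful allocation (the $\a$-guarantee never helps her, since $g(\hat d_1^1)\le\hat d_1^1$, and $\a\mathcal R/n$ never binds for the active users in the epochs that matter because $\a\mathcal R/n\to 0$ while the relevant allocations are $\Theta(\mathcal R)$); (ii) computing the truthful sequence $R_1^1, R_1^2,\dots$ by induction on epochs, each step being a short water-filling computation once the order of the cumulative allocations is known; (iii) doing the same for the deviated sequence $\hat R_1^1,\hat R_1^2,\dots$; and (iv) evaluating $\hat U_1^t/U_1^t$ at the chosen $t$ and optimizing the free parameters (block lengths, number of passes) in the limit $n\to\infty$.

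I expect steps (ii)--(iv) to be the main obstacle rather than the instance design. One must track, epoch by epoch, which users lie below user $1$ and which lie above so that the correct per-epoch allocation formula is applied, confirm that the idle users never become the binding minimum in a way that perturbs the allocation, and then solve the resulting geometric-type recurrences and optimize to see that the best achievable ratio is exactly $\sqrt 2$ and not some nearby constant.
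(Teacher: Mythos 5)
Your proposal has the right qualitative ingredients, and they are indeed the ones the paper uses: user $1$ under-reports early so that other users become over-endowed, she then recovers her own cumulative allocation essentially for free by competing against fresh users with zero history (so that the truthful run gains nothing in those epochs while the deviated run climbs geometrically back up), and she finally harvests when the over-endowed users are reactivated, since max-min fairness now places her below them. The problem is the quantitative core. With a single hoard/harvest partner (your user $2$), the construction you describe tops out at $4/3$: normalizing the hoard epoch's resource to $1$, user $1$ forgoes $1/2$, both runs return to cumulative $\approx 1/2$ after the catch-up block, and the reactivation epoch can transfer at most the deficit $1-1/2=1/2$ entirely to the deviated user while the truthful user splits it, giving $\hat R_1/R_1 \le (1/2+1/2)/(1/2+1/4)=4/3 < \sqrt2$. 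Your fallback --- repeat the cycle so that ``each pass multiplies user $1$'s relative advantage'' --- cannot work: the incentive ratio does not compound multiplicatively (if it did, a bounded number of passes would exceed the paper's own $3/2$ upper bound from \cref{thm:single:upper}), and if you carry out the arithmetic for a second identical pass you will find the ratio does not improve on $4/3$ at all, because the second hoard costs user $1$ half of a fresh resource while the second harvest again recovers only the (unchanged) deficit of user $2$.

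The paper closes this gap with a genuinely different quantitative structure: a \emph{ladder} of $m$ harvest users $B_1,\dots,B_m$ that are first pre-loaded with a staircase of cumulative allocations $F_1<\cdots<F_m$ (phase 1), against each of whom Alice hoards a different amount $f_i = F_i - F_{i-1}$ (phase 2), followed by the free catch-up against $k$ fresh users $C_i$ (phase 3). In the harvest (phase 4) the deviated Alice, sitting at $\approx F_m$, climbs the rungs $F_m+f_m,\,F_{m-1}+f_{m-1},\dots,F_1+f_1$ one epoch at a time, absorbing the \emph{entire} resource $(F_i+f_i)-(F_{i+1}+f_{i+1})$ at each step because her deficit to $B_i$ exactly matches it, while the truthful Alice receives only $f_m/2$ in the whole phase because each $B_i$ is behind her by exactly that epoch's resource. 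The value $\sqrt2$ is not ``tuned'' by block lengths; it emerges as the limit of the solution to the recursion \eqref{eq:single:app:1} coupling the hoarded increments to the harvest amounts, which forces $F_1+f_1 \to \sqrt2\,(F_m+f_m/2)$ as $m,k\to\infty$. So the missing idea is the multi-user staircase and the recursion that calibrates it; without it, your plan proves a lower bound of $4/3$, not $\sqrt2$.
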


We defer the proof of the theorem to \cref{sec:app:single}. To provide intuition about how a user can increase her utility by misreporting, we include here the example of \cite{DBLP:journals/pomacs/FreemanZCL18}, where user $1$ can increase her utility by a factor of $10/9$.

\begin{example}[\cite{DBLP:journals/pomacs/FreemanZCL18}]\label{ex:counter_10_9}
    There are 3 users, 3 epochs, and $\a = 0$. Every epoch the available amount of the resource is $\mathcal R = 8$ and the real demands of the users are shown in \cref{tab:10_9_ex}, as well as their allocations when user $1$ is truthful and when she misreports.

    \begin{table}[h]
    \centering
    \begin{tabular}{c||ccc|ccc|ccc|}
    Users & \multicolumn{3}{c|}{Epoch $1$} & \multicolumn{3}{c|}{Epoch $2$} & \multicolumn{3}{c|}{Epoch $3$} \\\hline
    User $1$ &
        $8$ & \cblue $4$ & \cred $0$ &
        $8$ & \cblue $2$ & \cred $4$ &
        $8$ & \cblue $3$ & \cred $6$ \\
    User 2 &
        $8$ & \cblue $4$ & \cred $8$ &
        $0$ & \cblue $0$ & \cred $0$ &
        $8$ & \cblue $5$ & \cred $2$ \\
    User 3 &
        $0$ & \cblue $0$ & \cred$0$ &
        $8$ & \cblue $6$ & \cred $4$ &
        $0$ & \cblue $0$ & \cred$0$
    \end{tabular}
    \caption{Every epoch there is a total of $8$ resources. The black numbers are the users' demands, the blue ones are the users' allocations when user $1$ is truthful, and the red ones are the allocations when user $1$ misreports her demand in epoch $1$ by demanding $0$.}
    \label{tab:10_9_ex}
    \end{table}
    
    Because user $1$ under-reports her demand in epoch $1$, in epoch $2$ she manages to ``steal'' some of user $3$'s resources. Then, in epoch $3$ the allocation mechanism equalizes the total allocations of users $1$ and $2$, making user $1$ get back some of the resources she lost in epoch $1$. This results in user $1$ having a total allocation of $10$ instead of $9$, i.e., her utility increases by a factor of $10/9$.
    \qed
\end{example}

Both in \cref{thm:single:lower} and \cref{ex:counter_10_9}, it is important to note that user $1$ can increase her utility only by a bounded constant factor. Additionally, this is done by user $1$ under-reporting her demand, not over-reporting; this is important because it implies that the resources allocated are always used by the users and hence the allocation remains Pareto optimal. As we will show next, both of these facts are true in general.

\paragraph{Upper bound on incentive compatibility ratio}
To prove the above, first we show a lemma offering a simple condition on which pair of users can gain overall allocations from one another. When users' demands are not satisfied, for a user to get more resources someone else needs to get less. The lemma will allow us to reason about how a deviation by user $1$ can lead to a user $i$ getting more resources and another user $j$ getting less (possibly $i=1$ or $j=1$).

\begin{lemma}\label{lem:single:more_less}
    Fix an epoch $t$ and the total allocations up to epoch $t-1$ of any two outcomes $\{\hat R_k^{t-1}\}_{k\in [n]}$ and $\{\bar R_k^{t-1}\}_{k\in [n]}$. Let $i,j$ be two different users. If the following conditions hold
    \begin{itemize}
        \item For $i$,
        $\,\bar r_i^t < \hat r_i^t$ and $\hat d_i^t \le \bar d_i^t$.
        \item For $j$,
        $\,\bar r_j^t > \hat r_j^t$ and $\bar d_j^t \le \hat d_j^t$.
    \end{itemize}
    then, for any value of $\a\in [0,1]$, in \nameS it holds that $\bar R_i^t \geq \bar R_j^t$ and $\hat R_i^t \le\hat R_j^t$, implying
    \begin{equation*}
        \hat R_i^t - \bar R_i^t \le \hat R_j^t - \bar R_j^t
    \end{equation*}
\end{lemma}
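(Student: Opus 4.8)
The plan is to read both inequalities directly off the ``water-filling'' structure of the single-epoch max-min fair solution, after a one-line argument ruling out a degenerate case.

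First I would record the shape of the epoch-$t$ outcome. Writing $L_k := R_k^{t-1}+g(d_k^t)$ and $U_k := R_k^{t-1}+d_k^t$, problem \eqref{eq:single:algo} is exactly: pick $v_k := R_k^{t-1}+r_k^t\in[L_k,U_k]$ with $\sum_k (v_k-R_k^{t-1})\le\mathcal R$ so that $(v_k)_k$, sorted in increasing order, is lexicographically maximal. Since $\sum_k g(d_k^t)\le\a\mathcal R\le\mathcal R$, the point $v_k=L_k$ is feasible, and the standard analysis of max-min fairness with box constraints gives a threshold $\l$ with $v_k=\max\{L_k,\min\{U_k,\l\}\}$ for all $k$: either $\sum_k d_k^t\le\mathcal R$ and every demand is met ($v_k=U_k$ for all $k$), or $\sum_k d_k^t>\mathcal R$ and $\l$ is the value with $\sum_k(v_k-R_k^{t-1})=\mathcal R$. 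From this form I only need two consequences, both immediate from inspecting $\max\{L_k,\min\{U_k,\l\}\}$: (A) if $v_k<U_k$ (user $k$'s demand is unmet) then $v_k\ge\l$; and (B) if $v_k>L_k$ (user $k$ is allocated strictly more than her guarantee) then $v_k\le\l$.

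Now write $\hat L_k,\hat U_k$ and $\bar L_k,\bar U_k$ for these quantities instantiated with the two outcomes' data. A genuine threshold exists in both outcomes under the hypotheses: if all demands were met in the hat-outcome then $\hat r_j^t=\hat d_j^t$, impossible since $\hat r_j^t<\bar r_j^t\le\bar d_j^t\le\hat d_j^t$, and symmetrically for the bar-outcome using $\bar r_i^t<\hat r_i^t\le\hat d_i^t\le\bar d_i^t$. In the hat-outcome with threshold $\hat\l$: since $g$ is non-decreasing and $\hat d_i^t\le\bar d_i^t$, we have $\hat r_i^t>\bar r_i^t\ge g(\bar d_i^t)\ge g(\hat d_i^t)$, hence $\hat R_i^t=\hat R_i^{t-1}+\hat r_i^t>\hat L_i$ and (B) gives $\hat R_i^t\le\hat\l$; also $\hat r_j^t<\bar r_j^t\le\bar d_j^t\le\hat d_j^t$ gives $\hat R_j^t<\hat U_j$, so (A) gives $\hat R_j^t\ge\hat\l$; thus $\hat R_i^t\le\hat R_j^t$. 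The mirror computation in the bar-outcome with threshold $\bar\l$: $\bar r_i^t<\hat r_i^t\le\hat d_i^t\le\bar d_i^t$ leaves $i$'s bar-demand unmet so (A) gives $\bar R_i^t\ge\bar\l$, while $\bar r_j^t>\hat r_j^t\ge g(\hat d_j^t)\ge g(\bar d_j^t)$ puts $j$ strictly above her bar-guarantee so (B) gives $\bar R_j^t\le\bar\l$; thus $\bar R_i^t\ge\bar R_j^t$. Combining, $\hat R_i^t-\bar R_i^t\le\hat R_j^t-\bar R_i^t\le\hat R_j^t-\bar R_j^t$, which is the statement.

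I do not expect a genuine obstacle. The only parts needing care are a clean statement and proof of the threshold characterization derived above (covering both the box constraints and the case where the budget constraint is slack), and correctly tracking how the guarantee lower bounds interact with $\hat d_i^t\le\bar d_i^t$ and $\bar d_j^t\le\hat d_j^t$ when the strict inequalities $\bar r_i^t<\hat r_i^t$ and $\bar r_j^t>\hat r_j^t$ are pushed through the monotone map $g$; everything else is bookkeeping.
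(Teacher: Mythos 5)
Your proof is correct and takes essentially the same route as the paper's: you derive exactly the same four comparisons (e.g.\ $\bar r_i^t < \bar d_i^t$ from $\hat r_i^t \le \hat d_i^t$, and $\bar r_j^t > g(\bar d_j^t)$ via the monotonicity of $g$ and the chain $\bar r_j^t > \hat r_j^t \ge g(\hat d_j^t) \ge g(\bar d_j^t)$), and then conclude from max-min fairness that the ``transferable'' user must sit weakly above the ``receivable'' one in each outcome. The only difference is presentational: the paper closes with an informal exchange argument (``otherwise it would have been more fair to move resources from $j$ to $i$''), whereas you justify the same step by an explicit water-filling threshold characterization, which is a more formal packaging of the identical fact.
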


It should be noted that the second condition for $i$ (similarly for $j$) is not needed when $i$ is not the user who misreports her demand, i.e., $i\neq 1$.

\begin{proof}
    Because of the conditions, we notice that $\bar r_i^t < \bar d_i^t$ (since $\hat r_i^t\le \hat d_i^t$) and $\bar r_j^t > g(\bar d_j^t)$ (since $\bar r_j^t > \hat r_j^t \ge g(\hat d_j^t) \ge g(\bar d_j^t)$). These two inequalities imply that it would have been feasible in \eqref{eq:single:algo} to increase $\bar r_i^t$ by decreasing $\bar r_j^t$. This implies that $\bar R_i^t \geq \bar R_j^t$; otherwise it would have been more fair to give some of the resources user $j$ got to user $i$. 
    With the analogous inverse argument (we can increase $\hat r_j^t$ by decreasing $\hat r_i^t$) we can prove that $\hat R_i^t \le \hat R_j^t$. This completes the proof.
\end{proof}

The main technical tool in our work is the following lemma bounding the total amount all the users can ``win'' because of user $1$ deviating, i.e., $\sum_k(\hat R_k^t - R_k^t)^+$. Rather than bounding the deviating user $1$'s gain directly, it is better to consider the overall increase of all users.
Specifically, the lemma upper bounds the increase of that amount after any epoch, given that user $1$ does not over-report her demand (which as we are going to show later in \cref{thm:single:no_over_report} users have no benefit in doing). The bound on the total over-allocation then follows by summing over the epochs. The bound on the increase of $\sum_k(\hat R_k^t - R_k^t)^+$ after any epoch is different according to two different cases: 
\begin{itemize}
    \item If user $1$ is truthful, the increase is at most $0$: in these steps over-allocation can move between users but cannot increase. This is the reason studying the total over-allocation is so helpful.
    \item If user $1$ under-reports, the increase is bounded by the amount of resources she receives when she is truthful.
\end{itemize}

\begin{lemma}\label{lem:single:aux_bound}
    Fix an epoch $t$ and the total allocations of any two outcomes $\{\hat R_k^{t-1}\}_{k\in [n]}$ and $\{\bar R_k^{t-1}\}_{k\in [n]}$. Assume that $\{\bar d_k^t\}_{k\in [n]}$ are some user demands and that $\{\hat d_k^t\}_{k\in [n]}$ are the same demands except user~$1$'s, who deviates but not by over-reporting, i.e., $\hat d_1^t \le \bar d_1^t$. Then, for any $\a\in[0,1]$, it holds that
    \begin{equation}\label{eq:single:aux}
        \sum_{k\in [n]}\left(\hat R_k^t - \bar R_k^t\right)^+
        -
        \sum_{k\in [n]}\left(\hat R_k^{t-1} - \bar R_k^{t-1}\right)^+
        \le
        \bar r_1^t \One{\hat d_1^t < \bar d_1^t}
    \end{equation}
\end{lemma}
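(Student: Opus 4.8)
The plan is to interpolate between the two outcomes with a third, auxiliary outcome, so that \eqref{eq:single:aux} splits into a ``history part'' and a ``demand part''. Let $\tilde r_k^t$ be the allocation that \eqref{eq:single:algo} produces in epoch $t$ when the cumulative allocations entering the epoch are $\{\bar R_k^{t-1}\}_{k\in[n]}$ (as in the $\bar{}$ outcome) but user~$1$ reports the deviated demand $\hat d_1^t$ while every other user reports $\bar d_k^t$; put $\tilde R_k^t = \bar R_k^{t-1}+\tilde r_k^t$. Applying the elementary bound $(x-z)^+\le (x-y)^++(y-z)^+$ with $x=\hat R_k^t$, $y=\tilde R_k^t$, $z=\bar R_k^t$ and summing over $k$, and recalling $\tilde R_k^{t-1}=\bar R_k^{t-1}$, it suffices to prove
\begin{align}
    \sum_{k\in[n]}\left(\hat R_k^t - \tilde R_k^t\right)^+ &\le \sum_{k\in[n]}\left(\hat R_k^{t-1} - \bar R_k^{t-1}\right)^+ , \label{eq:plan:hist}\\
    \sum_{k\in[n]}\left(\tilde R_k^t - \bar R_k^t\right)^+ &\le \bar r_1^t\,\One{\hat d_1^t < \bar d_1^t} . \label{eq:plan:dem}
\end{align}
Adding \eqref{eq:plan:hist} and \eqref{eq:plan:dem} yields exactly \eqref{eq:single:aux}.

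\textbf{The history part \eqref{eq:plan:hist}.} Here the $\hat{}$ and $\tilde{}$ runs of \eqref{eq:single:algo} in epoch $t$ use the \emph{same} demand vector (user~$1$ reports $\hat d_1^t$ in both, everyone else reports $\bar d_k^t=\hat d_k^t$), differing only in the incoming cumulative allocations $\{\hat R_k^{t-1}\}$ versus $\{\bar R_k^{t-1}\}$. If $\sum_k \hat d_k^t\le\mathcal{R}$ then both runs allocate $r_k^t=\hat d_k^t$ to everyone and \eqref{eq:plan:hist} holds with equality, so assume otherwise; then both runs saturate the resource, $\sum_k\hat r_k^t=\sum_k\tilde r_k^t=\mathcal{R}$. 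Let $G=\{k:\hat r_k^t>\tilde r_k^t\}$ and $L=\{k:\hat r_k^t<\tilde r_k^t\}$; resource conservation gives $\sum_{k\in G}(\hat r_k^t-\tilde r_k^t)=\sum_{k\in L}(\tilde r_k^t-\hat r_k^t)=:M$. Since the two demand vectors coincide, \cref{lem:single:more_less} applies to every pair $i\in G$, $j\in L$ (all demand hypotheses are trivially met) and gives $\hat R_i^t-\tilde R_i^t\le \hat R_j^t-\tilde R_j^t$; hence there is a threshold $c$ with $\hat R_i^t-\tilde R_i^t\le c$ for all $i\in G$ and $\hat R_j^t-\tilde R_j^t\ge c$ for all $j\in L$. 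Writing $a_k=\hat R_k^{t-1}-\bar R_k^{t-1}$ and $b_k=\hat R_k^t-\tilde R_k^t=a_k+(\hat r_k^t-\tilde r_k^t)$, one checks by a short sign bookkeeping that $\sum_k b_k^+-\sum_k a_k^+\le 0$: on $G$, $b_k^+-a_k^+\le \hat r_k^t-\tilde r_k^t$, summing to at most $M$; on $L$, $b_k^+-a_k^+\le 0$, and when $c\ge 0$ in fact $b_k^+-a_k^+=-(\tilde r_k^t-\hat r_k^t)$, summing to exactly $-M$ (when $c<0$, instead every $b_i^+=0$ for $i\in G$, making the left side trivially non-positive); off $G\cup L$ the terms vanish. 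This is \eqref{eq:plan:hist}.

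\textbf{The demand part \eqref{eq:plan:dem}.} If $\hat d_1^t=\bar d_1^t$ the $\tilde{}$ and $\bar{}$ outcomes coincide and \eqref{eq:plan:dem} is trivial, so assume $\hat d_1^t<\bar d_1^t$. Now the $\tilde{}$ and $\bar{}$ runs of \eqref{eq:single:algo} start from the \emph{same} cumulative allocations $\{\bar R_k^{t-1}\}$ and differ only in that user~$1$'s demand, and hence her guarantee $g(\cdot)$ (which is non-decreasing), is smaller in $\tilde{}$. I will invoke the standard water-filling description of max-min fairness: there is a level $\lambda$ such that each user's allocation is the median of her guarantee, her demand, and $\lambda-\bar R_k^{t-1}$, with $\lambda$ chosen so the resource constraint binds (or all demands are met), and consequently $\sum_k r_k^t=\min\{\mathcal{R},\sum_k d_k^t\}$. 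Lowering user~$1$'s demand and guarantee pointwise only lowers her allocation at each fixed level, so the level (weakly) rises passing from $\bar{}$ to $\tilde{}$; therefore $\tilde r_1^t\le\bar r_1^t$, $\tilde r_k^t\ge\bar r_k^t$ for $k\neq 1$, and $\sum_k\tilde r_k^t\le\sum_k\bar r_k^t$. Since $\bar R_k^{t-1}=\tilde R_k^{t-1}$, the quantity $\tilde R_k^t-\bar R_k^t$ equals $\tilde r_k^t-\bar r_k^t$, which is $\le 0$ for $k=1$ and $\ge 0$ for $k\neq 1$; hence
$\sum_k(\tilde R_k^t-\bar R_k^t)^+ = \sum_{k\neq 1}(\tilde r_k^t-\bar r_k^t) = \left(\sum_k\tilde r_k^t-\sum_k\bar r_k^t\right)+(\bar r_1^t-\tilde r_1^t)\le \bar r_1^t$, which is \eqref{eq:plan:dem}.

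\textbf{Main obstacle.} The crux is why one cannot simply compare the $\hat{}$ and $\bar{}$ outcomes directly through \cref{lem:single:more_less}: user~$1$ is simultaneously the deviator \emph{and} may end up a ``loser'' in epoch $t$ (under-reporting can cost her resources that very round), which is precisely the configuration the demand hypothesis of \cref{lem:single:more_less} rules out for the deviator. Routing through the auxiliary outcome $\tilde{}$ disentangles the two effects --- the $\hat{}\leftrightarrow\tilde{}$ comparison has identical demands (so \cref{lem:single:more_less} is available for every pair) and the $\tilde{}\leftrightarrow\bar{}$ comparison has identical histories (so it is pure monotonicity of max-min fairness). The remaining work --- the sign bookkeeping in the history part and pinning down the monotonicity facts, including the $\sum_k d_k^t\le\mathcal{R}$ corner case --- is routine.
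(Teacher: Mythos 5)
Your proof is correct, but it takes a genuinely different route from the paper's. The paper argues by contradiction directly between the $\hat{}$ and $\bar{}$ outcomes: assuming \eqref{eq:single:aux} fails, resource conservation produces a gainer $i\in P^t$ and a loser $j\notin P^t$, and the whole point of the $\bar r_1^t$ slack on the right-hand side is that it lets one take $j\neq 1$ (so that $\bar d_j^t\le\hat d_j^t$ holds and \cref{lem:single:more_less} applies), yielding the contradiction in one shot. You instead give a direct argument by interpolating through the auxiliary outcome $\tilde{}$ and using $(x-z)^+\le(x-y)^++(y-z)^+$, which cleanly separates the ``same demands, different histories'' comparison (where \cref{lem:single:more_less} applies to every gainer/loser pair and your threshold bookkeeping closes the bound at $0$) from the ``same history, lower demand'' comparison (pure monotonicity, contributing the $\bar r_1^t$). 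Your diagnosis of the obstacle is exactly the one the paper's proof has to dodge. What your approach buys is modularity and a non-contradiction argument; what it costs is that the demand part leans on the water-filling characterization of \eqref{eq:single:algo}, which the paper never formalizes. That step is standard and correct, but if you want to stay entirely within the paper's toolkit, note that the monotonicity $\tilde r_k^t\ge\bar r_k^t$ for $k\neq 1$ also follows from \cref{lem:single:more_less}: if some $j\neq 1$ had $\tilde r_j^t<\bar r_j^t$, then either some user gains in $\tilde{}$ (and the lemma forces $\tilde r_i^t-\bar r_i^t\le\tilde r_j^t-\bar r_j^t<0$, a contradiction), or nobody gains, in which case the resource is unsaturated in $\tilde{}$ and $j$ receives her unchanged demand, again a contradiction.
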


We will use \cref{lem:single:more_less} to show that if user $1$ is truthful in epoch $t$, then the l.h.s. of \eqref{eq:single:aux} is at most $0$: \nameS allocates resources such that the large $\hat R_k^t - \bar R_k^t$ are decreased and the small $\hat R_k^t - \bar R_k^t$ are increased. Finally, if user $1$ reports a lower demand, then the (at most) $\bar r_1^t$ resources user $1$ does not get might increase the total over-allocation by the same amount. 

\begin{proof}
    First we define $P^t=\{k\in [n]: \hat R_k^t > \bar R_k^t\}$ for all $t$. Suppose by contradiction:
    \begin{equation*}
        \sum_{k\in P^t}\left(\hat R_k^t - \bar R_k^t\right)
        -
        \sum_{k\in P^{t-1}}\left(\hat R_k^{t-1} - \bar R_k^{t-1}\right)
        >
        \bar r_1^t \One{\hat d_1^t < \bar d_1^t}
    \end{equation*}
    
    Because $\sum_{k\in P^{t}}(\hat R_k^{t-1} - \bar R_k^{t-1})\le\sum_{k\in P^{t-1}}(\hat R_k^{t-1} - \bar R_k^{t-1})$, the above inequality implies
    \begin{equation}\label{eq:single:11}
        \sum_{k\in P^t}\left(\hat r_k^t - \bar r_k^t\right) > \bar r_1^t \One{\hat d_1^t < \bar d_1^t}
    \end{equation}

    Since user $1$ reports a lower demand $\hat d_1^t \le \bar d_1^t$, it holds that $\sum_k \bar r_i^t \ge \sum_k \hat r_i^t$, i.e., the total resources allocated to the users do not increase when user $1$ reports a lower demand. Combining this fact with \eqref{eq:single:11} we get that
    \begin{equation}\label{eq:single:12}
        \sum_{k\notin P^t}\left(\bar r_k^t - \hat r_k^t\right) > \bar r_1^t \One{\hat d_1^t < \bar d_1^t}
    \end{equation}

    We notice that because of \eqref{eq:single:11}, there exists a user $i\in P^t$ for whom $\hat r_i^t > \bar r_i^t$; because of \eqref{eq:single:12}, there exists a user $j\notin P^t$ for whom $\bar r_j^t > \hat r_j^t$. Additionally for that $j$ we can assume that $\hat d_j^t = \bar d_j^t$ because:
    \begin{itemize}
        \item If user $1$ does not deviate then for all $k$, $\hat d_k^t = \bar d_k^t$.
        
        \item If $\hat d_1^t < \bar d_1^t$, then \eqref{eq:single:12} becomes $\sum_{k\notin P^t,\,k\neq 1}(\bar r_k^t - \hat r_k^t) > 0$, implying $j\neq 1$ which entails $\hat d_j^t = \bar d_j^t$.
    \end{itemize}
    
    Thus we have $\hat d_i^t \le \bar d_i^t$ (since no user over-reports), $\hat d_j^t = \bar d_j^t$, $\hat r_i^t > \bar r_i^t$, and $\hat r_j^t < \bar r_j^t$. These, using \cref{lem:single:more_less}, prove that $\hat R_i^t - \bar R_i^t \le \hat R_j^t - \bar R_j^t$. This is a contradiction because $i\in P^t$ and $j\notin P^t$.
\end{proof}

Before proving the upper bound on the incentive compatibility ratio, first we will prove that users have no incentive to over-report their demand. The immediate effect of over-reporting is allocating resources to user $1$ in excess of her demand, which do not contribute to her utility. Intuitively, this suggests that user $1$ is put into a disadvantageous position: other users get less resources which makes them be favored by the allocation algorithm in the future, while user $1$ becomes less favored. However, a small change in the users' resources causes a cascading change in future allocations making the proof of this theorem hard. We will see in \cref{sec:mult} that this is no longer the case with multiple resources when users only use a subset of them. We defer the proof of the theorem to the end of this section.

\begin{theorem}\label{thm:single:no_over_report}
    Users have nothing to gain in \nameS by declaring a demand higher than their actual demand, for any value $\a\in[0,1]$.
\end{theorem}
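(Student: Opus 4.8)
The plan is to reduce an arbitrary over-reporting deviation of user $1$ to the case of a \emph{single} over-reported epoch by a hybrid argument, and then to track how that one-epoch perturbation propagates using the monotonicity of max-min fairness together with \cref{lem:single:aux_bound}. Fix the true demands and a deviation in which user $1$ reports $\hat d_1^t\ge d_1^t$ at every epoch; let $s_1<\dots<s_m$ be the epochs where the inequality is strict. For $0\le j\le m$ let $B_j$ be the scenario in which user $1$ over-reports exactly at $s_1,\dots,s_j$ and is truthful otherwise, so $B_0$ is the truthful run and $B_m$ the full deviation. Writing $U_{1,t}^{B}$ for user $1$'s \emph{true} utility after epoch $t$ in scenario $B$ (raw allocations capped at the true demands $d_1^\tau$), it suffices to prove $U_{1,t}^{B_j}\le U_{1,t}^{B_{j-1}}$ for all $j$ and $t$ and then chain $U_{1,t}^{B_m}\le\dots\le U_{1,t}^{B_0}$.

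Fix $j$ and set $s=s_j$. Since $B_j$ and $B_{j-1}$ feed the mechanism identical reports at every epoch $\ne s$, they have identical cumulative allocations through epoch $s-1$. The first step I would establish is a comparative-statics fact about a single run of \eqref{eq:single:algo}: raising user $1$'s reported demand (all offsets and other demands fixed) weakly increases $r_1^s$, weakly decreases every other $r_k^s$, and leaves user $1$'s \emph{useful} allocation $\min(r_1^s,d_1^s)$ unchanged. I would split on the guarantee. If $d_1^s<\alpha\mathcal R/n$, then $g(d_1^s)=d_1^s$, so a truthful user $1$ is pinned at $r_{1,s}^{B_{j-1}}=d_1^s$ and over-reporting can only inflate the (wasted) raw allocation; others then get weakly less by monotonicity of MMF in the remaining budget. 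Otherwise $g(\cd)$ is already saturated at $\alpha\mathcal R/n$, so only user $1$'s cap moves, and relaxing an inactive cap changes nothing unless her cap was binding, i.e. unless $r_{1,s}^{B_{j-1}}=d_1^s$; either way $\min(r_{1,s}^{B_j},d_1^s)=\min(r_{1,s}^{B_{j-1}},d_1^s)$. Set $\Delta:=r_{1,s}^{B_j}-r_{1,s}^{B_{j-1}}\ge 0$ and $D_k^t:=R_{k,t}^{B_j}-R_{k,t}^{B_{j-1}}$; then $D_1^s=\Delta$ and $D_k^s\le 0$ for $k\ne 1$, so $\sum_k(D_k^s)^+=\Delta$.

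For every $t>s$ the two scenarios feed the mechanism the \emph{same} demands for all users (user $1$ is truthful after $s$ in both runs, and nobody else ever deviates), so \cref{lem:single:aux_bound} applies at epoch $t$ with the over-reporting indicator equal to $0$, giving $\sum_k(D_k^t)^+\le\sum_k(D_k^{t-1})^+$. Iterating from $s$ yields $\sum_k(D_k^t)^+\le\Delta$, hence $D_1^t\le\Delta$, for all $t\ge s$. Finally, for $t\ge s$ one has $U_{1,t}^{B_j}-U_{1,t}^{B_{j-1}}=\sum_{\tau=s+1}^{t}\big(r_{1,\tau}^{B_j}-r_{1,\tau}^{B_{j-1}}\big)$: the epochs $\tau<s$ cancel (identical runs), the epoch $\tau=s$ cancels (useful allocations coincide), and for $\tau>s$ user $1$ is truthful in both runs so her useful allocation equals her raw allocation. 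This telescopes to $D_1^t-D_1^s=D_1^t-\Delta\le 0$; for $t<s$ the difference is $0$. Hence $U_{1,t}^{B_j}\le U_{1,t}^{B_{j-1}}$, and chaining over $j=m,\dots,1$ gives the theorem.

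The main obstacle is the comparative-statics step with the $\alpha$-guarantee present: making precise, in one run of \eqref{eq:single:algo}, that raising user $1$'s report weakly helps her, weakly hurts each other user, and—crucially—does not change her useful allocation, handling both regimes of the guarantee. Once that is in hand, the propagation via \cref{lem:single:aux_bound} and the telescoping are routine. (One could instead argue directly about $B_m$ versus $B_0$, but then several simultaneous perturbations must be tracked; the hybrid isolates a single $\Delta$ per step, which is exactly what \cref{lem:single:aux_bound} is tailored to absorb.)
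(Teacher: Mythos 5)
Your proposal is correct and is essentially the paper's own proof: your single-step hybrid comparison $B_j$ vs.\ $B_{j-1}$ is exactly the paper's \cref{lem:single:over_aux} (remove the last over-report, observe that the extra allocation $x=\Delta$ at the perturbed epoch is wasted and that every other user gets weakly less, then propagate $\sum_k(D_k^t)^+\le\Delta$ forward via \cref{lem:single:aux_bound} with vanishing indicator), and your chaining over $j=m,\dots,1$ is the paper's induction over the last over-reported epoch. The only point to note is that you restrict to deviations with $\hat d_1^t\ge d_1^t$ everywhere, whereas the paper's version (as needed later in \cref{thm:single:upper}) removes over-reports from a deviation that may also under-report at other epochs; your argument extends verbatim to that case, since \cref{lem:single:aux_bound} only needs the two runs to agree after the perturbed epoch and useful allocation still equals raw allocation under under-reporting.
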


Now we show that using this theorem and \cref{lem:single:aux_bound} allows us to bound the incentive to deviate. A bound of $2$ is easy to get by summing \eqref{eq:single:aux} for all $t$ up to some certain epoch and assuming that $\bar d_i^t$ are users' true demands ($\bar d = d$) and that $\hat d_1$ is any under-reporting of user $1$. We give an incentive compatibility bound of $3/2$ by using the same lemma, but arguing that some other user $j$ must also share the same increased allocation of resources as user $1$ using \cref{lem:single:more_less}.

\begin{theorem}\label{thm:single:upper}
    In \nameS for any value of $\a\in [0,1]$, no user can misreport her demand to increase her utility by a factor larger than $3/2$, i.e., for any user $i$ and demand misreporting user $i$ makes, $\hat U_i^t \le \frac 3 2 U_i^t$, for all epochs $t$.
\end{theorem}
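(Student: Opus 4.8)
By symmetry of the mechanism across users it suffices to treat $i=1$. The plan is, first, to use \cref{thm:single:no_over_report} to restrict attention to deviations of user $1$ that never over-report, i.e.\ $\hat d_1^\tau\le d_1^\tau$ for every $\tau$. Under such a deviation user $1$ is never allocated more than her reported (hence her true) demand, so $\hat u_1^\tau=\min(\hat r_1^\tau,d_1^\tau)=\hat r_1^\tau$ and thus $\hat U_1^t=\hat R_1^t$; likewise in the truthful run $u_1^\tau=r_1^\tau$ and $U_1^t=R_1^t$. Hence it is enough to prove $\hat R_1^t\le\tfrac32 R_1^t$ for all $t$. Writing $\delta_k^\tau:=\hat R_k^\tau-R_k^\tau$ and $\Phi^\tau:=\sum_k(\delta_k^\tau)^+$, I would then sum the auxiliary bound \eqref{eq:single:aux} over $\tau=1,\dots,t$, taking the truthful run as the outcome $\bar\cdot$: the left-hand side telescopes to $\Phi^t$ (since $\Phi^0=0$) and the right-hand side is at most $\sum_{\tau\le t}r_1^\tau=R_1^t$, so $\delta_1^t\le\Phi^t\le R_1^t$. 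This already gives the factor-$2$ bound, and the whole remaining task is to tighten the step $\delta_1^t\le\Phi^t$.

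The refinement I would carry out is an induction on $t$; the base case $t=0$ is trivial. Assume $\hat R_1^{t-1}\le\tfrac32 R_1^{t-1}$. If $\delta_1^t\le 0$ there is nothing to prove. If $\delta_1^t>0$ but $\hat r_1^t\le r_1^t$, then $\hat R_1^t=\hat R_1^{t-1}+\hat r_1^t\le\tfrac32 R_1^{t-1}+r_1^t\le\tfrac32(R_1^{t-1}+r_1^t)=\tfrac32 R_1^t$, using the inductive hypothesis and $r_1^t\ge 0$. The only substantive case is $\delta_1^t>0$ and $\hat r_1^t>r_1^t$ (user $1$'s cumulative surplus strictly increases this epoch). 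Since user $1$ does not over-report, the deviated run allocates in epoch $t$ a total of $\min\bigl(\mathcal R,\ \sum_{k\ne 1}d_k^t+\hat d_1^t\bigr)\le\min\bigl(\mathcal R,\ \sum_k d_k^t\bigr)$, the total the truthful run allocates --- here I use that \nameS never leaves the resource idle while some user's reported demand is unmet. Hence $\sum_k(\hat r_k^t-r_k^t)\le 0$, and as $\hat r_1^t-r_1^t>0$ there must be a ``loser'' $j\ne 1$ with $\hat r_j^t<r_j^t$. Now I apply \cref{lem:single:more_less} with this $i=1$ and $j$, the truthful run as $\bar\cdot$ and the deviated run as $\hat\cdot$: the hypotheses hold ($\bar r_1^t<\hat r_1^t$ and $\hat d_1^t\le\bar d_1^t$ for user $1$; $\bar r_j^t>\hat r_j^t$ and $\bar d_j^t=\hat d_j^t$ for $j$, since $j\ne 1$), so it yields $\delta_1^t=\hat R_1^t-R_1^t\le\hat R_j^t-R_j^t=\delta_j^t$.

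Combining the pieces finishes the induction: since $\delta_j^t\ge\delta_1^t>0$, both $\delta_1^t$ and $\delta_j^t$ are positive, so $2\delta_1^t\le\delta_1^t+\delta_j^t\le\sum_k(\delta_k^t)^+=\Phi^t\le R_1^t$, whence $\hat R_1^t=R_1^t+\delta_1^t\le\tfrac32 R_1^t$. I expect the main obstacle to be exactly that last case: the factor $2$ is immediate from summing \eqref{eq:single:aux}, and the improvement to $3/2$ hinges on seeing that user $1$ can push up her surplus in an epoch only by being ``levelled up'' together with another user $j$ whose surplus is at least as large --- which is precisely what \cref{lem:single:more_less} extracts once one has produced a loser $j$ --- so that user $1$ can ever own at most half of the total surplus $\Phi^t$ that \eqref{eq:single:aux} caps by $R_1^t$. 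The remaining points are routine: that MMF is non-wasteful (so under-reporting weakly shrinks the epoch's total allocation and a loser necessarily exists) and the initial reduction to non-over-reporting via \cref{thm:single:no_over_report}.
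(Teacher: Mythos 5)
Your proposal is correct and follows essentially the same route as the paper's proof: reduce to non-over-reporting via \cref{thm:single:no_over_report}, sum \eqref{eq:single:aux} to cap the total surplus $\sum_k(\hat R_k^t-R_k^t)^+$ by $R_1^t$, and use \cref{lem:single:more_less} to exhibit a second user $j$ whose surplus matches user $1$'s, giving the factor $3/2$. The only (cosmetic) difference is that you run an induction on $t$ with a case split on whether $\hat r_1^t>r_1^t$, whereas the paper argues by contradiction at the first epoch where $\hat R_1^t>\tfrac32 R_1^t$, which directly forces $\hat r_1^t>r_1^t$.
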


\begin{proof}
    We will prove the theorem for $i=1$. \cref{thm:single:no_over_report} implies that it is of no loss of generality to assume that user $1$ does not over-report her demand, since any benefit gained by over-reporting can be gained by changing every over-report to a truthful one. This means that instead of $\hat U_1^t \le \frac 3 2 U_1^t$ we can show $\hat R_1^t \le \frac 3 2 R_1^t$. Towards a contradiction, let $t$ be the first epoch when user $1$ gets more than $3/2$ more resources by some deviation of demands, i.e., 
   $\hat R_1^t > \frac 3 2 R_1^t$ and $\hat R_1^{t-1} \le \frac 3 2 R_1^{t-1}$. This implies that $\hat r_1^t > r_1^t$, which in turn entails that there exists a user $j$ for who $\hat r_j^t < r_j^t$, since the total resources allocated when user $1$ is under-reporting cannot be less than those when 1 is truthful. Because $\hat r_1^t > r_1^t$, $\hat r_j^t < r_j^t$, $\hat d_1^t \le d_1^t$, and $\hat d_j^t = d_j^t$, we can use \cref{lem:single:more_less} and get $\hat R_1^t - R_1^t \le \hat R_j^t - R_j^t$. This inequality, $\hat R_1^t - R_1^t \geq 0$, and \cref{lem:single:aux_bound} by summing \eqref{eq:single:aux} for every epoch up to $t$, implies
    \begin{equation*}
        2\left(\hat R_1^t -  R_1^t\right)
        \;\le\;
        \left(\hat R_1^t -  R_1^t\right) + \left(\hat R_j^t -  R_j^t\right)
        \;\le\;
        \sum_{k\in[n]}\left(\hat R_k^t -  R_k^t\right)^+
        \;\le\;
        \sum_{\tau=1}^t r_1^\tau
        \;=\;
        R_1^t
    \end{equation*}

    The above inequality leads to $\hat R_1^t \le \frac 3 2 R_1^t$, a contradiction.
\end{proof}

Finally, we prove \cref{thm:single:no_over_report}, that users have no incentive to over-report. To do that, we use the following lemma, which says that if a user wants to increase her utility in epoch $T$, she has nothing to gain by over-reporting her demand in epoch $T_0$, given that she does not over-report her demand in epochs $T_0+1$ to $T$. 

\begin{lemma}\label{lem:single:over_aux}
    Fix an epoch $T_0$ and the allocations of an outcome $\{\hat R_k^{T_0-1}\}_{k\in [n]}$. Fix another epoch $T\ge T_0$ and assume that in epochs $t = T_0+1,T_0+2,\ldots,T$ user $1$ does not over-report her demand, i.e.,~$\hat d_1^t \le d_i^t$. Then user $1$ cannot increase her utility in round $T$ by over-reporting her demand in epoch $T_0$, for any $\a\in[0,1]$.
\end{lemma}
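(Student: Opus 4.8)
\noindent\textbf{Proof idea for \cref{lem:single:over_aux}.}
The plan is to put the over-reporting run side by side with the run in which user~$1$ is truthful in epoch $T_0$. Call the first the \emph{hat} run (user~$1$ reports $\hat d_1^{T_0} > d_1^{T_0}$ in epoch $T_0$ and reports $\hat d_1^t \le d_1^t$ in epochs $T_0+1,\ldots,T$) and the second the \emph{bar} run, identical to it except that user~$1$ reports $\bar d_1^{T_0} := d_1^{T_0}$ in epoch $T_0$; both runs share the history $\{\hat R_k^{T_0-1}\}_k$, all other users' demands, and user~$1$'s reports in epochs $> T_0$. We must show $\hat U_1^T \le \bar U_1^T$. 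The two runs agree through epoch $T_0-1$; user~$1$ never over-reports in epochs $T_0+1,\ldots,T$ in either run, so every resource she gets there is useful ($\hat u_1^t=\hat r_1^t$ and $\bar u_1^t=\bar r_1^t$); and $\bar u_1^{T_0}=\min(\bar r_1^{T_0},d_1^{T_0})=\bar r_1^{T_0}$. Expanding $\hat U_1^T-\bar U_1^T$ with these facts and using $\hat R_1^{T_0}-\bar R_1^{T_0}=\hat r_1^{T_0}-\bar r_1^{T_0}$ collapses the claim to the single inequality
\[
  \hat R_1^T - \bar R_1^T \;\le\; (\hat r_1^{T_0} - d_1^{T_0})^{+},
\]
i.e.\ any cumulative-allocation lead that over-reporting still gives user~$1$ at time $T$ is no larger than the part of her epoch-$T_0$ allocation that exceeded her true demand and was therefore wasted.

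The first step is a single-epoch analysis of \eqref{eq:single:algo} at epoch $T_0$ under the two reports $\hat d_1^{T_0} > \bar d_1^{T_0}$, establishing three facts. (i) Every other user weakly loses, $\hat r_k^{T_0}\le\bar r_k^{T_0}$ for $k\ne 1$: if $\sum_k \bar d_k^{T_0}\le\mathcal R$ then the bar run already gives everyone their full demand and no $k\ne1$ can do better in the hat run; otherwise both runs allocate all of $\mathcal R$, so a strict gain of some $k\ne1$ would be paired with a strict loss of some $j$ (possibly $j=1$), and \cref{lem:single:more_less} applied to the pair $i=k$, $j$ would give $\hat r_k^{T_0}-\bar r_k^{T_0}\le\hat r_j^{T_0}-\bar r_j^{T_0}$, contradicting that the left side is positive and the right side negative. (ii) User~$1$ weakly gains, $\hat r_1^{T_0}\ge\bar r_1^{T_0}$: the total allocated in epoch $T_0$ is $\min(\mathcal R,\sum_k d_k^{T_0})$ in terms of the reported demands, hence weakly larger in the hat run, and by (i) the others absorb weakly less of it. (iii) User~$1$'s \emph{useful} allocation in epoch $T_0$ is unchanged, $\bar r_1^{T_0}=\min(\hat r_1^{T_0},d_1^{T_0})$: this is the standard demand-truncation behaviour of max-min fairness --- lowering user~$1$'s cap from $\hat d_1^{T_0}$ to $d_1^{T_0}$ caps her at $\min(\text{old},d_1^{T_0})$ and redistributes the freed resources --- which follows from the water-filling characterization of the $g$-constrained allocation, the only care being a short case split on whether $d_1^{T_0}$ lies above or below $\a\mathcal R/n$ so the guarantee $g(\cd)$ does not interfere. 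Writing $\Delta:=\hat R_1^{T_0}-\bar R_1^{T_0}=\hat r_1^{T_0}-\bar r_1^{T_0}$, facts (i)--(ii) give $\sum_{k\in[n]}(\hat R_k^{T_0}-\bar R_k^{T_0})^{+}=\Delta$, and facts (ii)--(iii) give $\Delta=(\hat r_1^{T_0}-d_1^{T_0})^{+}$.

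The second step propagates forward over epochs $T_0+1,\ldots,T$. In each such epoch user~$1$ reports the \emph{same} demand in both runs, so the indicator $\One{\hat d_1^t < \bar d_1^t}$ in \cref{lem:single:aux_bound} vanishes; applying that lemma epoch by epoch and telescoping yields
\[
  \sum_{k\in[n]} (\hat R_k^T - \bar R_k^T)^{+} \;\le\; \sum_{k\in[n]} (\hat R_k^{T_0} - \bar R_k^{T_0})^{+} \;=\; \Delta \;=\; (\hat r_1^{T_0} - d_1^{T_0})^{+}.
\]
In particular $\hat R_1^T - \bar R_1^T \le (\hat R_1^T - \bar R_1^T)^{+} \le (\hat r_1^{T_0} - d_1^{T_0})^{+}$, which is exactly the inequality the first paragraph reduced the lemma to; \cref{thm:single:no_over_report} then follows by peeling off over-reports one epoch at a time.

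The main obstacle is the first step: pinning down the per-epoch structural facts (i)--(iii) about the $g$-constrained max-min allocation, and in particular the truncation identity (iii), where the guarantee $g(\cd)$ needs a little care. By contrast the forward propagation in the second step is essentially free --- it is precisely the situation \cref{lem:single:aux_bound} was designed for, with the deviation indicator switched off because the two runs being compared issue identical reports after epoch $T_0$; this is the payoff of having phrased the invariant in terms of the total over-allocation $\sum_k (\hat R_k^t - \bar R_k^t)^{+}$ rather than user~$1$'s gain alone.
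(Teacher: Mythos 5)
Your proposal is correct and follows essentially the same route as the paper: compare the over-reporting run to the run where the epoch-$T_0$ report is replaced by the truth, establish that in epoch $T_0$ user $1$'s useful allocation is unchanged while every other user weakly loses, and then telescope \cref{lem:single:aux_bound} over epochs $T_0+1,\ldots,T$ (with the indicator switched off since the two runs report identically there) to bound $\hat R_1^T - \bar R_1^T$ by the excess $\Delta$. The only difference is that you spell out the single-epoch facts (i)--(iii) that the paper asserts without proof, which is a welcome addition rather than a deviation.
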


By over-reporting her demand in $T_0$, user $1$ (potentially) gains some resources that do not contribute to her utility, while others get less resources. This puts user $1$ in a disadvantage entailing that in the epochs after $T_0$ she cannot increase her total allocation further (even though it is possible that her allocation in a single epoch can increase). Since she doesn't over-report her demand in epochs after $T_0$ her allocation is the same as her utility, hence her total utility also does not increase.

\begin{proof}
    To prove the lemma we are going to create another outcome in which user $1$ changes her over-report in epoch $T_0$ to a truthful one which does not decrease her total utility in epoch $T$.
    
    For all users $k$ and epochs $t$, let $\bar d_k^t = \hat d_k^t$, except for $\bar d_1^{T_0}$, which is user $1$'s actual demand: $\bar d_1^{T_0} = d_1^{T_0}$. This means that the two outcomes are the same before epoch $T_0$, i.e., for all users $k$, $\bar R_k^{T_0-1} = \hat R_k^{T_0-1}$ and $\bar U_k^{T_0-1} = \hat U_k^{T_0-1}$. Because $\hat d_1^{T_0} > \bar d_1^{T_0}$ and for $k\neq 1$, $\hat d_k^{T_0} = \bar d_k^{T_0}$, user $1$ may earn some additional resources on $T_0$, i.e., $\hat r_1^{T_0} - \bar r_i^{T_0} = \hat R_1^{T_0} - \bar R_i^{T_0} = x$, for some $x\ge 0$, while other users $k\neq 1$ get less resources: $\hat r_k^{T_0} - \bar r_k^{T_0} = \hat R_k^{T_0} - \bar R_k^{T_0} \le 0$. We first note that the $x$ additional resources that user $1$ gets are in excess of $1$'s true demand, meaning they do not contribute towards her utility:
    \begin{equation}\label{eq:single:1}
        \hat U_1^{T_0} - \bar U_1^{T_0} = \hat R_1^{T_0} - \bar R_1^{T_0} - x = 0
    \end{equation}
    
    Additionally, because user $1$ does not over-report $\hat d_1^t$ or $\bar d_1^t$ in epochs $t\in[T_0+1,T]$, it holds that for $t\in [T_0+1, T]$ user $1$'s utility is the same as the resources she receives: $\bar u_1^t = \bar r_1^t$ and $\hat u_1^t = \hat r_1^t$.  This fact, combined with \eqref{eq:single:1} proves that
    \begin{equation*}
        \forall t\in[T_0,T],\;
        \hat U_1^t - \bar U_1^t = \hat R_1^t - \bar R_1^t - x
    \end{equation*}
    
    Thus, in order for this over-reporting to be a strictly better strategy, it must hold that $\hat R_1^T - \bar R_1^T > x$. We will complete the proof by proving that the opposite holds. Since in epochs $t\in[T_0+1, T]$ it holds that $\hat d_1^t = \bar d_1^t$, we can use \cref{lem:single:aux_bound} to sum \eqref{eq:single:aux} for all $t\in [T_0+1, T]$ and get that
    \begin{equation*}
        \sum_k\left(\hat R_k^T - \bar R_k^T\right)^+
        -
        \sum_k\left(\hat R_k^{T_0} - \bar R_k^{T_0}\right)^+
        \le 0
    \end{equation*}
    
    The above inequality, because $(\hat R_k^T - \bar R_k^T)^+\ge 0$, $\hat R_k^{T_0} - \bar R_k^{T_0} \le 0$ for $k\neq 1$, and $\hat R_1^{T_0} - \bar R_1^{T_0} = x \ge 0$, proves that $\hat R_1^T - \bar R_1^T \le x$. This completes the proof.
\end{proof}


\begin{proof}[Proof of \cref{thm:single:no_over_report}]
    Fix an epoch $T$ and let $T_0\le T$ be the last epoch where user $1$ over-reported. \cref{lem:single:over_aux} allows us to change user $1$'s over-report in $T_0$ to a truthful one without decreasing her total utility in $T$. Doing this inductively for every such $T_0$ creates a demand profile with no over-reporting that does not decrease user $1$'s total utility in $T$.
\end{proof}

\noindent
\textbf{Envy-freeness in \nameS.}
We finally note that \nameS is envy-free, for any value of the parameter $\a$. This is a corollary of the fact that \nameM, the generalization of \nameS for multiple resources, is envy-free (see details in \cref{thm:mult:envy,cor:mult:envy}).
\section{Multiple Resources Setting}\label{sec:mult}

In this section we turn to the main focus of our paper, analyzing the generalization of \nameS for multiple resources, Dynamic Dominant Resource Fairness (\nameM). 

\medskip
\noindent\textbf{Notation and User Utilities.}
We consider users that have varying demand for a set of $m\ge 1$ different resources over time. We use $1,2,\ldots,q,\ldots,m$ to denote the $m$ resources, and w.l.o.g., we assume that for every resource the amount available is the same, $\mathcal{R}$. A typical example of such a system may focus on users running applications that use resources such as CPU, memory, storage, etc.

Every epoch, each user demands a non-negative amount of every resource which they report to the mechanism. With the multidimensional nature of demand, users have more complex ways to misreport their demand. Throughout this paper we will assume that users have \textit{Leontief preferences}, which we define next. Leontief preferences have been considered by much of the previous work in resource sharing, including \cite{DBLP:conf/nsdi/GhodsiZHKSS10, DBLP:conf/sigecom/ParkesPS12, DBLP:conf/atal/KashPS13}.

Formally, with Leontief preferences a user $i$'s demand in an epoch $t$ is characterized by the \textit{ratios} $a_i^t = (a_{i1}^t,\ldots,a_{im}^t)$ she needs for the resources and a \textit{demand} $d_i^t$. The ratios indicate that for some $\xi\ge 0$, user $i$'s application in that epoch is going to use $\xi a_{iq}^t$ amount of every resource $q\in [m]$. The resource $q$ with the maximum ratio $a_{iq}^t$ is called the \textit{dominant resource} of user $i$ in epoch $t$. We assume that ratios are normalized: for every epoch $t$, $\max_q a^t_{iq}=1$ for all users. The demand $d_i^t$ of user $i$ in epoch $t$ represents the maximum fraction $\xi$ (possibly, $\xi>1$) of the ratios the user can take advantage of. More specifically, user $i$ demands (or is asking for) $d_i^t a_{iq}^t$ amount of every resource $q$.
If a user $i$ receives $x_{i1}^t,\ldots,x_{im}^t$ of every resource, respectively, her utility that epoch is $u_i^t = \min\big\{d_i^t, \min_{q: a_{iq}>0} \{x^t_{iq}/a^t_{iq}\}\big\}$.
In each epoch, users will be asked to report both their ratios for the epoch as well as their demand. Note that users can now misreport their type in two different ways: they can either request less or more from all the resources or they can demand the resources in different proportions (or they can do both). 

\paragraph{Dynamic Dominant Resource Fairness}
DRF is the generalization of MMF for the case of multiple resources, where the fairness criterion is applied to each user's dominant resource and the rest of the resources are distributed according to the users' ratios. If $r_i^t$ is the amount that user $i$ receives of her dominant resource in epoch $t$, then she receives $r_i^t a_{iq}^t$ of every resource $q$ (recall that $\max_q a_{iq}^t = 1$). We call $r_i^t$ the \textit{allocation} of user $i$ in epoch $t$.

\nameM extends to dynamic user demands the weighted DRF policy, the generalization of DRF where the fairness criterion is applied to the users' allocation normalized by some weights, indicating users' priorities. More specifically, if every user $i$ is associated with a weight $w_i>0$, then the mechanism tries to give to each user $i$ an allocation proportional to her weight $w_i$.

Similar to \nameS, \nameM has an additional parameter $\a\in[0,1]$ that guarantees a fraction of fair share of each resource to each user in every epoch, independent of allocations in previous ones. User $i$'s fair share of a resource $q$ is $\mathcal R w_i/(\sum_j w_j)$. When $\alpha=1$, we guarantee at least the fair share of their dominant resource (assuming a big enough demand to use it), when $\alpha<1$ we guarantee a smaller share. Beyond this guarantee, the goal of the mechanism in epoch $t$ is to be as fair as possible to the cumulative allocation of every user, normalized by their weights. We use $R_i^t =\sum_{\tau=1}^t r_i^\tau$ for the sum of allocations till time $t$. Using this notation, \nameM is easy to describe. For a given epoch $t$, assuming that every user $i$ has cumulative allocation $R_i^{t-1}$:
\begin{align}\label{eq:mult:algo}
\begin{split}
    \textrm{choose }\;\;
        & r_1^t,\, r_2^t,\, \ldots,\, r_n^t \\
    \textrm{applying max-min fairness on }\;\;
        & \frac{R_1^{t-1} + r_1^t}{w_1},\, \frac{R_2^{t-1} + r_2^t}{w_2},\, \ldots,\, \frac{R_n^{t-1} + r_n^t}{w_n} \\
    \textrm{given the constraints }\;\;
        & \forall i\in [n],\, \min\left\{d_i^t, \a \frac{\mathcal R w_i}{\sum_{k\in[n]} w_k}\right\}\leq r_i^t \leq d_i^t,\\
        & \forall q\in [m], \sum_{i\in [n]} r_i^t\, a_{iq}^t \leq \mathcal{R}
\end{split}
\end{align}
We define $g_i(d_i^t) = \min\left\{d_i^t, \a \frac{\mathcal R w_i}{\sum_k w_k}\right\}$ to be the guaranteed amount that every user receives every epoch. Note that $g_i(\cd)$ is a non-decreasing function.

We note a few properties that immediately follow from the description. If all users share the same dominant resource $q^t$ in each epoch and have equal weights, then the mechanism will become identical to \nameS, as at each epoch the allocation of the shared dominant resource is the bottleneck for all users.

Second, if all the users share their dominant resource $q^t$, $\alpha=1$, and demands are high, then the minimum guarantee becomes $g(d_i^t) = \mathcal R w_i / \sum_k w_k$ which will become user $i$'s allocation, as this saturates resource $q^t$ (the guaranteed total use of $q^t$ is $\sum_i g(d_i^t)a_{iq^t}^t =\sum_i g(d_i^t) = \mathcal R$). However, even with large demands each iteration and $\alpha=1$, the dynamic fair sharing nature of our allocation will play an important role when applications (users) do not always share their dominant resource.

Third, because of the guarantee of every user, \nameM satisfies $\a$-sharing incentives.

\begin{theorem}\label{thm:mult:si}
    When every user is guaranteed an $\a$-fraction of their fair share, \nameM satisfies $\a$-sharing incentives.
\end{theorem}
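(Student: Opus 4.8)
The plan is to reduce the claim to a per-epoch inequality and then sum over epochs. Fix a truthful user $i$ and any horizon $T$; write $f_i = \mathcal R w_i / \sum_{k\in[n]} w_k$ for user $i$'s fair share of a single resource, and let $v_i^t$ denote the utility user $i$ would obtain in epoch $t$ if she were simply handed $f_i$ units of every resource in that epoch. Since total utility is additive across epochs, it suffices to show $u_i^t \ge \a\, v_i^t$ for every $t$: summing over $t = 1,\dots,T$ then gives $U_i^T \ge \a \sum_{t\le T} v_i^t$, which is exactly the statement of $\a$-sharing incentives.

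First I would compute $v_i^t$. If user $i$ receives $f_i$ of every resource $q$, her Leontief utility is $\min\bigl\{d_i^t,\ \min_{q:\, a_{iq}^t>0} f_i/a_{iq}^t\bigr\}$; since the ratios are normalized so that $\max_q a_{iq}^t = 1$, we have $f_i/a_{iq}^t \ge f_i$ for all such $q$, with equality at the dominant resource, so the inner minimum is exactly $f_i$ and hence $v_i^t = \min\{d_i^t, f_i\}$.

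Next I would lower-bound $u_i^t$ using only the lower-bound constraint of \eqref{eq:mult:algo}. Because user $i$ is truthful she reports $d_i^t$, and every feasible solution of \eqref{eq:mult:algo} satisfies $r_i^t \ge g_i(d_i^t) = \min\{d_i^t, \a f_i\}$; receiving $r_i^t$ of the dominant resource (hence $r_i^t a_{iq}^t$ of each resource $q$) yields utility $u_i^t = \min\{d_i^t, r_i^t\}$, since $x_{iq}^t/a_{iq}^t = r_i^t$ for every $q$ with $a_{iq}^t>0$. Therefore $u_i^t \ge \min\{d_i^t, g_i(d_i^t)\} = g_i(d_i^t)$, and it remains to check $g_i(d_i^t) \ge \a\, v_i^t = \a \min\{d_i^t, f_i\}$ by a short case split: if $d_i^t \le \a f_i$ then $g_i(d_i^t) = d_i^t \ge \a d_i^t = \a v_i^t$; and if $d_i^t > \a f_i$ then $g_i(d_i^t) = \a f_i \ge \a \min\{d_i^t, f_i\} = \a v_i^t$. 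This closes the argument.

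There is no substantial obstacle here; the only points requiring care are (i) the normalization step above, which is what makes the fair-share utility equal to $\min\{d_i^t, f_i\}$ rather than something smaller, and (ii) confirming that the lower-bound constraints in \eqref{eq:mult:algo} are simultaneously feasible, so that $r_i^t \ge g_i(d_i^t)$ is actually forced — this holds because for each resource $q$, $\sum_{i\in[n]} g_i(d_i^t)\, a_{iq}^t \le \sum_{i\in[n]} \a f_i = \a \mathcal R \le \mathcal R$, using $a_{iq}^t \le 1$ and $\a \le 1$. Everything else is arithmetic.
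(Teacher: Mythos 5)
Your proof is correct and follows exactly the route the paper intends: the paper states this theorem without a written proof, asserting only that it is ``because of the guarantee of every user,'' and your argument is the straightforward formalization of that claim via the per-epoch bound $u_i^t = r_i^t \ge g_i(d_i^t) \ge \a \min\{d_i^t, f_i\}$ summed over epochs. Your two points of care --- the normalization $\max_q a_{iq}^t = 1$ making the fair-share utility equal $\min\{d_i^t, f_i\}$, and the feasibility check $\sum_i g_i(d_i^t) a_{iq}^t \le \a\mathcal R \le \mathcal R$ --- are exactly the details the paper leaves implicit.
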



\paragraph{User's utility when misreporting ratios}
In defining the \nameM mechanism, we have not considered the difference of truthful reporting and misreporting one's demands or ratios to the mechanism. The main topic of this section is understanding how a user's utility behaves in these two scenarios for \nameM.

When user $i$ truthfully reports her demand $d_i^t$ and ratios $a_i^t$, and \nameM gives her an allocation of $r_i^t$, her utility in in that epoch is $u_i^t = \min\{d_i^t, \min_{q:a_{iq}^t > 0} \nicefrac{r_i^t a_{iq}^t}{a_{iq}^t}\} = r_i^t$, since \nameM guarantees that $r_i^t \le d_i^t$.

When user $i$ misreports $\hat a_i^t$ and $\hat d_i^t$, and \nameM gives her an allocation $\hat r_i^t$ based on the reported values, let $\hat u_i^t$ denote the user's true utility in epoch $t$ under that reporting. In this case she receives $x_{iq}^t = \hat a_{iq}^t \hat r_i^t$ of every resource $q$ and thus she gets true utility 
\begin{equation*}
    \hat u_i^t = \min \left\{d_i^t,\min_{q: a_{iq^t}>0} \frac{x^t_{iq}}{a^t_{iq}}\right\}
    = \min\left\{
        d_i^t,\,
        \hat r_i^t\min_{q: a_{iq^t}>0} \frac{\hat a_{iq}^t}{a_{iq}^t}
    \right\}
\end{equation*}
We define $\hat\l_i^t = \min_{q: a_{iq^t}>0} \{\hat a_{iq}^t/a_{iq}^t\}$ making the above expression equal to $\hat u_i^t = \min\{d_i^t,\,\hat r_i^t\,\hat\l_i^t\}$. 
We should note that if the user reports ratios truthfully ($\hat a_i^t = a_i^t$), then $\hat\l_i^t = 1$. Additionally, because each user is constrained to declare $\max_q \hat a_{iq}^t = \max_q a_{iq}^t = 1$, it must hold that $\hat\l_i^t \le 1$ for any $\hat a_i^t$ .

Similar to \cref{sec:single}, the total utility of user $i$ by epoch $t$ is $U_i^t = \sum_{\tau=1}^t u_i^\tau$.

\subsection{Incentives assuming positive ratios for all resources}
\label{sec:mult:nonzero}

Our main results in this paper consider \nameM resource allocation with multiple resources under the assumption that when users want to use the system, they demand at least some of each of the resources, that is $a_{iq}^t>0$ for all $i,q,t$. With typical system resources, such as CPU, memory, RAM, etc., it is indeed the likely scenario. While different applications have different dominant resources (e.g., some have heavy use of compute power, while in others the main bottleneck is bandwidth), each uses at least some of each one of these basic resources.

As already observed by \citewithauthor{DBLP:conf/sigecom/ParkesPS12}, having zero ratios for some resources significantly changes the problem from having a tiny $\epsilon>0$ ratio, no matter how small $\epsilon$ is. In Appendix \ref{sec:app:mult}we show that with zero ratios users do have an incentive to over-report their demand, which will turn out to not be useful with positive ratios. Further, the benefit of such over-reporting can increase the user's utility by a factor of $\Omega(m)$, increasing with the number of resources in the system.

In contrast, the main results of this section are that, assuming users have positive ratios, misreporting them, as well as over-reporting demand, is not beneficial (\cref{thm:mult:no_over_report}). Further, the approximate incentive compatibility ratio for user $1$ is bounded by $1+\rho_1$, where $\rho_1 = \max_{k\ne 1,q,t}\big\{\nicefrac{w_1 a_{1q}^t}{w_k a_{kq}^t}\big\}$ (\cref{thm:mult:upper}), and this bound is tight (\cref{thm:mult:lower}) even with just two resources and static ratios, extending the results of \cref{sec:single} to multiple resources.

\paragraph{Upper bound on incentive compatibility ratio}
We start by pointing out why positive ratios are so different from zero ones. With positive ratios, users are bottlenecked by the same resource being saturated. In contrast, when ratios are zero, different users are bottlenecked by different resources, resulting in users' allocations being almost independent from one another (which is the case if users do not share resources for which they both have positive ratios). The difficulty in adapting the results of Section \ref{sec:single} to the case of multiple resources comes from the fact that different users use the resources in different proportions. For example, consider the users' allocations being bottlenecked by a single saturated resource $q$ for which user $1$ has a ratio of $1$ and user $i$ has a ratio of $\e < 1$. User $1$ can free a $\d$ amount of resource $q$ by under-reporting her demand so that her allocation gets smaller by $\d$, for a small $\d>0$. If this available amount of resources goes to user $i$, it has the potential to raise her allocation by $\d/\e$ (recall that user $i$ has a ratio of $\e$ for resource $q$), which is disproportionally bigger than the $\d$ utility that user $1$ lost.

The assumption that all users are using each resource allows us to prove a lemma analogous to \cref{lem:single:more_less}, as with this assumption, we can increase any user's allocation by decreasing another user's allocation, allowing the proof to be identical to the proof of \cref{lem:single:more_less}. In contrast, if $a_{iq}^t = 0$ and $a_{jq}^t > 0$, then decreasing the allocation of user $i$ does not free any amount of resource $q$ for user $j$. Proving this lemma will lead to results similar to \cref{thm:single:no_over_report,thm:single:upper}.

\begin{restatable}{lemma}{multMoreLess}\label{lem:mult:more_less}
    Fix an epoch $t$ and the total allocations up to epoch $t-1$ of any two outcomes $\{\hat R_k^{t-1}\}_{k\in [n]}$ and $\{\bar R_k^{t-1}\}_{k\in [n]}$. Let $i,j$ be two different users. If the following conditions hold
    \begin{itemize}
        \item For $i$,
        $\,\bar r_i^t < \hat r_i^t$, $\,\hat d_i^t \le \bar d_i^t$, and $\hat a_{iq}^t > 0$ for all $q$.
        \item For $j$,
        $\,\bar r_j^t > \hat r_j^t$, $\,\bar d_j^t \le \hat d_j^t$, and $\bar a_{jq}^t > 0$ for all $q$.
    \end{itemize}
    then, for any $\a\in[0,1]$ used by \nameM, it holds that $\bar R_i^t/w_i \geq \bar R_j^t/w_j$ and $\hat R_i^t/w_i \le\hat R_j^t/w_j$, implying
    \begin{equation*}
        \frac{\hat R_i^t - \bar R_i^t}{w_i} \le \frac{\hat R_j^t - \bar R_j^t}{w_j}
    \end{equation*}
\end{restatable}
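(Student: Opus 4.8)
The plan is to mimic the proof of \cref{lem:single:more_less} almost verbatim, with the single-resource feasibility argument replaced by a multi-resource one that is made available precisely by the assumption that the relevant users have strictly positive ratios for every resource. The core of the argument is to establish the two inequalities $\bar R_i^t/w_i \ge \bar R_j^t/w_j$ and $\hat R_i^t/w_i \le \hat R_j^t/w_j$; the displayed conclusion then follows by subtracting, since $\hat R_i^t/w_i - \bar R_i^t/w_i \le \bar R_j^t/w_j - \bar R_j^t/w_j$ combined the other way. (Rearranging: from $\hat R_i^t/w_i \le \hat R_j^t/w_j$ and $\bar R_j^t/w_j \le \bar R_i^t/w_i$ we get $(\hat R_i^t - \bar R_i^t)/w_i \le (\hat R_j^t - \bar R_j^t)/w_j$.) So everything reduces to the two normalized-allocation comparisons.

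First I would record the consequences of the hypotheses exactly as in the single-resource proof. From $\bar r_i^t < \hat r_i^t \le \hat d_i^t \le \bar d_i^t$ we get $\bar r_i^t < \bar d_i^t$, so in the $\bar{}$-outcome user $i$ is not demand-constrained and her allocation could be increased. From $\bar r_j^t > \hat r_j^t \ge g_j(\hat d_j^t) \ge g_j(\bar d_j^t)$ (using $\bar d_j^t \le \hat d_j^t$ and monotonicity of $g_j$) we get $\bar r_j^t > g_j(\bar d_j^t)$, so in the $\bar{}$-outcome user $j$ is above her guarantee and her allocation could be decreased. The key step — and the place where positivity of ratios is used — is to argue that this swap is \emph{feasible} in \eqref{eq:mult:algo}: decreasing $r_j^t$ by a tiny $\varepsilon$ frees $\varepsilon\, \bar a_{jq}^t > 0$ of every resource $q$ (here we need $\bar a_{jq}^t > 0$ for all $q$), and since $\hat a_{iq}^t > 0$ for all $q$ is bounded, user $i$'s allocation can be raised by a correspondingly small positive amount without violating any of the $m$ resource constraints or her demand upper bound. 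Hence it would have been feasible to shift a small amount of the normalized objective from $j$ to $i$ in the $\bar{}$-instance. Because \nameM applies max-min fairness to the normalized cumulative allocations $R_k^t/w_k$, optimality forces $\bar R_i^t/w_i \ge \bar R_j^t/w_j$: otherwise the transfer would strictly improve the lexicographically-least normalized cumulative allocation vector, contradicting that the $\bar{}$-outcome is the \nameM allocation. The symmetric argument — in the $\hat{}$-outcome, $\hat r_j^t < \bar r_j^t \le \bar d_j^t \le \hat d_j^t$ gives $\hat r_j^t < \hat d_j^t$ so $j$ can be increased, and $\hat r_i^t > \bar r_i^t \ge g_i(\bar d_i^t) \ge g_i(\hat d_i^t)$ gives $\hat r_i^t > g_i(\hat d_i^t)$ so $i$ can be decreased, and $\hat a_{iq}^t, \bar a_{jq}^t$... wait, in the $\hat{}$-instance we need the ratios of $i$ (to free resources) and of $j$ (to absorb them); but the hypothesis only supplies $\hat a_{iq}^t > 0$ and $\bar a_{jq}^t > 0$ — so I would be careful here that in the $\hat{}$-outcome user $j$ reports $\hat a_{jq}^t$, not $\bar a_{jq}^t$. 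Re-examining, this is fine as long as $j \ne 1$, since then $\hat a_j^t = \bar a_j^t$; and the remark after \cref{lem:single:more_less} ("the second condition for $i$... is not needed when $i \neq 1$") signals that the lemma is really applied with the deviating user being one of the two, so the other has identical reported type in both outcomes. I would state this carefully: the roles are such that whichever of $i,j$ is not user $1$ has the same ratios in both instances, so the needed positivity transfers between $\hat{}$ and $\bar{}$, and the argument yields $\hat R_i^t/w_i \le \hat R_j^t/w_j$.

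The main obstacle I anticipate is making the feasibility-of-transfer argument fully rigorous in the multi-resource, weighted setting: one must check that a simultaneous infinitesimal decrease of $r_j^t$ and increase of $r_i^t$ keeps \emph{all} $m$ resource constraints satisfied (the freed amount of resource $q$ is $\varepsilon \bar a_{jq}^t$ and the newly consumed amount is $\delta \hat a_{iq}^t$, so one needs $\delta \le \varepsilon \min_q(\bar a_{jq}^t/\hat a_{iq}^t)$, which is strictly positive exactly because all these ratios are positive), and that after such a transfer the lexicographic max-min objective on the normalized vector strictly improves whenever $\bar R_i^t/w_i < \bar R_j^t/w_j$ — this last point uses the standard characterization of max-min fair allocations (a feasible allocation is max-min fair iff no feasible transfer can increase a smaller normalized coordinate without decreasing an already-no-larger one). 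Since the excerpt tells us "the proof [is] identical to the proof of \cref{lem:single:more_less}," I would keep the write-up short: cite the structure of that proof, note the one modification (resource freeing now goes through the positive ratios $\bar a_{jq}^t$ and $\hat a_{iq}^t$ across all $q$ rather than through a single resource), and conclude.
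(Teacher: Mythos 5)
Your proposal follows the paper's proof essentially verbatim: derive $\bar r_i^t<\bar d_i^t$ and $\bar r_j^t>g_j(\bar d_j^t)$ (resp.\ $\hat r_j^t<\hat d_j^t$ and $\hat r_i^t>g_i(\hat d_i^t)$) from the hypotheses, observe that a small transfer from the over-allocated to the under-allocated user is feasible, and invoke max-min optimality of the normalized cumulative allocations. One clarification on the point where you hesitated: the worry about needing $\hat a_{jq}^t>0$ (rather than $\bar a_{jq}^t>0$) in the $\hat{}$-outcome is unfounded, so your caveat restricting attention to the case where one of $i,j$ is user $1$ is unnecessary and would, if taken literally, prove only a weakened form of the lemma. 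Positivity of ratios is needed only for the user whose allocation is \emph{decreased}: decreasing $\hat r_i^t$ by $\varepsilon$ frees $\varepsilon\,\hat a_{iq}^t>0$ of every resource $q$, and since the reported ratios of the increased user $j$ satisfy $\hat a_{jq}^t\le 1$, choosing $\delta=\varepsilon\min_q \hat a_{iq}^t$ keeps all $m$ constraints satisfied regardless of whether any $\hat a_{jq}^t$ vanishes (if $\hat a_{jq}^t=0$ the constraint on $q$ is only relaxed). Symmetrically, in the $\bar{}$-outcome only $\bar a_{jq}^t>0$ is used. This is exactly why the lemma's hypotheses ask for $\hat a_{iq}^t>0$ and $\bar a_{jq}^t>0$ and nothing else, and with this observation your argument covers the statement in full generality.
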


The proof of this lemma is identical to the one in \cref{lem:single:more_less}, so it's deferred in \cref{sec:app:mult}.

Now we present an auxiliary lemma, similar to \cref{lem:single:aux_bound}, but this time we bound the increase of any user's allocation when user $1$ deviates, i.e., $\hat R_k^t - R_k^t$ for each $k$. This lemma shows the key difficulty in extending the results to the multiple resource case: reporting low demand in one epoch can result in a different user being able to get much higher amounts, even when the two users have the same dominant resource.
Unfortunately, as a result of this difficulty the resulting bound is a weaker version of \cref{lem:single:aux_bound}, involving the parameter $\rho_1$. If the users ratios are the same and users have same priority, i.e., $w_i = w_j$ and $a_{i}^t = a_{j}^t$ for every $t$ and $i\neq j$, then $\rho_1=1$ and the following lemma would allow us to prove a $2$-incentive compatibility ratio upper bound. In general however, the incentive compatibility ratio depends on $\rho_1$ and it becomes larger the larger $\rho_1$ is.

\begin{restatable}{lemma}{multAux}\label{lem:mult:aux_bound}
    Fix an epoch $t$ and the allocations of two different outcomes $\{\hat R_k^{t-1}\}_{k\in [n]}$ and $\{\bar R_k^{t-1}\}_{k\in [n]}$ for which it holds that for some $X\ge 0$,
    \begin{equation}\label{eq:mult:hypothesis}
        \forall k\in[n],\;\; \frac{\hat R_k^{t-1} - \bar R_k^{t-1}}{w_k} \le X 
    \end{equation}
    If in epoch $t$ users have positive ratios, user $1$ reports her ratios truthfully, i.e.,~$\hat a_1^t = \bar a_1^t = a_1^t$, and $\hat d_1^t \le \bar d_1^t$ then, for all $\a\in[0,1]$ used by \nameM it holds that
    \begin{equation*}
        \forall k\in[n],\;\; 
        \frac{\hat R_k^t - \bar R_k^t}{w_k}
        \leq
        X + 
        \One{\hat d_1^t < \bar d_1^t} 
        \rho_1^t \frac{\bar r_1^t}{w_1}
    \end{equation*}
    where $\rho_1^t = \max_{ k\ne 1,\,q\in [m] } \frac{w_1 a_{1q}^t}{w_k a_{kq}^t}$.
\end{restatable}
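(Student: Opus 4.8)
The plan is to argue by contradiction, in the spirit of \cref{lem:single:aux_bound}, but now tracking the weighted over-allocation of \emph{each} user individually rather than the sum. Suppose the conclusion fails and let $i$ be a user maximizing $M:=(\hat R_i^t-\bar R_i^t)/w_i$, so that $M> X+\One{\hat d_1^t<\bar d_1^t}\,\rho_1^t\bar r_1^t/w_1\ge X$. Since $M=(\hat R_i^{t-1}-\bar R_i^{t-1})/w_i+(\hat r_i^t-\bar r_i^t)/w_i$ and $(\hat R_i^{t-1}-\bar R_i^{t-1})/w_i\le X$ by \eqref{eq:mult:hypothesis}, the case $\hat r_i^t\le\bar r_i^t$ already contradicts $M>X$; hence $\hat r_i^t>\bar r_i^t$. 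Then $\bar r_i^t<\hat r_i^t\le\hat d_i^t\le\bar d_i^t$, so user $i$'s demand is unmet in the $\bar{}$ outcome, and by Pareto-efficiency of \nameM some resource $q$ with $a_{iq}^t>0$ is saturated there, i.e.\ $\sum_k\bar r_k^t a_{kq}^t=\mathcal R$. Comparing with $\sum_k\hat r_k^t a_{kq}^t\le\mathcal R$ and $\hat r_i^t a_{iq}^t>\bar r_i^t a_{iq}^t$, there is a user $j\ne i$ with $\hat r_j^t<\bar r_j^t$.

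The bulk of the argument is a short case analysis built on \cref{lem:mult:more_less}. Its hypotheses for the pair $(i,j)$ all hold --- $\bar r_i^t<\hat r_i^t$, $\hat d_i^t\le\bar d_i^t$, positive ratios for $i$; $\bar r_j^t>\hat r_j^t$, positive ratios for $j$ --- except possibly $\bar d_j^t\le\hat d_j^t$, which is automatic when $j\ne 1$ (then $\bar d_j^t=\hat d_j^t$) and also when $j=1$ with $\hat d_1^t=\bar d_1^t$. Whenever \cref{lem:mult:more_less} applies it gives $(\hat R_i^t-\bar R_i^t)/w_i\le(\hat R_j^t-\bar R_j^t)/w_j$; since $i$ is the maximizer this is an equality, and then $\hat r_j^t<\bar r_j^t$ forces $M<X$ via \eqref{eq:mult:hypothesis}, a contradiction. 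In particular, if for some choice of $j$ we have $j\ne 1$ (this includes the subcase $i=1$) we are done; likewise if $\hat d_1^t=\bar d_1^t$ we are done. So we may assume that the (necessarily existing) user $k\ne i$ with $\hat r_k^t<\bar r_k^t$ is unique and equal to $1$, and that $\hat d_1^t<\bar d_1^t$, so $\One{\hat d_1^t<\bar d_1^t}=1$.

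This remaining case is the crux, and it is the one place \cref{lem:mult:more_less} cannot be used: user $1$ reports a strictly smaller demand in the $\hat{}$ outcome, the reverse of the direction the lemma needs for user $1$. Here I argue directly with the resource-$q$ budget: since $\hat r_k^t\ge\bar r_k^t$ for every $k\ne i,1$, the inequality $\sum_k\hat r_k^t a_{kq}^t\le\mathcal R=\sum_k\bar r_k^t a_{kq}^t$ collapses to $(\hat r_i^t-\bar r_i^t)a_{iq}^t\le(\bar r_1^t-\hat r_1^t)a_{1q}^t\le\bar r_1^t a_{1q}^t$. Dividing by $w_i a_{iq}^t>0$ and using $\rho_1^t=\max_{k\ne 1,q}\frac{w_1 a_{1q}^t}{w_k a_{kq}^t}\ge\frac{w_1 a_{1q}^t}{w_i a_{iq}^t}$ (as $i\ne 1$) gives $(\hat r_i^t-\bar r_i^t)/w_i\le\rho_1^t\bar r_1^t/w_1$, so with \eqref{eq:mult:hypothesis} we obtain $M\le X+\rho_1^t\bar r_1^t/w_1$, contradicting the choice of $i$. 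The subtlety --- and the reason for running the \cref{lem:mult:more_less} case analysis first --- is that one must eliminate every potential ``loser'' other than user $1$ before invoking the budget inequality, so that user $1$'s freed-up share of the bottleneck resource $q$ is the sole source of user $i$'s gain; this is precisely what turns a raw under-allocation of $\bar r_1^t$ into a gain inflated by the factor $\rho_1^t$.
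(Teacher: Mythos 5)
Your proof is correct and follows essentially the same route as the paper's: a contradiction argument built on a saturated-resource budget inequality together with \cref{lem:mult:more_less}, hinging on the same key estimate that user $1$'s freed-up share $a_{1q}^t\bar r_1^t$ of the bottleneck resource can inflate another user $i$'s gain by at most $\rho_1^t\bar r_1^t/w_1$ since $w_i a_{iq}^t\rho_1^t/w_1\ge a_{1q}^t$. The only difference is organizational: the paper uses the magnitude of the assumed violation to show that a loser $j\ne 1$ must always exist and then invokes \cref{lem:mult:more_less}, whereas you isolate the case where user $1$ is the sole loser and close it with the budget inequality directly --- the two are contrapositive forms of the same inequality.
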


The above lemma, by assuming truthful reporting of ratios and that user $1$ does not over-report her demand ($\hat d_1^t \le d_1^t$), inductively proves a $\big(1+\max_t\{\rho_1^t\}\big)$ incentive compatibility ratio.

The proof of this lemma is similar to the one in \cref{lem:single:aux_bound}. If $\hat d_1^t = \bar d_1^t$, then users with larger $\nicefrac{\hat R_k^{t-1} - \bar R_k^{t-1}}{w_k}$ will not gain additional resources. If $\hat d_1^t < \bar d_1^t$, user $1$ can increase the allocation of some other user $k$, but by at most $\rho_1^t \bar r_1^t \frac{w_k}{w_1}$; note that this is much larger than $r_1^t$, the guarantee of \cref{lem:single:more_less}.

\begin{proof}
    Assume that there exists a user $i$ such that
    \begin{equation}\label{eq:mult:41}
        \frac{\hat R_i^t - \bar R_i^t}{w_i}
        >
        X + 
        \One{\hat d_1^t < \bar d_1^t} 
        \rho_1^t \frac{\bar r_1^t}{w_1}
    \end{equation}

    Subtracting \eqref{eq:mult:hypothesis} from \eqref{eq:mult:41} implies that
    \begin{equation}\label{eq:mult:42}
        \frac{\hat r_i^t - \bar r_i^t}{w_i}
        >
        \One{\hat d_1^t < \bar d_1^t} 
        \rho_1^t \frac{\bar r_1^t}{w_1}
    \end{equation}

    Note that \eqref{eq:mult:42} implies that user $i$ satisfies the requirements of user $i$ in \cref{lem:mult:more_less} (since for all users $k$ and resources $q$, $\hat d_k^t \le \bar d_k^t$ and $a_{kq}^t>0$).
    Now note that because for all users $k$, $\hat d_k^t \le \bar d_k^t$, there must exist a resource $q^*$ for which
    \begin{equation*}
        \sum_k a_{kq^*}^t\, \bar r_k^t
        \ge
        \sum_k a_{kq^*}^t\, \hat r_k^t
    \end{equation*}
    since otherwise no resources would be saturated in allocations $\{\bar r_k^t\}_k$, which would imply that every user gets her demand in that allocation, which would imply the same in allocations $\{\hat r_k^t\}_k$ (since ratios are the same and demands are not higher); the last fact leads to a contradiction.
    Multiplying \eqref{eq:mult:42} with $a_{iq^*}^t w_i$ and adding it with the above inequality yields
    \begin{equation}\label{eq:mult:43}
        \sum_{k\ne i}a_{kq^*}^t (\bar r_k^t - \hat r_k^t)
        \;>\;
        w_i a_{iq^*}^t \One{\hat d_1^t < \bar d_1^t} 
        \rho_1^t \frac{\bar r_1^t}{w_1}
        \;\ge\;
        0
    \end{equation}
    \eqref{eq:mult:43} proves that there exists a user $j$ such that $\bar r_j^t > \hat r_j^t$. Additionally, for that $j$ we can assume that $\bar d_j^t = \hat d_j^t$; to see why that is we distinguish three cases:
    \begin{itemize}
        \item If $\hat d_1^t = \bar d_1^t$ then for all users $k$, $\hat d_k^t = \bar d_k^t$.
        
        \item If $i = 1$, then $j\ne 1$ since $\hat r_i^t > \bar r_i^t$ and $\bar r_j^t > \hat r_j^t$. Because $j\ne 1$, $\hat d_j^t = \bar d_j^t$.
        
        \item If $\hat d_1^t < \bar d_1^t$ and $i\ne 1$, then, \eqref{eq:mult:43} becomes
        \begin{equation*}
            \sum_{k\ne i}a_{kq^*}^t (\bar r_k^t - \hat r_k^t)
            \;>\;
            w_i a_{iq^*}^t \rho_1^t \frac{\bar r_1^t}{w_1}
            \;\ou[\ge]{i\ne 1\implies}{\rho_1^t \ge \nicefrac{w_1 a_{1q^*}^t}{w_i a_{iq^*}^t}}\;
            a_{1q^*}^t \bar r_1^t
        \end{equation*}
        which implies that $\sum_{k\ne i,1}a_{kq^*}^t (\bar r_k^t - \hat r_k^t) > 0$. This proves that we can assume that $j\ne 1$, entailing $\hat d_j^t = \bar d_j^t$.
    \end{itemize}

    \noindent
    This proves that exists a user $j$ that satisfies the requirements of user $j$ from \cref{lem:mult:more_less}. Using the lemma on users $i$ and $j$ we get $\frac{\hat R_i^t - \bar R_i^t}{w_i} \le \frac{\hat R_j^t - \bar R_j^t}{w_j}$.
    This is a contradiction, because it holds that $\hat R_j^t - \bar R_j^t < \hat R_j^{t-1} - \bar R_j^{t-1} \le w_j X$ (from \eqref{eq:mult:hypothesis} and $\bar r_j^t > \hat r_j^t$)
    and $\hat R_i^t - \bar R_i^t > w_i X$ (from \eqref{eq:mult:41}).
\end{proof}

Similarly to \cref{thm:single:no_over_report}, we can now prove that there is no benefit to over-reporting or misreporting ratios. As mentioned previously, this is a very important property because it guarantees that every resource allocated is utilized and that, even when user $1$ is not truthful, her utility is equal to her allocation: $\hat u_i^t = \hat r_i^t$.

\begin{theorem}\label{thm:mult:no_over_report}
    Assume that the users' true ratios are positive, i.e., for all users $i\in [n]$, resources $q\in [m]$, and epochs $t$, it holds that $a_{iq}^t>0$. Then, for any $\a\in[0,1]$ used by \nameM, the users have nothing to gain by declaring a demand higher than their actual demand, and any gain achievable by misreporting ratios can also be obtained by under-reporting demand.
\end{theorem}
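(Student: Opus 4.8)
The plan is to closely mirror the single-resource argument (\cref{thm:single:no_over_report,lem:single:over_aux}), with one lemma that handles over-reporting demand and misreporting ratios at once. Concretely, I would prove the following analogue of \cref{lem:single:over_aux}: fix an epoch $T_0$, an outcome $\{\hat R_k^{T_0-1}\}_{k\in[n]}$, and an epoch $T\ge T_0$, and suppose user $1$ reports her true ratios and does not over-report her demand in epochs $T_0+1,\dots,T$. Then user $1$ does not lose utility in epoch $T$ by replacing whatever she reports in epoch $T_0$ with the report ``true ratios $a_1^{T_0}$ and demand $\bar d_1^{T_0}:=\min\{d_1^{T_0},\hat\lambda_1^{T_0}\hat r_1^{T_0}\}$''; note this replacement is truthful in ratios and never over-reports, since $\bar d_1^{T_0}\le d_1^{T_0}$. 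Given this lemma, \cref{thm:mult:no_over_report} follows exactly as \cref{thm:single:no_over_report} does from \cref{lem:single:over_aux}: for a fixed $T$, take the finitely many epochs $\le T$ in which user $1$ over-reports demand or misreports ratios and apply the lemma to them one at a time, from the latest to the earliest; each application removes one such epoch (replacing its report by a truthful-ratio, non-over-reporting one) without decreasing $U_1^T$ and without touching any other epoch's report, so in finitely many steps we reach a strategy with true ratios and no over-reporting whose $U_1^T$ is at least the original one.

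To prove the lemma I would build the outcome $\{\bar\cdot\}$ that agrees with $\{\hat\cdot\}$ everywhere except user $1$'s epoch-$T_0$ report, set to ``$a_1^{T_0},\bar d_1^{T_0}$'' as above, and argue in the same three pieces as \cref{lem:single:over_aux}. \emph{(i)} In epoch $T_0$, user $1$ receives exactly $\bar r_1^{T_0}=\bar d_1^{T_0}$; hence her true utility is unchanged, $\bar u_1^{T_0}=\bar d_1^{T_0}=\hat u_1^{T_0}$, we have $\hat R_1^{T_0}-\bar R_1^{T_0}=\hat r_1^{T_0}-\bar d_1^{T_0}=:x\ge 0$, and — since in $\bar$ user $1$ uses weakly less of every resource than in $\hat$, as $a_{1q}^{T_0}\bar d_1^{T_0}\le a_{1q}^{T_0}\hat\lambda_1^{T_0}\hat r_1^{T_0}\le \hat a_{1q}^{T_0}\hat r_1^{T_0}$ — resource-monotonicity of max-min fairness gives $\bar r_k^{T_0}\ge \hat r_k^{T_0}$, i.e. $\hat R_k^{T_0}-\bar R_k^{T_0}\le 0$, for every $k\ne 1$. \emph{(ii)} In epochs $t\in[T_0+1,T]$ the reports in $\hat$ and $\bar$ coincide, so $\hat d_1^t=\bar d_1^t$ and the indicator in \cref{lem:mult:aux_bound} vanishes; starting from $\max_k(\hat R_k^{T_0}-\bar R_k^{T_0})/w_k\le x/w_1$ (from \emph{(i)}) and applying \cref{lem:mult:aux_bound} inductively, $(\hat R_k^t-\bar R_k^t)/w_k\le x/w_1$ for all $k$ and all $t\in[T_0,T]$, in particular $\hat R_1^T-\bar R_1^T\le x$. \emph{(iii)} Because user $1$ reports truthfully in ratios and does not over-report in $(T_0,T]$, her utility equals her allocation there in both outcomes, so $\hat U_1^T-\bar U_1^T=(\hat U_1^{T_0}-\bar U_1^{T_0})+(\hat R_1^T-\bar R_1^T)-(\hat R_1^{T_0}-\bar R_1^{T_0})=0+(\hat R_1^T-\bar R_1^T)-x\le 0$.

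The crux — and the step I expect to be the main obstacle — is \emph{(i)}, namely $\bar r_1^{T_0}=\bar d_1^{T_0}$: that reporting truthfully with the reduced demand actually lets user $1$ realize the utility she extracted by cheating. The bound ``$\le$'' is just the demand constraint. For ``$\ge$'', suppose $\bar r_1^{T_0}<\bar d_1^{T_0}$; then user $1$ is not demand-constrained, so by Pareto efficiency of \nameM and positivity of her ratios some resource $q$ is saturated in $\bar$. Subtracting the resource-$q$ feasibility inequality of $\hat$ from the resource-$q$ equality in $\bar$, and using $a_{1q}^{T_0}\bar r_1^{T_0}<a_{1q}^{T_0}\bar d_1^{T_0}\le \hat a_{1q}^{T_0}\hat r_1^{T_0}$, yields $\sum_{k\ne 1}a_{kq}^{T_0}(\bar r_k^{T_0}-\hat r_k^{T_0})>0$, so some $j\ne 1$ has $\bar r_j^{T_0}>\hat r_j^{T_0}\ge g_j(d_j^{T_0})$; in particular $j$ is above its guarantee in $\bar$ and not demand-capped in $\hat$. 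Because \emph{all} true ratios are positive, a small amount of resources can always be shifted between user $1$ and user $j$, so the local-exchange characterization of the max-min fair allocation forces $\bar R_j^{T_0}/w_j\le\bar R_1^{T_0}/w_1$ in $\bar$ (user $1$ wants more) and, chaining with $\bar r_j^{T_0}>\hat r_j^{T_0}$ and $\bar r_1^{T_0}\le\hat r_1^{T_0}$, gives $\hat R_j^{T_0}/w_j<\hat R_1^{T_0}/w_1$; applying the same characterization in $\hat$ (where $j$ is poorer than $1$ and not demand-capped) forces user $1$ down to her guarantee, $\hat r_1^{T_0}=g_1(\hat d_1^{T_0})\le \a\mathcal R w_1/\sum_k w_k$. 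But then $\bar d_1^{T_0}\le\hat\lambda_1^{T_0}\hat r_1^{T_0}\le\a\mathcal R w_1/\sum_k w_k$, so $g_1(\bar d_1^{T_0})=\bar d_1^{T_0}$, i.e. user $1$ is \emph{guaranteed} $\bar d_1^{T_0}$ in $\bar$ — contradicting $\bar r_1^{T_0}<\bar d_1^{T_0}$. This is exactly where the positive-ratio hypothesis is essential (it underlies both the resource-monotonicity claim in \emph{(i)} and the ``shift resources between any two users'' step here), consistent with the $\Omega(m)$ profit from over-reporting under zero ratios shown in the appendix; a careful write-up would also state explicitly the progressive-filling / local-exchange characterization of the \nameM allocation used above.
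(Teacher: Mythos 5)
Your proposal is correct and follows essentially the same route as the paper's: the same auxiliary lemma with the same replacement report $\bigl(a_1^{T_0},\,\min\{d_1^{T_0},\hat\lambda_1^{T_0}\hat r_1^{T_0}\}\bigr)$, the same use of \cref{lem:mult:aux_bound} to propagate $\hat R_1^T-\bar R_1^T\le x$, and the same backward induction over the misreported epochs. The only divergence is local: where the paper justifies $\bar r_1^{T_0}=\bar d_1^{T_0}$ and $\bar r_k^{T_0}\ge\hat r_k^{T_0}$ by exhibiting the feasible allocation $(\bar d_1^{T_0},\hat r_2^{T_0},\dots,\hat r_n^{T_0})$, you give a more explicit saturation-plus-guarantee contradiction, which is a valid (and somewhat more careful) way of establishing the same fact.
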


The proof of this theorem is quite similar to the one in \cref{thm:single:no_over_report}: first we prove the following auxiliary lemma, that makes the proof of the theorem easy. The lemma is similar to \cref{lem:single:over_aux}: if in epochs $T_0+1$ to $T$ user $1$ does not over-report her demand and reports her ratios truthfully, she cannot increase her utility in epoch $T$ by over-reporting demand or misreporting ratios in $T_0$. 

\begin{lemma}\label{lem:mult:over_aux}
    Fix an epoch $T_0$ and the allocations of an outcome $\{\hat R_k^{T_0-1}\}_{k\in [n]}$. Fix another epoch $T\ge T_0$ and assume that in epochs $T_0+1,T_0+2,\ldots,T$ user $1$ reports her ratios truthfully and does not over-report her demand. Then, for any $\a\in[0,1]$ used by \nameM, any increase in user $1$'s utility in epoch $T$ gained by over-reporting demand or misreporting ratios in $T_0$ can be achieved with truthful ratio reporting and no demand over-reporting in epoch $T_0$.
\end{lemma}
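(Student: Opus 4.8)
The plan is to adapt the proof of \cref{lem:single:over_aux}. Write $\hat d,\hat r,\hat R,\hat u,\hat U$ for the given (``deviating'') outcome, in which user~$1$ may over-report her demand or misreport her ratios in epoch~$T_0$ but reports truthful ratios with no over-reporting in $T_0+1,\dots,T$. I will construct a ``comparison'' outcome, with barred quantities $\bar d,\bar r,\dots$, that agrees with the deviating one everywhere except that in epoch~$T_0$ user~$1$ reports her true ratios $a_1^{T_0}$ and the reduced demand $\bar d_1^{T_0}:=\hat u_1^{T_0}$ --- her \emph{true} utility in epoch~$T_0$ under the misreport. Since $\hat u_1^{T_0}=\min\{d_1^{T_0},\dots\}\le d_1^{T_0}$, this is a legitimate ``truthful ratios, no over-reporting'' report in~$T_0$, and the two outcomes share their history up to epoch~$T_0-1$; the goal is to show $\bar U_1^T\ge\hat U_1^T$.

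The heart of the argument is an epoch-$T_0$ analysis establishing two facts: \textbf{(A)} $\bar r_1^{T_0}=\hat u_1^{T_0}$ (user~$1$ realizes her whole reduced demand), and \textbf{(B)} $\bar r_k^{T_0}\ge\hat r_k^{T_0}$ for every $k\ne1$ (every other user weakly gains). The key elementary observation is that for every resource $q$, $\hat a_{1q}^{T_0}\hat r_1^{T_0}\ge a_{1q}^{T_0}\hat u_1^{T_0}$ --- immediate from $\hat u_1^{T_0}\le\hat r_1^{T_0}\hat\l_1^{T_0}\le\hat r_1^{T_0}\,\hat a_{1q}^{T_0}/a_{1q}^{T_0}$ --- so, using only the cap $\bar r_1^{T_0}\le\bar d_1^{T_0}=\hat u_1^{T_0}$, user~$1$'s per-resource footprint $a_{1q}^{T_0}\bar r_1^{T_0}$ in the comparison outcome is pointwise no larger than her footprint $\hat a_{1q}^{T_0}\hat r_1^{T_0}$ in the deviating one. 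For \textbf{(B)} I argue by contradiction exactly as in the proof of \cref{lem:mult:aux_bound}: if some $j\ne1$ had $\bar r_j^{T_0}<\hat r_j^{T_0}$, then some resource $q^*$ must be saturated in the comparison outcome (otherwise every user, $j$ included, receives her full demand, forcing $\bar r_j^{T_0}=\bar d_j^{T_0}=\hat d_j^{T_0}\ge\hat r_j^{T_0}$); combining saturation of $q^*$ in the comparison outcome, feasibility of $q^*$ in the deviating one, and the pointwise footprint bound gives $\sum_{k\ne1}a_{kq^*}^{T_0}(\bar r_k^{T_0}-\hat r_k^{T_0})\ge0$, and since all ratios are positive this forces some $\ell\notin\{1,j\}$ with $\bar r_\ell^{T_0}>\hat r_\ell^{T_0}$; applying \cref{lem:mult:more_less} to $i=j$ and $j=\ell$ (whose demand conditions are satisfied since neither is user~$1$) yields $\frac{\hat R_j^{T_0}-\bar R_j^{T_0}}{w_j}\le\frac{\hat R_\ell^{T_0}-\bar R_\ell^{T_0}}{w_\ell}$, i.e.\ a strictly positive quantity bounded above by a strictly negative one --- a contradiction. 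For \textbf{(A)}, the cap gives $\bar r_1^{T_0}\le\hat u_1^{T_0}$; the reverse inequality follows from the incentive compatibility of \nameM \emph{within a single epoch} --- the multi-resource analogue of the fact, noted in \cref{sec:single}, that static MMF/DRF cannot be gamed in one epoch, with the fixed past allocations $\hat R_k^{T_0-1}$ acting as endowments --- so reporting her true demand and ratios would give user~$1$ an allocation $\ge\hat u_1^{T_0}$, and lowering only her reported demand to $\hat u_1^{T_0}$ then leaves her at exactly $\hat u_1^{T_0}$.

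With \textbf{(A)} and \textbf{(B)} in hand, the rest mirrors \cref{lem:single:over_aux}. Put $y:=\hat r_1^{T_0}-\hat u_1^{T_0}=\hat R_1^{T_0}-\bar R_1^{T_0}\ge0$, the over-allocation wasted by user~$1$ in~$T_0$; by \textbf{(A)} and \textbf{(B)}, $\frac{\hat R_k^{T_0}-\bar R_k^{T_0}}{w_k}\le\frac{y}{w_1}$ for all $k$, with equality at $k=1$. In every epoch $t\in[T_0+1,T]$ user~$1$ reports identically in the two outcomes (truthful ratios, $\hat d_1^t=\bar d_1^t$), so \cref{lem:mult:aux_bound} applies with its indicator equal to~$0$ and propagates $\frac{\hat R_k^t-\bar R_k^t}{w_k}\le\frac{y}{w_1}$ for all $k$ and all such $t$; in particular $\hat R_1^T-\bar R_1^T\le y$. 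Finally, since user~$1$ neither over-reports nor misreports ratios after $T_0$ we have $\hat u_1^t=\hat r_1^t$ and $\bar u_1^t=\bar r_1^t$ for $t\in[T_0+1,T]$, while $\bar u_1^{T_0}=\bar r_1^{T_0}=\hat u_1^{T_0}$; telescoping these identities gives $\hat U_1^T-\bar U_1^T=(\hat R_1^T-\bar R_1^T)-y\le0$, which is the claim.

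I expect fact \textbf{(B)}, together with the pointwise-footprint observation that makes it available, to be the main obstacle: it is exactly where heterogeneous ratios bite, and it has no counterpart in \cref{lem:single:over_aux}, where ``everyone else gets weakly less when user~$1$ over-reports'' is immediate. Fact \textbf{(A)} is a secondary subtlety --- it reduces cleanly to single-epoch incentive compatibility of \nameM --- but one must choose the reduced demand to be \emph{precisely} $\hat u_1^{T_0}$, since any larger value could enlarge user~$1$'s footprint and break \textbf{(B)}, while any smaller value would leave $\bar U_1^{T_0}<\hat U_1^{T_0}$ with no way to recover the difference later.
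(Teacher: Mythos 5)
Your proposal is correct and follows essentially the same route as the paper's proof: your comparison outcome with truthful ratios and reduced demand $\bar d_1^{T_0}=\hat u_1^{T_0}$ is exactly the paper's $\min\{d_1^{T_0},\hat\l_1^{T_0}\hat r_1^{T_0}\}$, your pointwise footprint bound is \eqref{eq:mult:44}, your facts \textbf{(A)}/\textbf{(B)} are \eqref{eq:mult:45}, and the telescoping together with the inductive application of \cref{lem:mult:aux_bound} (with $X=y/w_1$) is identical. The only divergence is in how \eqref{eq:mult:45} is justified: your contradiction argument for \textbf{(B)} via saturation and \cref{lem:mult:more_less} is in fact more detailed than the paper's one-line appeal to the feasibility of the allocation $(\bar d_1^{T_0},\hat r_2^{T_0},\dots,\hat r_n^{T_0})$, whereas for \textbf{(A)} you lean on single-epoch incentive compatibility plus a demand-monotonicity property that the paper never formally establishes for this weighted, endowed, $\a$-guaranteed variant --- \textbf{(A)} can instead be obtained self-containedly by the same saturation/exchange argument (with the edge case $\hat r_1^{T_0}=g_1(\hat d_1^{T_0})=\hat d_1^{T_0}$ handled by noting that then $\bar d_1^{T_0}\le\a\mathcal R w_1/\sum_k w_k$, so the guarantee constraint forces $\bar r_1^{T_0}=\bar d_1^{T_0}$ directly).
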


The above proves that over-reporting demand or misreporting ratios is not useful by applying the lemma inductively, similar to the proof of \cref{thm:single:no_over_report}.

The lemma's proof is based on an alternative reporting that both reports ratios truthfully and guarantees no demand over-reporting. With this reporting user $1$ is guaranteed the same utility without increasing her allocation or decreasing the other user's allocations; this entails a more advantageous position for her in the following epochs.

\begin{proof}[Proof of \cref{lem:mult:over_aux}]
    We are going to consider another outcome that is the same as the one in the lemma up to epoch $T_0-1$. All users $k\ne 1$ report the same type in all epochs: $\bar d_k^t = \hat d_k^t$ and $\bar a_k^t = \hat a_k^t$ for all $t$. In epoch $T_0$ user $1$ deviates from reporting $\hat a_1^{T_0}$ and $\hat d_1^{T_0}$, and instead declares
    \begin{itemize}
        \item her ratios truthfully: $\bar a_1^{T_0} = a_1^{T_0}$.
        \item her demand: $\bar d_1^{T_0} = \min\{d_1^{T_0}, \hat\l_1^{T_0}\hat r_1^{T_0}\}$; recall that $\hat\l_1^t = \min_q \frac{\hat a_{1q}^t}{a_{1q}^t} \le 1$.
    \end{itemize}

    Note that the above reporting satisfies what we want: it reports the ratios truthfully and does not over-report the demand.
    In later epochs $t\in [T_0+1,T]$ user $1$ does not deviate from the other reporting: $\bar a_1^t = \hat a_1^t = a_1^t$ and $\bar d_1^t = \hat d_1^t \le d_1^t$.

    Now we notice that in epoch $T_0$, for every resource $q$ it holds that
    \begin{equation}\label{eq:mult:44}
        \bar a_{1q}^{T_0}\, \bar r_1^{T_0}
        \;\ou[\le]{\bar a_1^{T_0} = a_1^{T_0}}{\bar r_1^{T_0} \le \bar d_1^{\,T_0}}\;
        a_{1q}^{T_0}\, \bar d_1^{T_0}
        \;\ou[\le]{\textrm{definition}}{\textrm{of }\bar d_1^{T_0}}\;
        a_{1q}^{T_0}\, \hat\l_1^{T_0}\, \hat r_1^{T_0}
        \;\ou[\le]{\textrm{definition}}{\textrm{of }\hat \l_1^{T_0}}\;
        a_{1q}^{T_0}\, \frac{\hat a_{1q}^{T_0}}{a_{1q}^{T_0}}\, \hat r_1^{T_0}
        \;=\;
        \hat a_{1q}^{T_0}\,  \hat r_1^{T_0}
    \end{equation}

    \eqref{eq:mult:44} proves that in allocation $\{\bar r_k^{T_0}\}_k$ user $1$ uses less of every resource compared to allocation $\{\hat r_k^{T_0}\}_k$. This proves that the following allocation is feasible: $\bar r_1^{T_0} \gets \bar d_1^{T_0}$ and for $k\ne 1$, $\bar r_1^{T_0} \gets \hat r_1^{T_0}$. Because $\bar d_1^{T_0} \le \hat r_1^{T_0}$ (which follows from $\hat\l_1^{T_0}\le 1$) we have that
    \begin{equation}\label{eq:mult:45}
        \bar r_1^{T_0} = \bar d_1^{\,T_0}
        \;\;\textrm{ and }\;\;
        \forall k\ne 1,\;
        \bar r_k^{T_0}
        \ge
        \hat r_k^{T_0}
    \end{equation}
    Now we notice that 
    \begin{equation*}
        \hat u_i^{T_0}
        \;\ou{\textrm{definition}}{\textrm{of utility}}\;
        \min\left\{
            d_i^{T_0},\,
            \hat r_i^{T_0}\,\hat\l_1^{T_0}
        \right\}
        \;\ou{\textrm{definition}}{\textrm{of }\bar d_1^{T_0}}\;
        \bar d_1^{\,T_0}
        \;\ov{\eqref{eq:mult:45}}\;
        \bar r_1^{T_0}
        \;\ou{\bar d_1^{T_0}\le d_1^{T_0}}{\bar a_1^{T_0} = a_1^{T_0}}\;
        \bar u_1^{T_0}
    \end{equation*}
    The difference in total utility in epoch $T$ is
    \begin{eqnarray*}
        \hat U_1^T - \bar U_1^T
        &=&
        \hat U_1^{T_0-1} - \bar U_1^{T_0-1} +
        \hat u_1^{T_0} - \bar u_1^{T_0} +
        \sum_{\tau=T_0+1}^T\left(\hat u_1^\tau - \bar u_1^\tau\right) \\
        &\ou{\textrm{same outcomes up to } T_0-1}{\hat u_1^{T_0} = \bar u_1^{T_0}}&
        \sum_{\tau=T_0+1}^T\left(\hat u_1^\tau - \bar u_1^\tau\right) \\
        &\ou{\textrm{no demand over-reporting}}{\textrm{and true ratio reporting in } \tau}&
        \sum_{\tau=T_0+1}^T\left(\hat r_1^\tau - \bar r_1^\tau\right)
        \;\ou{\hat R_1^{T_0-1} = \bar R_1^{T_0-1}}{}\;
        \hat R_1^T - \bar R_1^T
        -
        \left(\hat r_1^{T_0} - \bar r_1^{T_0}\right)
    \end{eqnarray*}%

    Let $X = \hat r_1^{T_0} - \bar r_1^{T_0} \ge 0$. To prove that $\hat U_1^T \le \bar U_1^T$ (thus proving the lemma) all we need to prove is that $\hat R_1^T - \bar R_1^T \le X$. For every epoch $t \in [T_0+1,T]$ the requirements of \cref{lem:mult:aux_bound} are true in $t$ and $\hat d_1^t = \bar d_1^t$, which proves inductively that if
    \begin{equation}\label{eq:mult:47}
        \forall k\in [n] : \frac{\hat R_k^{T_0} - \bar R_k^{T_0}}{w_k} \le \frac{X}{w_1} 
    \end{equation}
    then for all $t\in [T_0,T]$
    \begin{equation*}
        \forall k\in [n] : \frac{\hat R_k^t - \bar R_k^t}{w_k} \le \frac{X}{w_1}
    \end{equation*}

    We can prove \eqref{eq:mult:47} by noticing that for $k \ne 1$, it holds $\hat R_k^t - \bar R_k^t \le 0 \le X$ (from \eqref{eq:mult:45} and $\hat R_k^{T_0-1} = \bar R_k^{T_0-1}$) and for $k=1$ it holds that $\hat r_1^{T_0} - \bar r_1^{T_0} = X$ and $\hat R_1^{T_0-1} = \bar R_1^{T_0-1}$.
    This completes the proof.
\end{proof}

Now we prove the upper bound for the incentive compatibility of \nameM with a simple proof utilizing \cref{lem:mult:aux_bound,thm:mult:no_over_report}.

\begin{theorem}\label{thm:mult:upper}
    Assume that all users have positive ratios: $a_{iq}^t>0$ for all users $i$, resources $q$, and epochs $t$.  Then for any user $i$ and $\a\in[0,1]$ used by \nameM, user $i$ cannot misreport her demand or ratios to increase her utility by a factor larger than $(1 + \rho_i)$, where $\rho_i = \max_{k\ne i,q,t}\Big\{\frac{w_i a_{iq}^t}{w_k a_{kq}^t}\Big\}$.
\end{theorem}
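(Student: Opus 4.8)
The plan is to follow the same template as the proof of \cref{thm:single:upper}, but the argument is actually simpler here: since \cref{lem:mult:aux_bound} already bounds the gain of \emph{every} user (not just the deviating one), there is no need for the extra ``sharing'' step via \cref{lem:single:more_less} that was used to sharpen the single-resource bound from $2$ to $3/2$.

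First I would reduce to the clean case. Fix the user under consideration and, by relabeling, call her user $1$, so $\rho_1 = \max_{k\ne 1, q, t}\{w_1 a_{1q}^t/(w_k a_{kq}^t)\}$. By \cref{thm:mult:no_over_report}, any utility gain user $1$ can obtain by over-reporting demand or misreporting ratios can also be obtained by a deviation in which she reports ratios truthfully and never over-reports her demand; hence it is without loss of generality to assume user $1$'s deviation is of this restricted form. Under such a deviation $\hat a_1^t = a_1^t$ and $\hat r_1^t \le d_1^t$, so her true utility equals her allocation, $\hat u_1^t = \hat r_1^t$, and $\hat U_1^t = \hat R_1^t$; likewise truthful play gives $U_1^t = R_1^t$. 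Thus it suffices to show $\hat R_1^t \le (1+\rho_1)\, R_1^t$ for every epoch $t$.

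Next I would iterate \cref{lem:mult:aux_bound}. Take $\bar d_k^\tau = d_k^\tau$ (the true demands) for all $k,\tau$, and let $\hat d_1^\tau$ be user $1$'s reported demands (with $\hat d_1^\tau \le d_1^\tau$ and truthful ratios, as arranged above). Starting from $\hat R_k^0 - R_k^0 = 0$ for all $k$ — so hypothesis \eqref{eq:mult:hypothesis} holds with $X=0$ — a straightforward induction on $\tau$ using \cref{lem:mult:aux_bound} gives, for every epoch $t$,
\begin{equation*}
    \forall k\in[n]:\quad \frac{\hat R_k^t - R_k^t}{w_k} \;\le\; \sum_{\tau=1}^t \One{\hat d_1^\tau < d_1^\tau}\,\rho_1^\tau\,\frac{r_1^\tau}{w_1} \;\le\; \rho_1 \sum_{\tau=1}^t \frac{r_1^\tau}{w_1} \;=\; \rho_1\,\frac{R_1^t}{w_1},
\end{equation*}
where the second inequality uses $\rho_1^\tau \le \rho_1 = \max_\tau \rho_1^\tau$ and $\One{\cd}\le 1$. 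Specializing to $k=1$ yields $\hat R_1^t - R_1^t \le \rho_1 R_1^t$, i.e. $\hat U_1^t = \hat R_1^t \le (1+\rho_1) R_1^t = (1+\rho_1) U_1^t$, which is the claim.

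The only real content is in the two ingredients already established — \cref{thm:mult:no_over_report} (which lets us discard over-reporting and ratio manipulation and identify $\hat u_1^t$ with $\hat r_1^t$) and \cref{lem:mult:aux_bound} (the per-user over-allocation bound) — so the main obstacle is not in this theorem but in checking that the hypotheses of \cref{lem:mult:aux_bound} are maintained throughout the induction: one needs user $1$ to report ratios truthfully and not over-report in every epoch, which is exactly what the initial reduction guarantees. Unlike the single-resource setting there is no room (and no need) to push the constant below $1+\rho_1$, since \cref{thm:mult:lower} shows it is tight even with two resources.
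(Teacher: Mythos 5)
Your proposal is correct and follows essentially the same route as the paper: reduce via \cref{thm:mult:no_over_report} to truthful ratios and no demand over-reporting so that utility equals allocation, then apply \cref{lem:mult:aux_bound} inductively from $X=0$ to get $\hat R_1^t - R_1^t \le \rho_1 R_1^t$. The paper's proof is just a terser version of yours; your explicit unrolling of the induction and the bound $\rho_1^\tau \le \rho_1$ fills in exactly the details the paper leaves implicit.
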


\begin{proof}
    W.l.o.g. we are going to prove the theorem for $i=1$. Because of \cref{thm:mult:no_over_report} we can assume that user $1$ does not over-report her demand or misreport her ratios and thus we can bound $\hat R_1^t$ instead of $\hat U_1^t$. \cref{lem:mult:aux_bound} inductively implies that for any $t$, $\hat R_1^t - R_1^t \le \rho_1 R_1^t$, which proves the theorem.
\end{proof}

\paragraph{Lower bound on incentive compatibility ratio}
In the last theorem we proved that the incentive compatibility ratio of user $1$ is upper bounded by $(1+\rho_1)$. We now show that if the only constraints on users' ratios and weights is that they are positive and $\rho_1$ is fixed, then it is possible for the incentive compatibility ratio of user $1$ to be $(1+\rho_1)$, even if the users' ratios do not change over time.

\begin{restatable}{theorem}{multLower}\label{thm:mult:lower}
    For any $\e\in (0,1)$, $w_1,w_2>0$, and $\a\in[0,1]$ used by \nameM, there is an instance where the users' ratios are constant every epoch, user $1$ has weight $w_1$ and another user has weight $w_2$, $\rho_1 = \frac{w_1}{w_2 \e}$, and user $1$ can under-report her demand to increase her utility by a factor of $1+\rho_1$.
\end{restatable}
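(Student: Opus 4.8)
The plan is to write down an explicit instance and an explicit under-reporting schedule for user~$1$, and then to verify, by evaluating the Dynamic DRF program \eqref{eq:mult:algo} epoch by epoch, that the slack in \cref{lem:mult:aux_bound} is realized with equality throughout. For the instance I would use two resources and static ratios: user~$1$ gets ratios $(1,\theta)$ for some $\theta\in(0,1]$, so resource~$1$ is her dominant resource, and user~$2$ gets ratios $(\epsilon,1)$, so that $\rho_1 = \max\bigl\{\tfrac{w_1}{w_2\epsilon},\tfrac{w_1\theta}{w_2}\bigr\} = \tfrac{w_1}{w_2\epsilon}$, as required, the maximum being attained on resource~$1$ against user~$2$. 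Because under-reporting cannot help when only two users are ever active (the vacated resources can only make the other user weakly better off, and the totals are conserved, as in \cref{ex:counter_10_9}), the instance also needs a few auxiliary users; I would give them large weights --- so they do not enlarge $\rho_1$ --- and ratios that keep resource~$1$ contested, and use them both as ``buffers'' that hold resources in some phases and as the parties who ultimately lose allocation. All demands are piecewise-constant over a constant number of phases, each phase repeated $N$ times so that a limit can be taken.

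The deviation is: user~$1$ reports demand $0$ throughout an under-reporting phase and reports truthfully afterwards; this realizes the asymmetry discussed just before \cref{lem:mult:more_less}. In each under-reporting epoch the small amount $\bar r_1^t$ of resource~$1$ that user~$1$ vacates is, by design of the auxiliary users' demands (they are satisfied exactly, so they do not want the surplus), routed entirely to user~$2$; because user~$2$ needs only an $\epsilon$-fraction of resource~$1$ per unit of allocation, and because $\bar r_1^t$ (hence also $\bar r_2^t$) is kept small so that user~$2$ never reaches her resource-$2$ ceiling, her normalized cumulative allocation grows by exactly $\rho_1^t\,\bar r_1^t/w_1$ more than in the truthful run --- i.e.\ \cref{lem:mult:aux_bound} holds with equality in every such epoch. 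After the phase user~$2$ therefore sits $\rho_1 R_1/w_1$ (normalized) above her truthful position, where $R_1=\sum_t\bar r_1^t$ is essentially all of user~$1$'s truthful allocation in the instance. In the final phase user~$1$ reports her large true demand; the other users' demands there are set so that (i) in the truthful run user~$1$ is by then not strictly behind anyone who can still gain, so she receives (asymptotically) nothing and $U_1=(1+o(1))R_1$, while (ii) in the deviating run user~$1$, being artificially far behind, is carried up by the max-min rule until her cumulative allocation reaches $(1+\rho_1)R_1$ (the level user~$2$ occupies because of the inflation), the difference being absorbed by the auxiliary users rather than handed back by user~$2$. Dividing, $\hat U_1/U_1\to 1+\rho_1$ as $N\to\infty$.

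General weights $w_1,w_2$ need no change to the ratios, only a rescaling of the max-min step by the weights; for any $\alpha\in[0,1]$ I would scale $\mathcal R$ and all demands up together, so that each per-epoch guarantee $g_i(\cdot)$ is dominated by the equilibrium allocation and never binds. The step I expect to be the real work --- and the one that simple two-user or over-simple three-user instances founder on --- is choosing the auxiliary users' demands so that three things hold simultaneously: the vacated resource~$1$ flows to user~$2$ and not to an auxiliary user; user~$2$ stays strictly below her resource-$2$ ceiling for the entire under-reporting phase; and, in the final phase, the truthful run gives user~$1$ (asymptotically) nothing --- otherwise the ratio is diluted below $1+\rho_1$ --- while the deviating run recovers the full $\rho_1 R_1$ at the expense of users who cannot later reclaim it (so the recovery is not ``conservation-neutral''). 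Each of these is a statement about who is ahead of whom in each phase, and checking them all is an explicit but delicate epoch-by-epoch computation with \eqref{eq:mult:algo}; \cref{lem:mult:aux_bound} is the quantitative guide, pinning down exactly what the extremal instance must look like and confirming that no instance beats $1+\rho_1$.
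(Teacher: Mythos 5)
Your instance is essentially the paper's: two resources, static ratios $(1,\theta)$ for user $1$ and $(\epsilon,1)$ for user $2$, plus large auxiliary groups that keep resource $1$ contested in the under-reporting phase (so the vacated resource flows to user $2$ at exchange rate $1/\epsilon$ without hitting her resource-$2$ ceiling) and that tune the budget of the catch-up phase. The paper does all of this in just two epochs. But there is a genuine gap at exactly the point you flag as ``the real work'': you fix $\theta\in(0,1]$ and never take a limit in it, and with $\theta$ bounded away from $0$ the construction provably cannot reach $1+\rho_1$. In the catch-up phase the deviating user $1$ must receive about $\rho_1 R_1$ of her dominant resource, so the bottleneck there cannot be resource $1$ (the truthful run would then hand her a $\Theta(1)$ fraction of that same budget, since she and user $2$ are level and both still gaining); it must be resource $2$, whose net budget is then at least $\theta\rho_1 R_1$, of which the truthful user $1$ collects a $\frac{1}{\theta+w_2}$-type share, i.e.\ $\Theta(\theta)\cdot R_1$. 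Making this $o(R_1)$ forces $\theta\to 0$. This is precisely the paper's $\delta\to 0$ limit (see \eqref{eq:mult:lower_ratio}, where the denominator carries the extra term $\frac{\delta/\epsilon}{1+\delta}$ that must be killed); note $\rho_1=\max\{\frac{w_1}{w_2\epsilon},\frac{w_1\theta}{w_2}\}$ is unaffected by sending $\theta\to 0$, so the limit is harmless. Your claim that ``user $1$ is by then not strictly behind anyone who can still gain, so she receives asymptotically nothing'' is false as stated: she is exactly level with user $2$, who can still gain, so max-min grows them together, and only the vanishing $\theta$ makes her share of the catch-up budget negligible.

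A secondary but real problem is your treatment of $\alpha$. Scaling $\mathcal R$ and all demands up together rescales the guarantee $g_i(d_i^t)=\min\{d_i^t,\alpha\mathcal R w_i/\sum_k w_k\}$ by the same factor as the equilibrium allocations, so it binds after the rescaling exactly when it bound before; this is a no-op. The paper's fix is different: add many dummy users with zero demand (and weights/ratios matching an existing non-deviating user), which leaves all allocations and $\rho_1$ unchanged but inflates $\sum_k w_k$ and drives every guarantee to $0$. Finally, two small points: giving the auxiliary users large weights is unnecessary (the paper gives them weight $w_2$ and simply chooses their ratios so they do not enlarge $\rho_1$), and the last clause of your plan --- ``confirming that no instance beats $1+\rho_1$'' --- is the upper bound of \cref{thm:mult:upper}, which is not part of proving this lower bound.
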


We briefly sketch the proof of the theorem when users have time-varying ratios, there is only one resource, users have equal weights, and it does not necessarily hold that $\max_{q\in[m]}a_{iq}^t = 1$. With some complicated details the full proof reduces to the same instance as the one we present below.

\begin{proof}[Proof sketch]
    There are two users, one resource, two epochs, $w_1=w_2=1$, and $\a = 0$.
    \begin{enumerate}
        \item In epoch $1$, user $1$ has a ratio of $a_1^1 = 1$, user $2$ has a ratio of $a_2^1 = \e$, the amount available of the resource is $1$, and both users have infinite demand. \nameM allocates the resource such that $r_1^1 = r_2^1 = \frac{1}{1+\e}$.
        
        \item In epoch $2$, user $1$ has a ratio of $a_1^2 = \d \le 1$, user $2$ has a ratio of $a_2^2 = 1$, the amount available of the resource\footnote{We can assume that in every epoch there is a different total amount of resources without loss of generality: we can split an epoch into $x$ sub-epochs where in each there are $1/x$ available resources.} is $\d/\e$, and both have infinite demand. \nameM, since the users' past allocations are the same, allocates the resource such that $r_1^2 = r_2^2 = \frac{\d/\e}{1+\d}$.
    \end{enumerate}
    
    This results in $R_1^2 = R_2^2 = \frac{1}{1+\e} + \frac{\d/\e}{1+\d}$. We note that $
        \rho_1
        =
        \max\left\{ \frac{a_1^1}{a_2^1}, \frac{a_1^2}{a_2^2} \right\}
        =
        \max\left\{ \frac{1}{\e}, \frac{\d}{1} \right\}
        =
        \frac{1}{\e}$.
    
    Now we study what happens when user $1$ deviates:
    \begin{enumerate}
        \item If in epoch $1$ user $1$ demands $0$ resources, then \nameM allocates $\hat r_1^1 = 0$ and $\hat r_2^1 = 1/\e$.
        \item In epoch $2$, the previous misreporting results in user $1$ having a lower total allocation, making \nameM allocate $\hat r_1^2 = 1/\e$ and $\hat r_2^2 = 0$.
    \end{enumerate}
    
    This results in $\hat R_1^2 = \hat R_2^2 = 1/\e$. The incentive compatibility ratio is
    \begin{equation}\label{eq:mult:lower_ratio}
        \frac{\hat R_1^2}{R_1^2}
        =
        \frac{1/\e}{\frac{1}{1+\e} + \frac{\d/\e}{1+\d}}
        \ou[\to]{\d\to 0}{}
        1 + \frac{1}{\e}
        =
        1 + \rho_1
    \end{equation}
\end{proof}

We defer the full proof in \cref{sec:app:mult}.
We should note that in the above proof sketch and in the full proof of \cref{thm:mult:lower}, even though $\rho_1 = 1/\epsilon$, a ratio of user $1$ is $\d$, which is much less than $\e$. If we wanted a bound in terms of $\rho = \max_{i\in [n]} \rho_i$, the proof of \cref{thm:mult:lower} for $w_1 = w_2$ (and the proof sketch above, see \eqref{eq:mult:lower_ratio} by taking $\d=\e$) easily yields a $\frac{1+\rho}{2}$ lower bound.


\subsection{Envy-freeness}

In this section we prove that \nameM is envy-free.

\begin{theorem}\label{thm:mult:envy}
    For every $\a\in[0,1]$, \nameM is envy-free according to the weights $w_1,\ldots,w_n$, i.e., for every epoch $t$, no user $i$ envies the total allocation of user $j$ scaled by $w_i/w_j$:
    \begin{equation*}
        U_i^t
        \ge
        \sum_{\tau=1}^t \min\left\{ d_i^\tau,\, \frac{w_i}{w_j} r_j^\tau 
            \min_{q:a_{iq}^\tau>0} \frac{a_{jq}^\tau}{a_{iq}^\tau}
        \right\}
    \end{equation*}
\end{theorem}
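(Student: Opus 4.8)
The plan is to fix a truthful user $i$ and another user $j$, set $\l_{ij}^\tau := \min_{q:\,a_{iq}^\tau>0} a_{jq}^\tau/a_{iq}^\tau$ and, for each epoch $\tau$, let $v_i^\tau := \min\{d_i^\tau,\, \tfrac{w_i}{w_j} r_j^\tau \l_{ij}^\tau\}$ be the $\tau$-th summand of the right-hand side — the utility $i$ would get from $j$'s bundle scaled by $w_i/w_j$. Since $i$ is truthful, $u_i^\tau=\min\{d_i^\tau,r_i^\tau\}=r_i^\tau$ (because $r_i^\tau\le d_i^\tau$ by \eqref{eq:mult:algo}), so $U_i^t=R_i^t$ and it suffices to prove $R_i^t\ge\sum_{\tau=1}^t v_i^\tau$. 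I would first record two easy facts: (a) ratios are normalized, so $i$ has a resource with $a_{iq}^\tau=1$, which is one of the terms in the minimum defining $\l_{ij}^\tau$, hence $\l_{ij}^\tau\le a_{jq}^\tau\le 1$; consequently $v_i^\tau\le \tfrac{w_i}{w_j} r_j^\tau$ for every $\tau$, and summing, $\sum_{\tau\le s} v_i^\tau\le \tfrac{w_i}{w_j} R_j^s$ for all $s$; and (b) the box constraint in \eqref{eq:mult:algo} gives $r_i^\tau\ge g_i(d_i^\tau)$ always.

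The heart of the argument is the claim: \emph{if $r_i^\tau<v_i^\tau$ in some epoch $\tau$, then $R_i^\tau/w_i\ge R_j^\tau/w_j$.} To prove it, note $r_i^\tau<v_i^\tau\le d_i^\tau$, so $i$'s demand is unmet in epoch $\tau$, and $v_i^\tau>0$ forces $\l_{ij}^\tau>0$, i.e.\ $a_{jq}^\tau>0$ whenever $a_{iq}^\tau>0$. Next I would rule out $r_j^\tau\le g_j(d_j^\tau)$: in that case fact (a) gives $v_i^\tau\le \tfrac{w_i}{w_j} r_j^\tau\le \tfrac{w_i}{w_j} g_j(d_j^\tau)\le \tfrac{\a\mathcal R w_i}{\sum_k w_k}$, which combined with $v_i^\tau\le d_i^\tau$ yields $v_i^\tau\le g_i(d_i^\tau)\le r_i^\tau$, contradicting $r_i^\tau<v_i^\tau$. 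So $r_j^\tau>g_j(d_j^\tau)$. Then the transfer ``increase $r_i^\tau$ by $\e$ and decrease $r_j^\tau$ by $\e/\l_{ij}^\tau$'' is feasible in \eqref{eq:mult:algo} for small $\e>0$: it respects $i$'s demand (since $r_i^\tau<d_i^\tau$) and $j$'s guarantee (since $r_j^\tau>g_j(d_j^\tau)$), and every resource's usage weakly decreases because for each $q$ with $a_{iq}^\tau>0$ we have $a_{iq}^\tau-a_{jq}^\tau/\l_{ij}^\tau\le 0$ (as $\l_{ij}^\tau\le a_{jq}^\tau/a_{iq}^\tau$), while for $q$ with $a_{iq}^\tau=0$ usage strictly drops. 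Exactly as in the proof of \cref{lem:mult:more_less}, leximin-optimality of \nameM then forces $R_i^\tau/w_i\ge R_j^\tau/w_j$ — otherwise it would have been strictly fairer to shift some of $j$'s allocation to $i$.

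Given the claim, the theorem follows quickly. If $r_i^\tau\ge v_i^\tau$ for every $\tau\le t$, then $R_i^t=\sum_{\tau\le t} r_i^\tau\ge\sum_{\tau\le t} v_i^\tau$. Otherwise let $\tau^\ast\le t$ be the last epoch with $r_i^{\tau^\ast}<v_i^{\tau^\ast}$; the claim together with fact (a) gives $R_i^{\tau^\ast}\ge \tfrac{w_i}{w_j} R_j^{\tau^\ast}\ge\sum_{\tau\le\tau^\ast} v_i^\tau$, and adding $r_i^\tau\ge v_i^\tau$ over $\tau\in(\tau^\ast,t]$ yields $R_i^t\ge\sum_{\tau\le t} v_i^\tau$, as needed. (Equivalently, one can phrase this as an induction on $t$, the inductive step being trivial when $r_i^t\ge v_i^t$ and bypassing the hypothesis via the claim when $r_i^t<v_i^t$.) The step I expect to be the main obstacle is the leximin/feasible-transfer argument inside the claim — in particular verifying that the proportional transfer from $j$ to $i$ never breaks a resource or a guarantee constraint, so that the \cref{lem:mult:more_less}-style ``otherwise it would have been fairer'' reasoning applies cleanly; the remaining ingredients (the $\l_{ij}^\tau\le 1$ bound, the guarantee case split, and the telescoping over epochs) are routine.
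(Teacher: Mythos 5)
Your proposal is correct and follows essentially the same route as the paper's proof: the central claim that $r_i^\tau < v_i^\tau$ forces $R_i^\tau/w_i \ge R_j^\tau/w_j$ (via $r_i^\tau<d_i^\tau$, $r_j^\tau>g_j(d_j^\tau)$, a feasible proportional transfer, and leximin optimality) together with the bound $\l_{ij}^\tau\le 1$ is exactly the paper's argument, which merely packages it as a first-envy-epoch contradiction rather than your last-violating-epoch telescoping. The only cosmetic difference is how $r_j^\tau>g_j(d_j^\tau)$ is derived (you rule out the opposite directly from $v_i^\tau\le g_i(d_i^\tau)$; the paper goes through $r_i^\tau/w_i<r_j^\tau/w_j$), and both are sound.
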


The key to this proof is to study the (potential) first epoch $t$ user $i$ envies user $j$. In the simple case that $w_i = w_j$, that proves that $r_j^t > r_i^t$, which leads to a reasoning similar to \cref{lem:mult:more_less} proving that $R_j^t \le R_i^t$. However, since user $j$ has an allocation smaller than $i$'s it is impossible for user $i$ to envy her.

\begin{proof}
    Towards a contradiction assume that in epoch $t$ user $i$ envies user $j$ for the first time, i.e.,
    \begin{equation}\label{eq:mult:75}
        U_i^t
        =
        R_i^t
        <
        \sum_{\tau=1}^t \min\left\{ d_i^\tau,\, \frac{w_i}{w_j} r_j^\tau 
            \min_{q:a_{iq}^\tau>0} \frac{a_{jq}^\tau}{a_{iq}^\tau}
        \right\}
    \end{equation}
    and
    \begin{equation*}
        U_i^{t-1}
        =
        R_i^{t-1}
        \ge
        \sum_{\tau=1}^{t-1} \min\left\{ d_i^\tau,\, \frac{w_i}{w_j} r_j^\tau 
            \min_{q:a_{iq}^\tau>0} \frac{a_{jq}^\tau}{a_{iq}^\tau}
        \right\}
    \end{equation*}%
    Subtracting the two inequalities above we get
    \begin{equation*}
        r_i^t
        <
        \min\left\{ d_i^t,\, \frac{w_i}{w_j} r_j^t
            \min_{q:a_{iq}^t > 0} \frac{a_{jq}^t}{a_{iq}^t}
        \right\}
    \end{equation*}
    Using the above inequality we can prove that
    \begin{itemize}
        \item $\min_{q: a_{iq}^t > 0} \nicefrac{a_{jq}^t}{a_{iq}^t} > 0$, i.e., user $j$ uses every resource that user $i$ uses.
        \item $r_i^t < d_i^t$. This inequality proves that $g_i(d_i^t) = \a\mathcal R w_i /(\sum_k w_k)$, since otherwise $g_i(d_i^t) = d_i^t = r_i^t$.
        \item $r_i^t/w_i < r_j^t/w_j$, since $\max_q a_{iq}^t = \max_q a_{jq}^t = 1$. This inequality, since $r_i^t \ge g_i(d_i^t) = \a\mathcal R w_i /(\sum_k w_k)$, proves that $r_j^t > \a\mathcal R w_j /(\sum_k w_k) \ge g_j(d_j^t)$.
    \end{itemize}
    These three facts, $r_i^t < d_i^t$, $r_j^t > g_j(d_j^t)$, and $\forall q: a_{iq}^t > 0 \implies a_{jq}^t > 0$, prove that it would have been feasible to increase $r_i^t$ (thus increasing $R_i^t$) by decreasing $r_j^t$ (thus decreasing $R_j^t$).
    This proves that, because our mechanism allocates resources max-min fairly, $R_i^t/w_i \ge R_j^t/w_j$. The last inequality contradicts with \eqref{eq:mult:75}, which proves that $R_i^t/w_i < R_j^t/w_j$ since $\max_q a_{iq}^t = \max_q a_{jq}^t = 1$.
\end{proof}

\begin{corollary}\label{cor:mult:envy}
    \cref{thm:mult:envy} also proves that \nameS is envy-free (by restricting $m=1$ and $w_i = a_{i1}^t = 1$ for every user $i$ and epoch $t$) and also that \nameW, which will be studied in \cref{sec:coalitions}, is envy-free (by restricting $m=1$ and $a_{i1}^t = 1$ for every user $i$ and epoch $t$).
\end{corollary}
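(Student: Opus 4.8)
The plan is to observe that Dynamic MMF (resp.\ Dynamic weighted MMF) is \emph{literally} the special case of \nameM obtained by the stated substitutions, and that under those substitutions the conclusion of \cref{thm:mult:envy} becomes verbatim the (weighted) envy-freeness condition of \cref{sec:prelims}. So there is essentially nothing new to prove: the work is the bookkeeping check that both the two mechanisms and the two notions of envy-freeness line up exactly.

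First I would specialize the mechanism. Setting $m=1$ and $a_{i1}^t=1$ for every user $i$ and epoch $t$, the normalization requirement $\max_q a_{iq}^t=1$ is trivially met, the resource constraint $\forall q\in[m],\ \sum_i r_i^t a_{iq}^t\le\mathcal R$ collapses to $\sum_i r_i^t\le\mathcal R$, and the guarantee $g_i(d_i^t)=\min\{d_i^t,\ \alpha\mathcal R w_i/\sum_k w_k\}$ becomes the guarantee of the one-resource instance. Keeping general weights $w_i$ but $m=1$, $a_{i1}^t=1$ gives the weighted variant of program~\eqref{eq:single:algo}, i.e.\ \nameW as defined in \cref{sec:coalitions}; additionally taking $w_i=1$ for all $i$ makes $\sum_k w_k=n$, turns the objective ``apply max-min fairness to $(R_i^{t-1}+r_i^t)/w_i$'' into ``apply max-min fairness to $R_i^{t-1}+r_i^t$'', and turns the box constraint into $\min\{d_i^t,\alpha\mathcal R/n\}\le r_i^t\le d_i^t$ — precisely program~\eqref{eq:single:algo}, so this specialization of \nameM \emph{is} \nameS.

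Next I would specialize the conclusion. With one resource, $\min_{q:a_{iq}^\tau>0}a_{jq}^\tau/a_{iq}^\tau=1/1=1$, so the right-hand side of the inequality in \cref{thm:mult:envy} becomes $\sum_{\tau=1}^t\min\{d_i^\tau,\frac{w_i}{w_j}r_j^\tau\}$, which is exactly the total utility user $i$ would obtain from being handed user $j$'s allocation scaled by $w_i/w_j$ (she would receive $\frac{w_i}{w_j}r_j^\tau$ units of the resource each epoch, for utility $\min\{d_i^\tau,\frac{w_i}{w_j}r_j^\tau\}$). Thus \cref{thm:mult:envy} says $U_i^t\ge$ that quantity for every $t$ and every pair $i,j$, which is verbatim weighted envy-freeness for \nameW; setting $w_i=1$ throughout yields ordinary envy-freeness for \nameS.

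There is no real obstacle. The only point requiring a moment's care is that \cref{thm:mult:envy} was stated under the standing conventions of \cref{sec:mult}, so one should confirm its proof still goes through when $m=1$; it does, since with a single resource ``$a_{jq}^t>0$ whenever $a_{iq}^t>0$'' holds automatically and the feasibility-of-transfer step (increase $r_i^t$ by decreasing $r_j^t$ once $r_i^t<d_i^t$ and $r_j^t>g_j(d_j^t)$) is immediate. Hence \cref{cor:mult:envy} follows directly.
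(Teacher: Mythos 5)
Your proposal is correct and matches the paper's intent exactly: the corollary is meant to follow by direct specialization ($m=1$, $a_{i1}^t=1$, and additionally $w_i=1$ for \nameS), under which the mechanism \eqref{eq:mult:algo} collapses to \eqref{eq:single:algo} (resp.\ \eqref{eq:gen:algo}) and the inequality of \cref{thm:mult:envy} becomes verbatim the (weighted) envy-freeness condition of \cref{sec:prelims}. The paper offers no further argument, so your bookkeeping check is precisely the intended proof.
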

\section{Generalizations in Single Resource Settings}
\label{sec:gen}

\subsection{Coalitions and Dynamic Weighted Max-min Fairness}
\label{sec:coalitions}

In this section we present a generalized version of our previous results: we extend the results of \cref{sec:single} to \nameW and when users collude to increase their overall utility.

\medskip
\noindent\textbf{Notation.}
We begin with the necessary definitions for this section.

\paragraph{Dynamic weighted Max-min Fairness}
\nameW is a special case of \nameM where there is only one resource, $m=1$, i.e., every epoch $t$ the following problem is solved
\begin{align}\label{eq:gen:algo}
\begin{split}
    \textrm{choose }\;\; & r_1^t, r_2^t, \ldots, r_n^t \\
    \textrm{applying max-min fairness on }\;\; & \frac{R_1^{t-1} + r_1^t}{w_1},\; \frac{R_2^{t-1} + r_2^t}{w_2},\; \ldots,\; \frac{R_n^{t-1} + r_n^t}{w_n} \\
    \textrm{given the constraints }\;\; &
    \sum_{i\in [n]} r_i^t \le \mathcal{R}\;\;\;\textrm{ and } \;\;\;
    \min\left\{d_i^t, \a \frac{\mathcal R w_i}{\sum_k w_k}\right\}\le r_i^t \le d_i^t, \forall i
\end{split}
\end{align}
where $\a\in[0,1]$ and $w_1,\dots,w_n$ are positive numbers, similar to the definition of \nameM. Similar to previous sections, for a given $\a$, we denote $g_i(d_i^t) = \min\left\{d_i^t, \a\mathcal R w_i/\sum_k w_k\right\}$ the \textit{guarantee} of user $i$ in epoch $t$. It is easy to see that \nameW satisfies $\a$-sharing incentives.

\begin{theorem}\label{thm:gen:si}
    When every user is guaranteed an $\a$-fraction of their fair share, \nameW satisfies $\a$-sharing incentives.
\end{theorem}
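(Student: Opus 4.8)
The plan is to observe that this follows directly from the per-epoch guarantee built into the constraints of \eqref{eq:gen:algo}, so the proof is short and essentially a case analysis on a single scalar inequality. Write $f_i = \mathcal R w_i/\sum_{k}w_k$ for user $i$'s fair share, so that the guarantee is $g_i(d_i^t)=\min\{d_i^t,\a f_i\}$. First I would note that the program \eqref{eq:gen:algo} is feasible: setting $r_i^t = g_i(d_i^t)$ for all $i$ satisfies every constraint, since $\sum_i g_i(d_i^t)\le \sum_i \a f_i = \a\mathcal R\le\mathcal R$ and $g_i(d_i^t)\in[g_i(d_i^t),d_i^t]$. Hence \nameW always returns an allocation with $r_i^t\ge g_i(d_i^t)$ and $r_i^t\le d_i^t$ for every $i$ and $t$.

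Next, fix a truthful user $i$. From $g_i(d_i^t)\le r_i^t\le d_i^t$ her realized per-epoch utility is $u_i^t=\min\{r_i^t,d_i^t\}=r_i^t\ge g_i(d_i^t)=\min\{d_i^t,\a f_i\}$. The only real content is the pointwise comparison
\begin{equation*}
    \min\{d_i^t,\a f_i\}\;\ge\;\a\min\{d_i^t,f_i\},
\end{equation*}
which I would verify by splitting into the three regimes $d_i^t\le \a f_i$, $\a f_i< d_i^t\le f_i$, and $d_i^t> f_i$; in each regime the left side is at least the right side because $\a\in[0,1]$. The right-hand side $\a\min\{d_i^t,f_i\}$ is exactly $\a$ times the utility user $i$ would get in epoch $t$ if she were simply handed her fair share $f_i$ (recall utility is demand-capped).

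Finally I would sum over epochs: $U_i^t=\sum_{\tau=1}^t u_i^\tau\ge \a\sum_{\tau=1}^t\min\{d_i^\tau,f_i\}$, and the sum on the right is precisely the total utility user $i$ obtains from being allocated her fair share $f_i$ in every epoch. This is the definition of $\a$-sharing incentives, completing the proof. There is no genuine obstacle here; the only points that need a word of care are the feasibility remark (so that the guarantee is actually honored) and the elementary $\min$-inequality above, and both are immediate. This argument is identical in structure to the remarks already made for \nameS (\cref{thm:single:si}) and \nameM (\cref{thm:mult:si}), specialized to $m=1$.
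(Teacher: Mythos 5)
Your proof is correct and is exactly the argument the paper has in mind: the paper omits a proof entirely (stating only that the per-epoch guarantee makes this "easy to see"), and your write-up correctly fills in the details — feasibility of the guarantee constraints, the pointwise inequality $\min\{d_i^t,\a f_i\}\ge\a\min\{d_i^t,f_i\}$, and summation over epochs.
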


\paragraph{Coalitions}
When users form coalitions they try to increase the sum of their utilities by each member of the coalition deviating. We bound that increase, i.e., if the set $I\subset [n]$ of users forms a coalition and report demands $\{\hat d_i^t\}_{i\in I,t}$ instead of $\{d_i^t\}_{i\in I,t}$, then for some $\g\ge 1$ and for all $t$ we want to prove that $\sum_{i\in I}\hat U_i^t \le \g \sum_{i\in I}U_i^t$. Proofs of the results in this section are included in \cref{sec:app:coalitions}.

\subsubsection{Over-reporting and Incentive Compatibility Upper Bound}

Analogously to \cref{thm:single:no_over_report} users have nothing to gain in \nameW by over-reporting, even when colluding.

\begin{restatable}{theorem}{generalover}\label{thm:gen:no_over_report}
    Let $I\subset [n]$ be a set of users that form a coalition. Then, for any value of $\a\in[0,1]$ used by \nameW, the users in $I$ have nothing to gain over-reporting their demand.
\end{restatable}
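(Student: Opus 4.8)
The plan is to mimic the structure of the single-resource proof of \cref{thm:single:no_over_report}, lifting it from a single deviating user to a colluding set $I$, and from unweighted to weighted max-min fairness. First I would establish the weighted analogues of the two auxiliary lemmas that powered the unweighted argument. The weighted version of \cref{lem:single:more_less} should follow exactly as stated in \cref{lem:mult:more_less} with $m=1$ and $a_{i1}^t=1$: if user $i$ has $\bar r_i^t<\hat r_i^t$, $\hat d_i^t\le\bar d_i^t$ and user $j$ has $\bar r_j^t>\hat r_j^t$, $\bar d_j^t\le\hat d_j^t$, then $\hat R_i^t/w_i\le\hat R_j^t/w_j$ and $\bar R_i^t/w_i\ge\bar R_j^t/w_j$. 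Next I would prove a weighted, coalition-aware version of \cref{lem:single:aux_bound}: fixing an epoch $t$ and two outcomes, with $\bar d$ the coalition's true demands and $\hat d$ differing only in that each $i\in I$ under-reports ($\hat d_i^t\le\bar d_i^t$), the quantity $\Phi^t=\sum_{k}w_k\big((\hat R_k^t-\bar R_k^t)/w_k\big)^+$ increases by at most $\sum_{i\in I}\bar r_i^t\,\One{\hat d_i^t<\bar d_i^t}$ — i.e., the total (weight-normalized, reweighted) over-allocation can grow in one epoch by no more than the coalition's forgone allocation. The proof is the same contradiction argument: if $\Phi^t-\Phi^{t-1}$ exceeds that bound, then $\sum_{k\in P^t}(\hat r_k^t-\bar r_k^t)$ does too, total allocated resources do not increase when the coalition under-reports, so some $i\in P^t$ has $\hat r_i^t>\bar r_i^t$ and some $j\notin P^t$ has $\bar r_j^t>\hat r_j^t$; one argues $j\notin I$ (if $\hat d^t=\bar d^t$ nothing to prove; otherwise the strict inequality on non-coalition, non-deviating indices forces $j\notin I$, whence $\hat d_j^t=\bar d_j^t$), and then the weighted more-less lemma contradicts $i\in P^t$, $j\notin P^t$.

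With these in hand, the heart of the argument is the coalition analogue of \cref{lem:single:over_aux}: fix $T_0\le T$; assume every $i\in I$ reports no over-report in epochs $T_0+1,\dots,T$; then changing the $T_0$-reports of the coalition members from over-reports to truthful does not decrease $\sum_{i\in I}U_i^T$. I would construct the comparison outcome $\bar d$ agreeing with $\hat d$ everywhere except that $\bar d_i^{T_0}=d_i^{T_0}$ for each $i\in I$ that over-reported at $T_0$. As in the single-user case, at epoch $T_0$ the coalition members collectively may get at most $X:=\sum_{i\in I}(\hat r_i^{T_0}-\bar r_i^{T_0})^+\ge 0$ more resources, none of which count toward their utility (they are above true demand), while all other users get no more; and since from $T_0+1$ on the coalition does not over-report, their utility equals their allocation there. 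Summing the weighted aux-bound over $t\in[T_0+1,T]$ (where $\hat d=\bar d$ throughout, so the right-hand side is $0$) gives $\Phi^T\le\Phi^{T_0}$, and since $\hat R_k^{T_0}\le\bar R_k^{T_0}$ for $k\notin I$ and the coalition's total over-allocation at $T_0$ is bounded by $X$, I would extract $\sum_{i\in I}(\hat R_i^T-\bar R_i^T)\le X$, hence $\sum_{i\in I}\hat U_i^T\le\sum_{i\in I}\bar U_i^T$. Finally, Theorem \ref{thm:gen:no_over_report} follows by taking $T_0$ to be the last epoch at which some coalition member over-reports and applying the lemma inductively backward, exactly as in the proof of \cref{thm:single:no_over_report}.

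The main obstacle I anticipate is the bookkeeping in the coalition version of the over-reporting lemma: with several members deviating at $T_0$, the "excess" resources $X$ are distributed across $I$ in a way that is not individually controlled, so I must track only the aggregate $\sum_{i\in I}(\hat R_i^{t}-\bar R_i^{t})$ rather than per-user quantities, and I need the weighted aux-bound to be phrased so that it still yields a clean bound on this aggregate (this is why it is convenient to bound $\Phi^t$, whose positive part automatically absorbs the non-coalition users' non-positive contributions). A secondary subtlety is handling the minimum-guarantee terms $g_i(d_i^t)$ under re-weighting: one must check, as in the proof of \cref{lem:single:more_less}, that $\bar r_j^t>\hat r_j^t\ge g_j(\hat d_j^t)\ge g_j(\bar d_j^t)$ using monotonicity of $g_j$, so that decreasing $\bar r_j^t$ remains feasible in \eqref{eq:gen:algo}. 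Neither of these is conceptually new relative to \cref{sec:single}; they are the places where care is required.
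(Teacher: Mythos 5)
Your proposal matches the paper's own proof essentially step for step: the paper proves exactly your weighted more--less lemma (\cref{lem:gen:more_less}), your coalition potential $\Phi^t=\sum_k w_k\big((\hat R_k^t-\bar R_k^t)/w_k\big)^+$ coincides with the paper's $\sum_k(\hat R_k^t-\bar R_k^t)^+$ in \cref{lem:gen:aux_bound}, and the backward induction from the last over-reporting epoch via a coalition analogue of \cref{lem:single:over_aux} is precisely \cref{lem:gen:over_aux}. The only detail you assert rather than argue---that any coalition member who gains at $T_0$ must already have had her true demand met, so the surplus is wasted and non-members get no more---is proved in the paper via the same more--less machinery you already set up, so the plan is sound.
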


Next we show a generalized version of \cref{thm:single:upper}. More specifically, that users in the coalition cannot misreport their demands to increase their total utility by a factor more than $2$ and if there is no coalition ($I = \{i\}$), user $i$ cannot increase her utility by a factor larger than $1+\max_{j\neq i}\frac{w_i}{w_i+w_j}$ (which is strictly less than $2$). We should note that this result is much different than the results of \cref{sec:mult}: using the notation of that section, here we have that $\rho_i = \max_{j\ne i}\{w_i/w_j\}$. Even though $1+\rho_i$ is possibly unbounded, the incentive compatibility ratio here is at most $2$.

\begin{restatable}{theorem}{generalupper}\label{thm:gen:upper}
    Let $I\subset [n]$ be a set of users that form a coalition and $w_1,\ldots,w_n$ be any weights, according to which \nameW allocates resources. Then, for any $\a\in[0,1]$, any deviation of the users in $I$, and any epoch $t$ it holds that
    \begin{equation*}
        \sum_{i\in I}\hat U_i^t \le
        2 \sum_{i\in I}U_i^t
    \end{equation*}
    Additionally, when $I=\{i\}$ for any user $i$, it holds that $
        \hat U_i^t \le
        \left(1+\max\limits_{j\neq i}\dfrac{w_i}{w_i+w_j}\right) U_i^t
    $.
\end{restatable}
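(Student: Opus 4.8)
The plan is to mimic the structure of the proof of \cref{thm:single:upper}, but tracking the coalition $I$ as a whole rather than a single user. First I would establish the analogues of the two key lemmas from \cref{sec:single} in the weighted, coalition setting. The weighted version of \cref{lem:single:more_less} is already available as \cref{lem:mult:more_less} with $m=1$, $a_{i1}^t=1$: if $\bar r_i^t<\hat r_i^t$ with $\hat d_i^t\le\bar d_i^t$ and $\bar r_j^t>\hat r_j^t$ with $\bar d_j^t\le\hat d_j^t$, then $(\hat R_i^t-\bar R_i^t)/w_i\le(\hat R_j^t-\bar R_j^t)/w_j$. For the auxiliary bound, I would prove a coalition version of \cref{lem:single:aux_bound}: tracking the potential $\Phi^t=\sum_k\bigl(\hat R_k^t-\bar R_k^t\bigr)^+$ (unweighted, since total resources $\le\mathcal R$ and the packing constraint $\sum_i r_i^t\le\mathcal R$ is weight-independent), show $\Phi^t-\Phi^{t-1}\le\sum_{i\in I}\bar r_i^t\,\One{\hat d_i^t<\bar d_i^t}$ when members of $I$ do not over-report. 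The contradiction argument is the same: if the increase were larger, then $\sum_{k\in P^t}(\hat r_k^t-\bar r_k^t)$ exceeds the resources freed by the under-reporting coalition members, so there is a winner $i\in P^t$ and, since total allocation does not increase, a loser $j\notin P^t$; for $j$ one can assume $j\notin I$ so $\hat d_j^t=\bar d_j^t$, and \cref{lem:mult:more_less} gives $(\hat R_i^t-\bar R_i^t)/w_i\le(\hat R_j^t-\bar R_j^t)/w_j$, contradicting $i\in P^t$, $j\notin P^t$ (here one must be slightly careful that $P^t$ is defined by the sign of $\hat R_k^t-\bar R_k^t$, but since $\hat R_k^t-\bar R_k^t>0 \iff (\hat R_k^t-\bar R_k^t)/w_k>0$, the weighted inequality still yields the contradiction). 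Given \cref{thm:gen:no_over_report}, it is then w.l.o.g. that all coalition members only under-report, so $\hat U_i^t=\hat R_i^t$ for $i\in I$, and summing the auxiliary bound over all epochs gives $\sum_k(\hat R_k^t-R_k^t)^+\le\sum_{i\in I}R_i^t=\sum_{i\in I}U_i^t$. Since $\sum_{i\in I}(\hat U_i^t-U_i^t)=\sum_{i\in I}(\hat R_i^t-R_i^t)\le\sum_k(\hat R_k^t-R_k^t)^+$, this already yields the factor-$2$ bound $\sum_{i\in I}\hat U_i^t\le 2\sum_{i\in I}U_i^t$.

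For the sharper bound when $I=\{i\}$, I would refine the last step exactly as in \cref{thm:single:upper}: at the first epoch $t$ where $\hat R_i^t>(1+\max_{j}\frac{w_i}{w_i+w_j})R_i^t$, we have $\hat r_i^t>r_i^t$, so some user $j$ has $\hat r_j^t<r_j^t$; with $\hat d_i^t\le d_i^t$ and $\hat d_j^t=d_j^t$, \cref{lem:mult:more_less} gives $(\hat R_i^t-R_i^t)/w_i\le(\hat R_j^t-R_j^t)/w_j$. Then
\begin{equation*}
    \left(1+\frac{w_j}{w_i}\right)\bigl(\hat R_i^t-R_i^t\bigr)
    \le \bigl(\hat R_i^t-R_i^t\bigr)+\frac{w_j}{w_i}\bigl(\hat R_i^t-R_i^t\bigr)
    \le \bigl(\hat R_i^t-R_i^t\bigr)+\bigl(\hat R_j^t-R_j^t\bigr)
    \le \sum_k\bigl(\hat R_k^t-R_k^t\bigr)^+ \le R_i^t,
\end{equation*}
so $\hat R_i^t\le\bigl(1+\frac{w_i}{w_i+w_j}\bigr)R_i^t$, contradicting the choice of $t$ once $j$ is chosen to (nearly) maximize $w_j/w_i$; taking the max over $j\ne i$ gives the stated bound. (One subtlety: the contradiction is derived for the particular losing $j$, and $\frac{w_i}{w_i+w_j}$ is decreasing in $w_j$, so one should set the threshold using $\max_{j\ne i}\frac{w_i}{w_i+w_j}$, i.e.\ the \emph{smallest} competing weight; the bound holds because the actual losing $j$ gives an even better inequality than the threshold. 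I would state this carefully.)

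The main obstacle I anticipate is the coalition version of \cref{lem:single:aux_bound}: the single-user proof uses that exactly one deviating user frees resources and that the loser $j$ can be taken outside the deviating set. With a coalition $I$, several members may under-report simultaneously and several may be winners; I need to argue that after accounting for all coalition members' freed resources, there is still a winner $i$ and a loser $j\notin I$ to which \cref{lem:mult:more_less} applies. The bookkeeping $\sum_{k\notin P^t}(\bar r_k^t-\hat r_k^t)>\sum_{i\in I}\bar r_i^t\One{\hat d_i^t<\bar d_i^t}$ combined with $\sum_{k\in P^t,\,k\in I}(\bar r_k^t-\hat r_k^t)\ge -\sum_{i\in I}\bar r_i^t$ (members of $I$ cannot lose more than their full allocation, and really they don't lose when under-reporting relative to $\bar{}$... this needs care) should isolate a loser outside $I$; getting the inequalities to line up cleanly is the delicate part. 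A second, smaller obstacle is that $\Phi^t$ is unweighted while \cref{lem:mult:more_less} is a weighted inequality — I must confirm throughout that $\hat R_k^t-\bar R_k^t$ and $(\hat R_k^t-\bar R_k^t)/w_k$ have the same sign so that membership in $P^t$ and the weighted comparisons remain compatible, which they do since weights are positive.
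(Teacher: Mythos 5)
Your proposal matches the paper's proof essentially step for step: reduce to under-reporting via \cref{thm:gen:no_over_report}, prove a weighted/coalition version of \cref{lem:single:more_less} and of the potential bound on $\sum_k(\hat R_k^t-\bar R_k^t)^+$ (the paper's \cref{lem:gen:aux_bound}, whose right-hand side uses the aggregate indicator $\One{\sum_{k\in I}\hat d_k^t<\sum_{k\in I}\bar d_k^t}\sum_{k\in I}\bar r_k^t$ rather than your per-user indicators, which makes the ``loser $j$ outside the deviating set'' bookkeeping you worry about go through cleanly but is otherwise equivalent after summing over epochs), then sum to get the factor $2$ and add the $\frac{w_j}{w_i}(\hat R_i^t-R_i^t)$ term from the losing user $j$ for the refined single-user bound. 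The delicate points you flag (sign compatibility of weighted vs.\ unweighted comparisons, choosing the threshold via the smallest competing weight) are handled the same way in the paper, so I see no gap.
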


Note that if $I = \{i\}$ and users have the same weights the above theorem is similar to \cref{thm:single:upper}. To prove \cref{thm:gen:upper} we prove \cref{lem:gen:aux_bound}, a generalization of \cref{lem:single:aux_bound} bounding the over-allocation of all users, where the $r_1^t$ in the right hand side of the inequality is replaced with $\sum_{i\in I} r_i^t$.
\subsection{Getting more utility multiple times}
\label{sec:time}

In this section we study what happens when user $1$ deviates and gets more utility over an extended time period, or multiple times. More specifically, assuming that there are alternating intervals where either $\hat R_1 > R_1$ or $\hat R_1 < R_1$ we study the length of those intervals and the duration between them. We first make the following definitions:
\begin{itemize}
    \item For $\ell=0,1,2,\ldots$ let $s_\ell$ be distinct and ordered times (i.e., $s_{\ell-1} < s_\ell$) when user $1$ \textit{begins} having more resources by misreporting, i.e., $\hat R_1^{s_\ell-1} \leq R_1^{s_\ell-1}$ and $\hat R_1^{s_\ell} > R_1^{s_\ell}$.
    \item For $\ell=0,1,2,\ldots$ let $e_\ell$ be the first time after epoch $s_\ell$ when user $1$ \textit{begins} having less resources by misreporting, i.e., $\hat R_1^{e_\ell-1} \geq R_1^{e_\ell-1}$ and $\hat R_1^{e_\ell} < R_1^{e_\ell}$.
\end{itemize}

Note that $0 < s_0 < e_0 < s_1 < e_1 < \ldots$ by definition. Using the above notation we prove that if during every interval $[s_\ell,e_\ell]$ user $1$ got a factor of $\g$ more resources in epoch $t_\ell$ for some $t_\ell\in [s_\ell,e_\ell]$ by misreporting, then $t_\ell$ cannot be much larger than $s_\ell$ and that also $t_\ell$ scales exponentially with $\ell$. The proof of the theorem is presented in \cref{sec:app:time}.

\begin{restatable}{theorem}{repeat}\label{thm:time:repeat}
    Assume that for every $t$, $R_1^t\in \Theta(t)$ and for every $\ell=0,1,\ldots$ there exists an epoch $t_\ell\in [s_\ell, e_\ell)$ for which $\hat R_1^{t_\ell} \ge \g R_1^{t_\ell}$, for some $\g>1$. Then, in \nameS for any $\a\in[0,1]$, any $\ell=0,1,\ldots$, and any $t_\ell\in [s_\ell,e_\ell)$ such that $\hat R_1^{t_\ell} \ge \g R_1^{t_\ell}$, it holds that
    \begin{equation*}
        t_\ell = O(s_\ell)
        \;\;\textrm{ and }\;\;
        t_\ell = \left(\frac{2-\g}{3-2\g}\right)^\ell \Omega(t_0)
    \end{equation*}
\end{restatable}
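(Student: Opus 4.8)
The plan is to pass from utilities to allocations and argue via a global surplus potential. By \cref{thm:single:no_over_report} we may assume user~$1$ never over-reports, so $\hat U_1^t=\hat R_1^t$ and $u_1^t=r_1^t$, and it suffices to control $\hat R_1^t$. Write $\d^t:=\hat R_1^t-R_1^t$ for user~$1$'s running surplus, $\Psi^t:=\sum_{k\ne 1}(\hat R_k^t-R_k^t)^+$ for the other users' surplus, and $\Phi^t:=(\d^t)^++\Psi^t=\sum_{k\in[n]}(\hat R_k^t-R_k^t)^+$ for the total surplus. Summing \cref{lem:single:aux_bound} over all epochs up to $t$ is the engine: $\Phi$ can grow only on epochs where user~$1$ under-reports, and then by at most $r_1^t$; in particular $\Phi$ is non-increasing on truthful epochs and $\Phi^t\le R_1^t$ always. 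We also use $\d^t\le\Phi^t$ and $\d^t-\d^{t-1}=\hat r_1^t-r_1^t\in[-r_1^t,\hat r_1^t]$, together with the fact that total allocation never increases under under-reporting (the total allocated in an epoch depends only on the reported demands).

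A refinement is needed: \emph{whenever $\hat R_1^t>R_1^t$, we have $\Phi^t\le\Phi^{t-1}$, even if user~$1$ under-reports at $t$.} This comes from re-running the proof of \cref{lem:single:aux_bound} with right-hand side $0$: a violation would produce $i\in P^t$ with $\hat r_i^t>r_i^t$ and, since the total is non-increasing, $j\notin P^t$ with $r_j^t>\hat r_j^t$, where $P^t=\{k:\hat R_k^t>R_k^t\}$; but $\hat R_1^t>R_1^t$ puts $1\in P^t$, so $j\ne1$, hence \cref{lem:single:more_less} applies to $i,j$ and yields $\hat R_i^t-R_i^t\le\hat R_j^t-R_j^t$, contradicting $i\in P^t$, $j\notin P^t$. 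Thus $\Phi$ is non-increasing throughout every ``ahead'' interval on which $\d^t>0$, and the definitions of $s_\ell,e_\ell$ make the surplus tiny at each crossing: $0<\d^{s_\ell}\le\hat r_1^{s_\ell}\le\mathcal R$, $-\mathcal R\le\d^{e_\ell}<0$, and $\Phi^{s_\ell-1}=\Psi^{s_\ell-1}$.

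Two estimates then follow. \emph{Burn.} On $[t_\ell,e_\ell)$ user~$1$ stays ahead, so $\d$ falls from $\d^{t_\ell}\ge(\g-1)R_1^{t_\ell}$ to below $0$ at a rate of at most $r_1^\tau$ per epoch; hence $R_1^{e_\ell}-R_1^{t_\ell}\ge\d^{t_\ell}-\d^{e_\ell}>(\g-1)R_1^{t_\ell}$, i.e. $R_1^{e_\ell}>\g R_1^{t_\ell}$. \emph{Proximity.} On $[s_\ell,t_\ell]$ the refinement gives $\Phi$ non-increasing, so $(\g-1)R_1^{t_\ell}\le\d^{t_\ell}\le\Phi^{t_\ell}\le\Phi^{s_\ell}\le R_1^{s_\ell}$, whence $R_1^{t_\ell}\le R_1^{s_\ell}/(\g-1)$; since $R_1^t\in\Theta(t)$ this is precisely $t_\ell=O(s_\ell)$. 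For the exponential lower bound I would chain a recursion on $R_1^{t_\ell}$: on the behind interval $[e_\ell,s_{\ell+1})$, $\Phi$ grows only through under-reporting, by at most $R_1^{s_{\ell+1}-1}-R_1^{e_\ell}$, and on the next ahead interval it is non-increasing, so $(\g-1)R_1^{t_{\ell+1}}\le\d^{t_{\ell+1}}\le\Phi^{t_{\ell+1}}\le\Phi^{e_\ell}+\big(R_1^{t_{\ell+1}}-R_1^{e_\ell}\big)+O(\mathcal R)$. Feeding in $R_1^{e_\ell}>\g R_1^{t_\ell}$ from the burn and an upper bound on $\Phi^{e_\ell}$ (it is at most $\Phi^{t_\ell}+O(\mathcal R)$, and $\Phi^{t_\ell}$ is squeezed between $(\g-1)R_1^{t_\ell}$ and $R_1^{s_\ell}\le R_1^{t_\ell}$), the whole thing becomes one linear inequality in $R_1^{t_{\ell+1}}$; with $x=\g-1$ the coefficient $\tfrac{(1+x)(1-2x)-(1-x)^2}{1-2x}$ simplifies to $\tfrac{2-\g}{3-2\g}$, so $R_1^{t_{\ell+1}}\ge\tfrac{2-\g}{3-2\g}R_1^{t_\ell}-O(\mathcal R)$, and iterating (the additive errors vanish asymptotically since $R_1^t\to\infty$) gives $t_\ell=\big(\tfrac{2-\g}{3-2\g}\big)^\ell\Omega(t_0)$.

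The step I expect to be the real difficulty is sharpening the bound on the surplus $\Phi^{e_\ell}$ (equivalently $\Psi^{t_\ell}$) carried out of the $\ell$-th ahead interval: the crude bound $\Phi^{e_\ell}\le R_1^{e_\ell}$ only reproduces the trivial base $\g$, so one must argue that essentially all of this surplus is user~$1$'s own soon-to-be-erased surplus, leaving only a small amount of genuinely ``fresh'' over-allocation to fuel round $\ell+1$ — this is exactly what lifts the base from $\g$ to $\tfrac{2-\g}{3-2\g}$, and getting the accounting tight is the crux. A lesser nuisance is that $\d^t$ may touch $0$ inside an ahead interval, momentarily breaking $1\in P^t$ in the refinement; each such event perturbs $\Phi$ by only an additive $\mathcal R$, harmless for the asymptotics but needing bookkeeping. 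Finally, since $t_\ell$ is permitted to be \emph{any} valid epoch in $[s_\ell,e_\ell)$, all the inequalities above must be stated so as to hold for every such choice, which they do.
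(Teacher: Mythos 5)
Your potential-function setup and the first half of the claim are sound: the refinement that $\Phi^t\le\Phi^{t-1}$ whenever $\hat R_1^t>R_1^t$ is exactly the paper's \cref{lem:time:aux_bound} restricted to the case $f^t=0$, the burn estimate $R_1^{e_\ell}>\g R_1^{t_\ell}$ is correct, and $(\g-1)R_1^{t_\ell}\le\Phi^{t_\ell}\le\Phi^{s_\ell}\le R_1^{s_\ell}$ does give $t_\ell=O(s_\ell)$ (the paper gets the slightly stronger $2(\g-1)R_1^{t_\ell}\le R_1^{s_\ell-1}$ via \cref{cor:time:aux_bound}, but the factor of $2$ is irrelevant for the big-$O$). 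One caveat even here: treating epochs with $\hat R_1^\tau=R_1^\tau$ as costing ``an additive $\mathcal R$ per event'' is not obviously harmless, since nothing bounds the number of such epochs inside an ahead interval; the paper's $f^t=\min\big((r_1^t-\hat r_1^t)^+,(R_1^t-\hat R_1^t)^+\big)$ shows the cost there is in fact zero, but that requires the finer two-case analysis of \cref{lem:time:aux_bound}, not just your $1\in P^t$ argument.

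The genuine gap is the exponential bound. You reduce it to the one-step recursion $R_1^{t_{\ell+1}}\ge\frac{2-\g}{3-2\g}R_1^{t_\ell}-O(\mathcal R)$, but the inequalities you actually have do not produce that coefficient. Chasing your chain $(\g-1)R_1^{t_{\ell+1}}\le\Phi^{e_\ell}+\big(R_1^{t_{\ell+1}}-R_1^{e_\ell}\big)+O(\mathcal R)$ with $\Phi^{e_\ell}\le R_1^{t_\ell}+O(\mathcal R)$ and $R_1^{e_\ell}>\g R_1^{t_\ell}$ yields only $R_1^{t_{\ell+1}}\ge\frac{\g-1}{2-\g}R_1^{t_\ell}-O(\mathcal R)$, whose base is below $1$ for every $\g<3/2$ and hence proves nothing; and the algebraic ``simplification'' of $\frac{(1+x)(1-2x)-(1-x)^2}{1-2x}$ to $\frac{2-\g}{3-2\g}$ is incorrect, since $(1+x)(1-2x)-(1-x)^2=x-3x^2$. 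The sharpening you defer as ``the crux''---that the surplus carried out of interval $\ell$ is essentially user $1$'s own and does not fuel interval $\ell+1$---is precisely what is missing, and the paper obtains it by structurally different bookkeeping: \cref{lem:time:prev_batch} bounds $2(\hat R_1^{t_\ell}-R_1^{t_\ell})$ by $R_1^{s_\ell-1}-\sum_{k=0}^{\ell-1}(\hat R_1^{e_k}-R_1^{s_k-1})$, so the budget is debited by the surplus built and burned in \emph{every} previous ahead interval, and the factor $2$ on the left (from the companion user $j$ who must match user $1$'s surplus, \cref{cor:time:aux_bound}) is essential. This yields \cref{lem:time:rec_ineq}, $R_1^{t_\ell}\ge\frac{\g-1}{3-2\g}\sum_{k<\ell}R_1^{t_k}$, a recursion on the \emph{cumulative sums} $S_\ell=\sum_{k\le\ell}R_1^{t_k}$ satisfying $S_\ell\ge\frac{2-\g}{3-2\g}S_{\ell-1}$; the base $\frac{2-\g}{3-2\g}$ emerges only from this full-history recursion and cannot be recovered from a term-to-term recurrence of the kind you propose.
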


\printbibliography

\appendix
\section{Deferred Proofs of Section \ref{sec:single}}
\label{sec:app:single}

In this section we prove \cref{thm:single:lower}, the lower bound for \nameS.

\counterexample*

\begin{proof}
    We are going to prove the theorem for $\a = 0$. We can do that without loss of generality because for a fixed $\mathcal R$ we can add users with zero demands (that do not affect the allocation) to make the guarantee $g(d_i^t) = \min\{d_i^t, \a\mathcal R/n\}$ tend to $0$ for any $\a\in[0,1]$.
    
    The example in which a user (here called Alice) can deviate to increase her allocation by a factor of almost $\sqrt 2$ is the following (the details of the example are also in \cref{tab:single:counter_sqrt_2:1,tab:single:counter_sqrt_2:2}).
    \begin{itemize}
        \item There are $1+m+k$ users: Alice, users $B_1$ to $B_m$, and users $C_1$ to $C_k$.
        \item The epochs are divided into 4 phases:
        \begin{enumerate}
            \item In the first phase there are $m$ epochs. For $i\in [m]$, in the $i$-th of these epochs only user $B_i$ has a demand, and she demands and gets an allocation of $F_i$, which will be defined later\footnote{We can assume that in every epoch there is a different total amount of resources without loss of generality: we can split an epoch into $x$ sub-epochs where in each there are $1/x$ available resources.}. We are going to assume that $F_{i+1} \ge F_i$ and also define $f_i = F_i - F_{i-1}$ with $F_0 = 0$.
            
            \item In the second phase there are again $m$ epochs. For $i\in [m]$, in the $i$-th of these epochs there are $f_i$ resources and only users Alice and $B_i$ have demand, each demanding all the resources.
            
            \item In the third phase there are $k$ epochs. For $i\in [k]$, in the $i$-th of these epochs there are $F_m$ resources and only users Alice and $C_i$ have demand, each demanding all the resources.
            
            \item In the fourth phase there are $m$ epochs. In the 1st of these epochs there are $f_m$ resources and only Alice and user $B_m$ have demands, demanding all the resources. For $i\in[m-1]$, in the $(m-i+1)$-th of these epochs there are 
            \begin{equation}\label{eq:single:app:1}
                F_m-F_i + \frac{f_m}{2}
                \;=\;
                (F_{i}+f_{i})-(F_{i+1}+f_{i+1})
            \end{equation}
            resources (the above equation defines $F_i$) and only users Alice and $B_i$ have demand, each demanding all the resources.
        \end{enumerate}
    \end{itemize}

    \begin{table}[h]
    \centering
    \begin{tabular}{c||cc|ccc|ccc|cc}
    \shortstack{{ }\\Users\\\phantom{A}} &
        \multicolumn{2}{c|}{
            \shortstack{Total allocations\\after phase 1\\{ }}
        } &
        \multicolumn{3}{c|}{
            \shortstack{Epoch $i$, phase $2$,\\Alice and $B_i$,\\$i\in[m]$}
        }&
        \multicolumn{3}{c|}{
            \shortstack{Epoch $i$, phase $3$,\\Alice and $C_i$,\\$i\in[k]$}
        }&
        \multicolumn{2}{c}{
            \shortstack{Total allocations\\after phase 3\\{ }}
        } \\\hline

    Alice &
        \cblue $0$ & \cred $0$ &
        $f_i$ & \cblue $f_i$ & \cred $0$ &
        $F_m$ & \cblue $0$ & \cred $F_m\,2^{-i}$ &
        \cblue $F_m$ & \cred $F_m\left(1-2^{-k}\right)$
        \\

    $B_i$, $i\in [m]$ &
        \cblue $F_i$ & \cred $F_i$ &
        $f_i$ & \cblue $0$ & \cred $f_i$ &
        $0$ & \cblue $0$ & \cred $0$ &
        \cblue $F_i$ & \cred $F_i + f_i$
        \\

    $C_i$, $i\in [k]$ &
        \cblue $0$ & \cred $0$ &
        $0$ & \cblue $0$ & \cred $0$ &
        $F_m$ & \cblue $F_m$ & \cred $F_m(1-2^{-i})$ &
        \cblue $F_m$ & \cred $F_m(1-2^{-i})$
        \\

    \end{tabular}
    \caption{The allocation of resources for the first 3 phases. The black numbers denote the users' demands, the blue numbers are the users' allocations in each epoch when Alice is truthful, and the red numbers are the allocations in each epoch when Alice misreports a demand of $0$ for every epoch of phase $2$. The total resources in each epoch is equal to the maximum demand of any user.}
    \label{tab:single:counter_sqrt_2:1}
    \end{table}

    \begin{table}[h]
    \centering
    \setlength{\tabcolsep}{4.6pt}
    \begin{tabular}{c||cc|cc|c|cc|c|cc}
    Users &
        \multicolumn{2}{c|}{
            \shortstack{Epoch $1$\\phase $4$}
        } &
        \multicolumn{2}{c|}{
            \shortstack{Epoch $2$\\phase $4$}
        } &
        \dots &
        \multicolumn{2}{c|}{
            \shortstack{Epoch $m-i+1$\\phase $4$}
        } &
        \dots &
        \multicolumn{2}{c}{
            \shortstack{Epoch $m$\\phase $4$}
        } \\\hline

    Alice &
        $f_m$ & \cblue $\nicefrac{f_m}{2}$ &
        $\substack{F_m+\nicefrac{f_m}{2}\\-F_{m-1}}$ & \cblue $0$ &
        \dots &
        $\substack{F_m+\nicefrac{f_m}{2}\\-F_i}$ & \cblue $0$ &
        \dots &
        $\substack{F_m+\nicefrac{f_m}{2}\\-F_1}$ & \cblue $0$
        \\

    $B_1$ &
        $0$ & \cblue $0$ &
        $0$ & \cblue $0$ &
        \dots &
        $0$ & \cblue $0$ &
        \dots &
        $\substack{F_m+\nicefrac{f_m}{2}\\-F_1}$ & \cblue $\substack{F_m+\nicefrac{f_m}{2}\\-F_1}$
        \\

    \dots & \dots & \dots & \dots & \dots & \dots & \dots & \dots & \dots & \dots & \dots\\
    
    $B_i$ &
        $0$ & \cblue $0$ &
        $0$ & \cblue $0$ &
        \dots &
        $\substack{F_m+\nicefrac{f_m}{2}\\-F_i}$ & \cblue $\substack{F_m+\nicefrac{f_m}{2}\\-F_i}$ &
        \dots &
        $0$ & \cblue $0$
        \\
    
    \dots & \dots & \dots & \dots & \dots & \dots & \dots & \dots & \dots & \dots & \dots\\

    $B_{m-1}$ &
        $0$ & \cblue $0$ &
        $\substack{F_m+\nicefrac{f_m}{2}\\-F_{m-1}}$ & \cblue $\substack{F_m+\nicefrac{f_m}{2}\\-F_{m-1}}$ &
        \dots &
        $0$ & \cblue $0$ &
        \dots &
        $0$ & \cblue $0$
        \\
    
    $B_m$ &
        $f_m$ & \cblue $\nicefrac{f_m}{2}$ &
        $0$ & \cblue $0$ &
        \dots &
        $0$ & \cblue $0$ &
        \dots &
        $0$ & \cblue $0$
        \\\hline\hline
    
    Alice &
        $f_m$ & \cred $f_m$ &
        $\substack{(F_{m-1}+f_{m-1})\\-(F_{m}+f_{m})}$ & \cred $\substack{(F_{m-1}+f_{m-1})\\-(F_{m}+f_{m})}$ &
        \dots &
        $\substack{(F_{i}+f_{i})\\-(F_{i+1}+f_{i+1})}$ & \cred $\substack{(F_{i}+f_{i})\\-(F_{i+1}+f_{i+1})}$ &
        \dots &
        $\substack{(F_{1}+f_{1})\\-(F_{2}+f_{2})}$ & \cred $\substack{(F_{1}+f_{1})\\-(F_{2}+f_{2})}$
        \\

    $B_1$ &
        $0$ & \cred $0$ &
        $0$ & \cred $0$ &
        \dots &
        $0$ & \cred $0$ &
        \dots &
        $\substack{(F_1+f_1)\\-(F_2+f_2)}$ & \cred $0$
        \\

    \dots & \dots & \dots & \dots & \dots & \dots & \dots & \dots & \dots & \dots & \dots\\
    
    $B_i$ &
        $0$ & \cred $0$ &
        $0$ & \cred $0$ &
        \dots &
        $\substack{(F_{i}+f_{i})\\-(F_{i+1}+f_{i+1})}$ & \cred $0$ &
        \dots &
        $0$ & \cred $0$
        \\
    
    \dots & \dots & \dots & \dots & \dots & \dots & \dots & \dots & \dots & \dots & \dots\\

    $B_{m-1}$ &
        $0$ & \cred $0$ &
        $\substack{(F_{m-1}+f_{m-1})\\-(F_{m}+f_{m})}$ & \cred $0$ &
        \dots &
        $0$ & \cred $0$ &
        \dots &
        $0$ & \cred $0$
        \\
    
    $B_m$ &
        $f_m$ & \cred $0$ &
        $0$ & \cred $0$ &
        \dots &
        $0$ & \cred $0$ &
        \dots &
        $0$ & \cred $0$
        \\

    \end{tabular}
    \caption{The allocation of resources for phase 4. The top part of the table is for when Alice is truthful, and the bottom part is for when Alice misreports a demand of $0$ for every epoch of phase $2$.
    The total resources in each epoch is equal to the maximum demand of any user.}
    \label{tab:single:counter_sqrt_2:2}
    \end{table}
    
    If Alice is truthful she will end up with $F_m + \nicefrac{f_m}{2}$ resources after phase $4$. If she misreports a demand of $0$ in every epoch of phase $2$ she will get an allocation of $F_1 + f_1 - F_m\,2^{-k}$ after phase $4$ (both cases are depicted in \cref{tab:single:counter_sqrt_2:1,tab:single:counter_sqrt_2:2}). Solving the recursive equations \eqref{eq:single:app:1} and taking $m\to\infty$ and $k\to\infty$ we get that
    \begin{equation*}
        \frac{F_1 + f_1 - F_m\,2^{-k}}{F_m + \nicefrac{f_m}{2}}
        \to
        \sqrt 2
    \end{equation*}
    which is the desired incentive compatibility ratio. For completeness, the solution for $F_i$ when we set $F_m + f_m/2 = 1$ (for normalization) is
    \begin{equation*}
        F_i
        \;=\;
        \frac{
            \left(\sqrt{2} + 1\right)^{m+1} \left(
                1 - \left( \frac{2 - \sqrt 2}{2}\right)^i
            \right)
            +
            \left(\sqrt 2 - 1\right)^{m+1} \left(
                1 - \left(\frac{2 + \sqrt 2}{2}\right)^i
            \right) 
        }{
            \left(\sqrt 2 - 1\right)^{m+1}
            +
            \left(\sqrt 2 + 1\right)^{m+1}
        }
    \end{equation*}
    One can check that our initial conditions are true: for any $m\ge 1$, $F_{i+1} \ge F_i$ and the quantities in \eqref{eq:single:app:1} are positive. 
\end{proof}

\section{Deferred Proofs of Section \ref{sec:mult}}
\label{sec:app:mult}

\subsection{Incentives when users have zero ratios}

In this section we include the discussion about \nameM when some users have zero ratios. This special case is much different, even when there is only one epoch. To illustrate this, consider the example of \cite{DBLP:conf/sigecom/ParkesPS12}: there are two resources and three users with infinite demands and ratios $(1,\e)$, $(\e,1)$, and $(\e,1)$, for a small but positive $\e$. If the total amount available of every resource is $\mathcal R = 1$, then DRF gives every user an allocation of almost $1/2$. If instead $\e=0$, it is easy to check that the allocations are much different: the first user receives an allocation of $1$, while the other two users still receive an allocation of $1/2$.

For \nameM, we prove that a user can {\em over-report} her demand to increase her allocation, even when the ratios stay fixed over time. In the next section, we will show that this is no longer possible once all resources are used by all users, even if the ratios are very different and change over time.

\begin{theorem}\label{thm:mult:unbounded_rat}
    For any $\a\in[0,1]$, there is an instance were all the users have the same weights ($w_i = w_j$ for every users $i,j$), there are $m$ resources, and users have static ratios some of which are zero where a user can increase her utility by a factor of $\Theta(m)$ by only over-reporting her demand.
\end{theorem}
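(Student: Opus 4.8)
The plan is to exhibit an explicit instance and compare, step by step, the allocation trajectory under truthful reporting with the trajectory when a distinguished user (say user~$1$) over-reports her demand in one epoch while reporting her ratios truthfully throughout. The instance will have $m$ resources, a user~$1$ whose ratio vector is strictly positive on every resource, and a family of ``single-resource'' users, each with a ratio vector supported on exactly one resource, so each such vector has $m-1$ zero entries. These zeros are precisely the hypothesis that fails in \cref{lem:mult:more_less,lem:mult:aux_bound}: decreasing a single-resource user's allocation frees nothing on the $m-1$ resources she does not touch, so the max--min ``transfer'' argument, and hence the bookkeeping behind \cref{thm:mult:no_over_report}, no longer applies. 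The whole point is to exploit this breakdown quantitatively.

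Concretely, I would arrange the early epochs so that, under truthful play, user~$1$ is always bottlenecked by her own (small) reported demand while each single-resource user saturates the one resource she uses; this keeps user~$1$'s cumulative allocation pinned low and forces her total utility to be $\Theta(1)$ while the appropriate benchmark is $\Theta(m)$ times larger. In the over-report run, declaring a large demand in the first epoch means user~$1$'s allocation is governed by the lexicographic max--min criterion rather than by her demand cap; because the single-resource users are mutually decoupled (each shares a resource constraint only with user~$1$, and on a different resource), solving the program epoch by epoch shows the perturbation propagates so that she recovers a $\Theta(m)$-larger total allocation --- and after the first epoch this is all genuine utility, since she is then truthful (same accounting as in \cref{lem:mult:over_aux}, but now outside its hypotheses). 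The key quantitative phenomenon is the one already flagged before \cref{lem:mult:aux_bound}: freeing $\delta$ of a contested resource that another user values only at rate $\e$ lets that user's allocation jump by $\delta/\e$, and with true zeros (or $\e\to 0$) the block structure makes this amplification unbounded; this is exactly why the positive-ratio bound ``$\hat R_1^t-\bar R_1^t\le(\text{wasted amount})$'' of \cref{lem:mult:aux_bound} is violated once ratios are zero.

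The verification steps, in order, are: (i)~write the feasibility polytope at each epoch and note that each resource constraint couples user~$1$ with exactly one single-resource user, so the program decomposes resource by resource once $r_1^t$ is fixed; (ii)~solve the lexicographic max--min problem explicitly in both runs, checking in the truthful run that the demand cap on user~$1$ is the binding constraint in every relevant epoch (so her per-epoch utility equals her small demand); (iii)~sum over epochs to get the $\Theta(m)$ ratio; and (iv)~remove the dependence on $\a$ exactly as in the proof of \cref{thm:single:lower}, by padding the instance with zero-demand dummy users so that the guarantee $g_i(\cd)$ becomes negligible, making the construction valid for every $\a\in[0,1]$. The main obstacle I anticipate is designing the epoch schedule so that a single over-report is amplified across all $m$ resource dimensions rather than just the one it directly perturbs: the construction must thread user~$1$'s demands and the single-resource users' arrivals through $\Theta(m)$ epochs so that the cascade of max--min reallocations compounds, and the delicate part is certifying at each step that the claimed allocation is the genuine lexicographic optimum, not merely feasible --- which is subtle precisely because the zero ratios make several transfers infeasible that would be available in the positive-ratio setting.
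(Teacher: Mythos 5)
There is a genuine gap, and it is architectural rather than a matter of missing details. In your instance every non-deviating user shares a resource constraint only with user~$1$ and with nobody else, so the only effect an over-report by user~$1$ can have is to raise her own cumulative allocation $\hat R_1$ and to lower (or leave unchanged) every $\hat R_k$, $k\neq 1$. Both changes make the lexicographic max--min criterion \emph{less} favorable to her in every subsequent epoch: each later epoch in your design is a pairwise contest between user~$1$ and one single-resource user, and in such a contest her share is non-increasing in $\hat R_1^{t-1}$ and non-decreasing in the opponent's cumulative allocation. So the ``perturbation'' you hope will propagate can only propagate against her; the decoupled two-group structure cannot yield any gain from over-reporting, let alone a $\Theta(m)$ one. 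You also never identify a concrete mechanism by which the over-report is converted into future utility --- the step ``solving the program epoch by epoch shows the perturbation propagates so that she recovers a $\Theta(m)$-larger total allocation'' is precisely the step that fails.

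The paper's construction needs a third, \emph{intermediary} group to reverse the sign of the externality. Alice shares resource~$1$ with $n^2$ copies of Bob; the Bobs also heavily use resource~$2$, which the $m-2$ later competitors need at a tiny ratio $\frac{1}{n(m-2)}$ (and which Alice does not use at all). Alice's over-report in epoch~$1$ pushes the Bobs down on resource~$1$, which frees resource~$2$, which \emph{raises} the later competitors' cumulative allocations by a factor of $\Theta(n)$ --- exactly the $\delta/\e$ amplification you cite, but applied to inflate the \emph{opponents'} histories, not Alice's. Because those opponents now look over-served, Alice wins each of the $m-2$ subsequent pairwise contests outright (collecting $\frac{1}{n^2}$ per epoch) instead of geometrically splitting them (collecting only $\frac{2}{n^2}$ in total), giving the $\Theta(m)$ ratio. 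If you want to salvage your plan, you must add such an intermediary population so that the over-report \emph{increases} the cumulative allocation of the users the deviator meets later; without it, the argument behind \cref{lem:single:over_aux} survives in your instance essentially contest-by-contest and shows over-reporting is weakly harmful.
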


\begin{proof}
    Similar to the proof of \cref{thm:single:lower}, we are going to prove the theorem for $\a = 0$. We can do that without loss of generality because for a fixed $\mathcal R$ we can add users with zero demands (that do not affect the allocation) to make the guarantee $g(d_i^t) = \min\big\{d_i^t, \a\frac{\mathcal R w_i}{\sum_k w_k}\big\}$ tend to $0$ for any $\a\in[0,1]$.

    \Cref{tab:mult:unbounded} presents the example graphically. In detail, the example has $n^2 + m - 1$ users and $m\ge 3$ resources:
    \begin{itemize}
        \item Alice uses every resource with a ratio of $1$, except for the second resource which she does not use.
        \item There are $n^2$ copies of a user, Bob, where each uses the first resource with a ratio of $1$, the second one with a ratio of $1-\frac{1}{n^3}$, and does not want any other resource.
        \item User $i$, for all $i\in [m-2]$, uses the second resource with a ratio of $\frac{1}{n (m-2)}$, the $(i+2)$-th one with a ratio of $1$, and does not use any other resources.
        \item In epoch 1 the total available amount of every resource is $1$. On the other epochs, $2,3,\ldots, m-1$, the total available amount for each resource is $\frac{1}{n^2}$\footnote{We can make that assumption w.l.o.g. because we can split the first epoch where there is $1$ amount of every resource into $n^2$ epochs where the available amount of every resource is $n^{-2}$ every round.}.
    \end{itemize}

    \begin{table}[h]
        \centering
        \setlength{\tabcolsep}{3.4pt}
        \begin{tabular}{c||ccccccc|ccc|ccc|ccc|ccc|c}
            Users & \multicolumn{7}{c|}{Ratios for resources $1,\ldots,m$} & \multicolumn{3}{c|}{Epoch 1} & \multicolumn{3}{c|}{Epoch 2} & \multicolumn{3}{c|}{Epoch 3} & \multicolumn{3}{c|}{Epoch 4} & ...\\\hline
            Alice &
                $1$ & $0$ & $1$ & $1$ & $1$ & ... & $1$ & 
                0 & \cblue{0} & \cred{$\frac{1}{n^2+1}$} &
                $\frac{1}{n^2}$ & \cblue{$\frac{1}{n^2}$} & \cred{$\frac{1}{n^2}$} & 
                $\frac{1}{n^2}$ & \cblue{$\frac{1}{2n^2}$} & \cred{$\frac{1}{n^2}$} & 
                $\frac{1}{n^2}$ & \cblue{$\frac{1}{4n^2}$} & \cred{$\frac{1}{n^2}$} & ...\\
            $n^2$ Bobs & 
                $1$ & $1-\frac{1}{n^3}$ & $0$ & $0$ & $0$ & ... & $0$ & 
                1 & \cblue{$\frac{1}{n^2}$} & \cred{$\frac{1}{n^2+1}$} &
                $0$ & \cblue{$0$} & \cred{$0$} &
                $0$ & \cblue{$0$} & \cred{$0$} &
                $0$ & \cblue{$0$} & \cred{$0$} & ... \\
            User $1$ & 
                $0$ & $\frac{1}{n (m-2)}$ & $1$ & $0$ & $0$ & ... & $0$ & 
                1 & \cblue{$\frac{1}{n^2}$} & \cred{$\frac{n+1}{n^2+1}$} &
                $\frac{1}{n^2}$ & \cblue{0} & \cred{0} &
                0 & \cblue{0} & \cred{0} & 
                0 & \cblue{0} & \cred{0} & ... \\
            User $2$ &
                $0$ & $\frac{1}{n (m-2)}$ & $0$ & $1$ & $0$ & ... & $0$ &
                1 & \cblue{$\frac{1}{n^2}$} & \cred{$\frac{n+1}{n^2+1}$} &
                0 & \cblue{0} & \cred{0} &
                $\frac{1}{n^2}$ & \cblue{$\frac{1}{2n^2}$} & \cred{0} &
                0 & \cblue{0} & \cred{0} & ... \\
            User $3$ &
                $0$ & $\frac{1}{n (m-2)}$ & $0$ & $0$ & $1$ & ... & $0$ & 
                1 & \cblue{$\frac{1}{n^2}$} & \cred{$\frac{n+1}{n^2+1}$} &
                0 & \cblue{0} & \cred{0} & 
                0 & \cblue{0} & \cred{0} & 
                $\frac{1}{n^2}$ & \cblue{$\frac{3}{4n^2}$} & \cred{0} & ... \\
            ... & ... & ... & ... & ... & ... & ... & ... & ... & ... & ... & ... & ... & ... & ... & ... & ... & ... & ... & ... & ... \\
            User $m-2$ &
                $0$ & $\frac{1}{n (m-2)}$ & $0$ & $0$ & $0$ & ... & $1$ & 
                1 & \cblue{$\frac{1}{n^2}$} & \cred{$\frac{n+1}{n^2+1}$} &
                0 & \cblue{0} & \cred{0} & 
                0 & \cblue{0} & \cred{0} & 
                0 & \cblue{0} & \cred{0} & ... \\
        \end{tabular}
        \caption{For every epoch the black numbers denote the users' demands, the blue numbers are the users' allocation when Alice is truthful, and the red numbers are the allocations when Alice over-reports her demand in epoch 1 by asking for 1 resource. The available amount of every resource in epoch 1 is 1, while on the other epochs it is $\frac{1}{n^2}$.}
        \label{tab:mult:unbounded}
    \end{table}

    If Alice lies on the first epoch and requests resources instead of requesting zero, she can take resources from every Bob, which increases the allocation of the other users by a factor of $\Theta(n)$. This entails that in epochs $2$, $3$, ..., $m-1$, instead of Alice getting allocated $\frac{1}{n^2}$, $\frac{1}{2n^2}$, ..., $\frac{1}{2^{m-3}n^2}$, respectively, she gets $\frac{1}{n^2}$ every round. Note that this is true if the following inequality is true, which guarantees that when Alice is untruthful the resulting allocation in epochs $2$ to $m-1$ is max-min fair:
    \begin{equation*}
        \frac{1}{n^2+1} + \frac{m-2}{n^2} \leq \frac{n+1}{n^2+1}
    \end{equation*}
    which is true if $m\leq 2 + \frac{n^3}{n^2+1}$. This allows us to set $m = \Theta(n)$. In total, Alice gets an allocation of $\frac{2}{n^2}\left(1-\frac{1}{2^{m-2}}\right) \leq \frac{2}{n^2}$ if she is truthful and a (useful) allocation of $\frac{m-2}{n^2}$ if she is untruthful, i.e., she increases her utility by a factor of at least $\frac{m-2}{2}$. This proves the theorem.
\end{proof}

\subsection{Deferred Proofs of Section \ref{sec:mult:nonzero}}

First we restate and prove \Cref{lem:mult:more_less}.

\multMoreLess*

\begin{proof}
    We notice that
    \begin{itemize}
        \item Because $\hat r_i^t > g(\hat d_i^t)$ (since $\hat r_i^t > \bar r_i^t \ge g(\bar d_i^t) \ge g(\hat d_i^t)$) and $\hat a_{jq}^t > 0$ for all $q$, decreasing $\hat r_i^t$ would free a positive amount of every resource.
        \item Because $\hat r_j^t < \hat d_j^t$ it is feasible to increase $\hat r_j^t$ in order to increase $\hat R_j^t$.
    \end{itemize}
    This implies that $\hat R_i^t/w_i \le \hat R_j^t/w_j$; otherwise it would have been more fair to give some of the resources user $i$ got to user $j$.
    With the analogous inverse argument (we can increase $\bar r_i^t$ by decreasing $\bar r_j^t$) we can prove that $\bar R_i^t/w_i \ge \bar R_j^t/w_j$. This completes the proof.
\end{proof}

Finally we prove the lower bound for the incentive compatibility ratio of \nameM.

\multLower*

\begin{proof}
    We are going to prove the theorem for $\a = 0$. We can make the assumption that $\a = 0$ without loss of generality because for a fixed $\mathcal R$ we can add users with zero demands and same ratios and weights as some user other that user $1$ (thus not affecting the allocation nor the definition of $\rho_1$) to make the guarantee $g(d_i^t) = \min\big\{d_i^t, \a\frac{\mathcal R w_i}{\sum_k w_k}\big\}$ tend to $0$ for any $\a\in[0,1]$.

    There are $2$ epochs, $2$ resources, and $4$ groups of users where the users in each group have the same ratios and demands. Groups $1$ and $2$ have one user each, user $1$ and user $2$ respectively, while groups $3$ and $4$ have $n_1$ and $n_2$ users, respectively.
    
    User $1$ is the one who will under-report her demand to increase her utility and w.l.o.g. we assume that her weight is $w_1 = 1$. All other users have weight $w > 0$ (this corresponds to the weight $w_2$ in the theorem's statement).
    Their ratios are depicted in \cref{tab:mult:counter}, along with a summary of the whole example. We assume that $\d$ and $\e$ are fixed. We note that
    \begin{equation*}
        \rho_1
        \;=\;
        \max_{k\ne 1, q}\frac{w_1 a_{1q}}{w_k a_{kq}}
        \;=\;
        \max\left\{
            \frac{1}{w\e},
            \frac{1}{w},
            \frac{1}{w\e},
            \frac{\d}{w},
            \frac{\d}{w\e},
            \frac{\d}{w}
        \right\}
        \;=\;
        \frac{1}{w\e}
    \end{equation*}
    as needed by the theorem.

    \begin{table}[h]
    \centering
    \setlength{\tabcolsep}{4pt}
    \begin{tabular}{c||c|c||ccc|ccc}
        & Res. 1 & Res. 2  &
            \multicolumn{3}{c|}{Ep. 1 $\Big(\mathcal R^1 = 1+\frac{n_1 w}{1 + w\e}\Big)$} &
            \multicolumn{3}{c}{Ep. 2 $\Big(\mathcal R^2 = \frac{\d}{w\e}+\frac{n_2 \d}{\e(w+\d)}\Big)$} \\\hline
        User $1$ ($w_1=1$) & $1$ & $\d$ &
            $\infty$ & \cblue $\frac{1}{1+w\e}$ & \cred $0$ &
            $\infty$ & \cblue $\frac{\d}{w\e(w+\d)}$ & \cred $\frac{1}{w\e}$ \\
        User $2$ ($w_2=w$)& $\e$ & $1$ &
            $\infty$ & \cblue $\frac{w}{1+w\e}$ & \cred $\frac{1}{\e}$ &
            $\infty$ & \cblue $\frac{\d}{\e(w+\d)}$ & \cred $0$ \\
        Group $3$ ($n_1$ users, $w_3=w$) & $1$ & $\e$ &
            $\frac{w}{1+w\e}$ & \cblue $\frac{w}{1+w\e}$ & \cred $\frac{w}{1+w\e}$ &
            $0$ & \cblue $0$ & \cred $0$ \\
        Group $4$ ($n_2$ users, $w_4=w$) & $\e$ & $1$ &
            $0$ & \cblue $0$ & \cred $0$ &
            $\frac{\d}{\e(w+\d)}$ & \cblue $\frac{\d}{\e(w+\d)}$ & \cred $\frac{\d}{\e(w+\d)}$ \\
    \end{tabular}
    \caption{Summary of the example of \cref{thm:mult:lower}.
    The 2nd and 3rd columns depict the users' ratios, where $0<\d\le\e<1$.
    The last two columns for each epoch show the total available resources for that round in the parentheses of the first row, the demands of each group (black numbers), their allocations when the user in group $1$ is truthful (blue numbers), and their allocations when the user in group $1$ misreports a demand of $0$ in epoch $1$ (red numbers).}
    \label{tab:mult:counter}
    \end{table}
    
    \medskip
    \noindent\textbf{Epoch 1.}
    In the first epoch, users $1$ and $2$ demand $\infty$ resources, group $3$ demand $\frac{w}{1+w\e}$, and group $4$ demand $0$. The total amount available for every resource is $1 + \frac{n_1 w}{1+w\e}$.

    We first show that with these demands, user $1$ gets $r_1^1 = \frac{1}{1 + w\e}$, while user $2$ and users in group $3$ get an allocation of $\frac{w}{1 + w\e}$. Note that this is fair, since every user with positive demand gets an allocation proportional to their weight. Because this allocation is fair, to prove its validity we need that one resource is saturated and the other is not over-used. The total amount of resource $1$ used is
    \begin{equation*}
        \frac{1}{1+w\e} +
        \e\frac{w}{1+w\e} +
        n_1 \frac{w}{1+w\e} =
        1 + \frac{n_1 w}{1+w\e}
        \;=\;
        \mathcal R^1
    \end{equation*}
    meaning that resource $1$ is saturated, proving that we cannot increase the users' allocations. The total amount of resource $2$ used is
    \begin{equation*}
        \d \frac{1}{1+w\e} +
        \frac{w}{1+w\e} +
        n_1 \e \frac{w}{1+w\e}
        \;=\;
        \frac{\d + w}{1 + w\e} + n_1\frac{w \e}{1 + w\e}
        \;\le\;
        1 + n_1\frac{w}{1+w\e}
        \;=\;
        \mathcal R^1
    \end{equation*}
    where the inequality holds because $\e<1$ and we assume that $n_1$ is large enough. This proves that we do not over-use resource $2$.

    Now we will prove that if the user in group $1$ misreports her demand and asks for $0$ instead, then the allocations are as follows: user $1$ gets $\hat r_1^1 = 0$, user $2$ gets an allocation of $\frac{1}{\e}$ and users in group $3$ get an allocation of $\frac{w}{1+w\e}$. Note that this is fair, since they have the same weight and users in group $3$ get their demand, while user $2$ gets more than. The total amount of resource $1$ used is
    \begin{equation*}
        0 +
        \e\frac{1}{\e} +
        n_1 \frac{w}{1+w\e} =
        1 + \frac{n_1 w}{1+w\e}
        \;=\;
        \mathcal R^1
    \end{equation*}
    meaning that resource $1$ is saturated, proving that we cannot increase the users' allocations. The total amount of resource $2$ used is
    \begin{equation*}
        0 +
        \frac{1}{\e} +
        n_1 \e \frac{w}{1+w\e} \le
        1 + n_1\frac{w}{1+w\e}
        \;=\;
        \mathcal R^1
    \end{equation*}
    where the inequality holds because $\e < 1$ and we assume that $n_1$ is large enough. This proves that we do not over-use resource $2$.

    \medskip
    \noindent\textbf{Epoch 2.}
    In the second epoch, users $1$ and $2$ demand $\infty$ resources, group $3$ demand $0$, and group $4$ demand $\frac{w\d}{\e(w + \d)}$. The total amount available for every resource is $\frac{\d}{w\e} + \frac{n_2\,\d}{\e(w+\d)}$.

    We first show that with these demands, when user $1$ was truthful, user $1$ gets $r_1^2 = \frac{\d}{w\e(w+\d)}$, while user $2$ and users in group $4$ get an allocation of $\frac{\d}{\e(w+\d)}$. Note that this is fair, since user $2$'s total allocation is the same as user $1$'s times $w$ and users in group $4$ get their demand which is less than the total allocation of user $2$ with which they have the same weight. The total amount of resource $1$ used is
    \begin{equation*}
        \frac{\d}{w\e(w+\d)} +
        \e\frac{\d}{\e(w+\d)} +
        n_2 \e \frac{\d}{\e(w+\d)}
        \;\le\;
        \frac{\d}{w\e} + n_2\frac{\d}{\e(w+\d)}
        \;=\;
        \mathcal R^2
    \end{equation*}
    where the inequality holds because $\e < 1$ and by assuming that $n_2$ is large enough. This proves that resource $1$ is not over-used. The total amount of resource $2$ used is
    \begin{equation*}
        \d \frac{\d}{w\e(w+\d)} +
        \frac{\d}{\e(w+\d)} +
        n_2 \frac{\d}{\e(w+\d)}
        \;=\;
        \frac{\d}{w\e} + n_2\frac{\d}{\e(w+\d)}
        \;=\;
        \mathcal R^2
    \end{equation*}
    meaning that resource $1$ is saturated, proving that we cannot increase the users' allocations.

    Now we will prove that when the user $1$ misreported in the first epoch, the allocations are as follows: user $1$ gets $\hat r_1^2 = \frac{1}{w\e}$, user $2$ gets $0$, and group $4$ gets $\frac{\d}{\e(w+\d)}$. Note that this is fair; the total resource of user $1$ and user $2$ are now $\frac{1}{w\e}$ and $\frac{1}{\e}$ (which are proportional to users' weights) and users in group $4$ get their demand which is not more than what user $2$ gets in total who has the same weight.
    The total amount of resource $1$ used is
    \begin{equation*}
        \frac{1}{w\e} +
        \e\, 0 +
        n_2 \e \frac{\d}{\e(w+\d)}
        \;\le\;
        \frac{\d}{w\e} + n_2\frac{\d}{\e(1+\d)}
        \;=\;
        \mathcal R^2
    \end{equation*}
    where the inequality holds because $\e < 1$ and by assuming that $n_2$ is large enough. This proves that resource $1$ is not over-used. The total amount of resource $2$ used is
    \begin{equation*}
        \d \frac{1}{w\e} +
        0 +
        n_2 \frac{\d}{\e(w+\d)}
        \;=\;
        \frac{\d}{w\e} + n_2\frac{\d}{\e(w+\d)}
        \;=\;
        \mathcal R^2
    \end{equation*}
    meaning that resource $1$ is saturated, proving that we cannot increase the users' allocations.

    This proves that the misreporting of user $1$ increases her resources by
    \begin{equation}\label{eq:mult:87}
        \frac{\hat r_1^1 + \hat r_1^2}{r_1^1 + r_1^2}
        \;=\;
        \frac{0+\frac{1}{w\e}}
             {\frac{1}{1+w\e} + \frac{\d}{w\e(w+\d)}}
        \;\overset{\d\to 0}{\to}\;
        1 + \frac{1}{w\e} 
        \;=\;
        1 + \rho_1
    \end{equation}
    the above proves that as $\d\to 0$ we get an incentive compatibility ratio of $1+\rho_1$, as needed.
\end{proof}

It is worth pointing out what we mentioned in \cref{sec:mult}: if we want a lower bound that depends on $\rho = \max_i \rho_i$ instead of $\rho_1$, we get an interesting result when $w=1$: if we set $\d=\e$, making $\rho = 1/\e$, \eqref{eq:mult:87} now proves an incentive compatibility bound of $\frac{1+1/\e}{2} = \frac{1+\rho}{2}$.
\section{Deferred Proofs of Section \ref{sec:coalitions}}
\label{sec:app:coalitions}

First we state a more specific version of \Cref{lem:mult:more_less}.

\begin{lemma}\label{lem:gen:more_less}
    Fix an epoch $t$ and the allocations of two different outcomes $\{\hat R_k^{t-1}\}_{k\in [n]}$ and $\{\bar R_k^{t-1}\}_{k\in [n]}$. Let $i,j$ be two different users. If the following conditions hold
    \begin{itemize}
        \item For $i$, $\;\bar r_i^t < \hat r_i^t$ and $\;\hat d_i^t \le \bar d_i^t$.
        \item For $j$, $\;\bar r_j^t > \hat r_j^t$ and $\;\bar d_j^t \le \hat d_j^t$.
    \end{itemize}
    then, for any $\a\in[0,1]$,
    \begin{equation*}
        \frac{\bar R_i^t}{w_i} \geq \frac{\bar R_j^t}{w_j}
        \;\;\;\textrm{ and }\;\;\;
        \frac{\hat R_i^t}{w_i} \le \frac{\hat R_j^t}{w_j}
    \end{equation*}
    and consequentially
    \begin{equation}\label{eq:gen:0}
        \frac{\hat R_i^t - \bar R_i^t}{w_i} \le \frac{\hat R_j^t - \bar R_j^t}{w_j}
    \end{equation}
\end{lemma}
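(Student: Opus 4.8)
The plan is to adapt the proofs of \cref{lem:single:more_less} and \cref{lem:mult:more_less} to the weighted single-resource setting, using the fact that \nameW, as specified by \eqref{eq:gen:algo}, selects the allocation that is leximin-optimal on the normalized cumulative totals $\frac{R_k^{t-1}+r_k^t}{w_k}$ subject to $\sum_k r_k^t\le\mathcal R$ and the box constraints $g_k(d_k^t)\le r_k^t\le d_k^t$. The whole argument is a feasibility-plus-optimality exchange argument, carried out once for each of the two outcomes.

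First I would work in the $\hat{}$-outcome. From $\hat r_i^t>\bar r_i^t\ge g_i(\bar d_i^t)\ge g_i(\hat d_i^t)$, where the last step uses $\hat d_i^t\le\bar d_i^t$ together with monotonicity of $g_i$, I conclude $\hat r_i^t>g_i(\hat d_i^t)$, so user $i$'s allocation sits strictly above her guarantee and can be decreased; from $\hat r_j^t<\bar r_j^t\le\bar d_j^t\le\hat d_j^t$ I conclude $\hat r_j^t<\hat d_j^t$, so user $j$'s allocation sits strictly below her demand and can be increased. Since there is a single resource, transferring an infinitesimal amount from $i$ to $j$ preserves $\sum_k\hat r_k^t\le\mathcal R$ and hence is feasible in \eqref{eq:gen:algo}. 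If we had $\hat R_i^t/w_i>\hat R_j^t/w_j$, this transfer would strictly raise the smaller normalized total while only lowering the larger one, a leximin improvement contradicting optimality of \nameW's choice; therefore $\hat R_i^t/w_i\le\hat R_j^t/w_j$. The symmetric argument in the $\bar{}$-outcome uses $\bar r_i^t<\hat r_i^t\le\hat d_i^t\le\bar d_i^t$ (room to increase $i$) and $\bar r_j^t>\hat r_j^t\ge g_j(\hat d_j^t)\ge g_j(\bar d_j^t)$ (room to decrease $j$, again invoking monotonicity of $g_j$ and $\bar d_j^t\le\hat d_j^t$) to get that shifting resource from $j$ to $i$ is feasible, forcing $\bar R_i^t/w_i\ge\bar R_j^t/w_j$. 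Subtracting the two displayed inequalities yields \eqref{eq:gen:0}.

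There is essentially no obstacle here beyond bookkeeping: the one point that needs care is the "leximin improvement'' step, namely that an infinitesimal transfer equalizing two currently unequal normalized totals cannot be part of an optimal allocation — but this is exactly the reasoning already spelled out in the proof of \cref{lem:single:more_less}, so I would simply invoke it rather than re-derive it, noting only that every inequality in that proof acquires a division by the appropriate weight. Indeed the statement of \cref{lem:gen:more_less} is just \cref{lem:mult:more_less} restricted to $m=1$ (with $a_{k1}^t=1$ for all $k,t$), so one could alternatively deduce it as a direct corollary; I would present the short self-contained exchange argument above for clarity.
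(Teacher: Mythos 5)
Your proposal is correct and matches the paper's intent exactly: the paper states \cref{lem:gen:more_less} as the $m=1$, all-ratios-equal-to-one specialization of \cref{lem:mult:more_less} (whose proof is the same weighted exchange argument you give), so both your self-contained feasibility-plus-leximin-optimality argument and your observation that the lemma is a direct corollary of \cref{lem:mult:more_less} are exactly the paper's reasoning. All the inequality chains you use to establish feasibility of the transfers ($\hat r_i^t>g_i(\hat d_i^t)$, $\hat r_j^t<\hat d_j^t$ in one outcome and the mirrored pair in the other) are the correct ones.
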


Next we generalize \Cref{lem:single:aux_bound}. Note that this is much different and stronger than what is stated in \cref{lem:mult:aux_bound}; even without the assumption that users collude \cref{lem:mult:aux_bound} proves that in \nameW, for all $i,t$:
\begin{equation*}
    \hat R_i^t - \bar R_i^t \le \bar R_1^t \max_k\frac{w_i}{w_k}
\end{equation*}

\begin{lemma}\label{lem:gen:aux_bound}
    Fix an epoch $t$ and the allocations of two different outcomes $\{\hat R_k^{t-1}\}_{k\in [n]}$ and $\{\bar R_k^{t-1}\}_{k\in [n]}$. Assume that $\{\bar d_i^t\}_{i\in [n]}$ are some users' demands and that $\{\hat d_i^t\}_{i\in [n]}$ are the same demands except users in $I$, who deviate but not by over-reporting, i.e., $\hat d_i^t \le \bar d_i^t$ for $i\in I$. Then, for any $\a\in[0,1]$, it holds that
    \begin{equation}\label{eq:gen:aux}
        \sum_{k\in [n]}\left(\hat R_k^t - \bar R_k^t\right)^+
        -
        \sum_{k\in [n]}\left(\hat R_k^{t-1} - \bar R_k^{t-1}\right)^+
        \le
        \One{\sum_{k\in I}\hat d_k^t < \sum_{k\in I}\bar d_k^t} \sum_{k\in I}\bar r_k^t
    \end{equation}
\end{lemma}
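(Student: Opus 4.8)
The plan is to follow the proof of \cref{lem:single:aux_bound} almost line for line, replacing the single deviator $1$ by the coalition $I$ and \cref{lem:single:more_less} by its weighted version \cref{lem:gen:more_less}. Define $P^t=\{k\in[n]:\hat R_k^t>\bar R_k^t\}$ for every $t$, write $\chi=\One{\sum_{k\in I}\hat d_k^t<\sum_{k\in I}\bar d_k^t}$ for the indicator on the right-hand side of \eqref{eq:gen:aux}, and assume toward a contradiction that the left-hand side of \eqref{eq:gen:aux} strictly exceeds $\chi\sum_{k\in I}\bar r_k^t$. Since $P^{t-1}$ is exactly the index set where $\hat R_k^{t-1}-\bar R_k^{t-1}>0$, we have $\sum_{k\in P^t}(\hat R_k^{t-1}-\bar R_k^{t-1})\le\sum_{k\in P^{t-1}}(\hat R_k^{t-1}-\bar R_k^{t-1})$; feeding this into the contradiction hypothesis telescopes $\hat R_k^t-\bar R_k^t-(\hat R_k^{t-1}-\bar R_k^{t-1})$ into $\hat r_k^t-\bar r_k^t$ and gives
\begin{equation*}
    \sum_{k\in P^t}\left(\hat r_k^t-\bar r_k^t\right) \;>\; \chi\sum_{k\in I}\bar r_k^t .
\end{equation*}
Because only users in $I$ change their reports and they never over-report, $\sum_k\hat d_k^t\le\sum_k\bar d_k^t$; this yields $\sum_k\hat r_k^t\le\sum_k\bar r_k^t$ by exactly the single-resource argument used in \cref{lem:single:aux_bound} (which is unaffected by the weights), and subtracting the last display from it gives the complementary bound $\sum_{k\notin P^t}(\bar r_k^t-\hat r_k^t)>\chi\sum_{k\in I}\bar r_k^t$.

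From the first inequality there is an $i\in P^t$ with $\hat r_i^t>\bar r_i^t$, and from the second there is a $j\notin P^t$ with $\bar r_j^t>\hat r_j^t$. The one genuinely new point — the place the coalition replaces a single deviator — is to argue $j$ can be taken with $\hat d_j^t=\bar d_j^t$. If $\sum_{k\in I}\hat d_k^t=\sum_{k\in I}\bar d_k^t$ then, since each coalition member only under-reports, \emph{every} user's demand is unchanged and there is nothing to do; otherwise $\chi=1$, and using $\sum_{i\in I,\,i\notin P^t}(\bar r_i^t-\hat r_i^t)\le\sum_{i\in I}\bar r_i^t$ (as all allocations are nonnegative) we obtain $\sum_{k\notin P^t,\,k\notin I}(\bar r_k^t-\hat r_k^t)\ge\sum_{k\notin P^t}(\bar r_k^t-\hat r_k^t)-\sum_{i\in I}\bar r_i^t>0$, so a valid $j$ exists outside $I$ and hence has $\hat d_j^t=\bar d_j^t$.

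Finally, the chosen $i$ satisfies $\hat d_i^t\le\bar d_i^t$ (coalition members do not over-report, non-members are unchanged) and the chosen $j$ satisfies $\bar d_j^t\le\hat d_j^t$, so \cref{lem:gen:more_less} applies to the pair $(i,j)$ and gives $(\hat R_i^t-\bar R_i^t)/w_i\le(\hat R_j^t-\bar R_j^t)/w_j$. As $i\in P^t$ the left side is strictly positive, hence so is the right side, forcing $\hat R_j^t-\bar R_j^t>0$, i.e.\ $j\in P^t$, contradicting $j\notin P^t$. I expect the only delicate parts to be this coalition bookkeeping in the choice of $j$, and remembering that — unlike in \cref{lem:single:aux_bound}, where the contradiction came directly from $\hat R_i^t-\bar R_i^t\le\hat R_j^t-\bar R_j^t$ — the contradiction now must be extracted from the weight-normalized inequality by a sign argument; checking the total-resource monotonicity $\sum_k\hat r_k^t\le\sum_k\bar r_k^t$ in the weighted single-resource model is routine and carries over verbatim.
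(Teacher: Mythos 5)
Your proposal is correct and follows the same route as the paper's proof: the same potential set $P^t$, the same telescoping and total-resource-monotonicity steps to produce the two inequalities, the same bookkeeping (subtracting $\sum_{k\in I,\,k\notin P^t}(\bar r_k^t-\hat r_k^t)\le\sum_{k\in I}\bar r_k^t$) to locate a witness $j\notin P^t\cup I$ with unchanged demand, and the same sign-based contradiction from the weight-normalized conclusion of \cref{lem:gen:more_less}. No gaps.
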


\begin{proof}
    First we define $P^t=\{k\in [n]: \hat R_k^t > \bar R_k^t\}$ for all $t$. Suppose by contradiction:
    \begin{equation*}
        \sum_{k\in P^t}\left(\hat R_k^t - \bar R_k^t\right)
        -
        \sum_{k\in P^{t-1}}\left(\hat R_k^{t-1} - \bar R_k^{t-1}\right)
        >
        \One{\sum_{k\in I}\hat d_k^t < \sum_{k\in I}\bar d_k^t} \sum_{k\in I}\bar r_k^t
    \end{equation*}
    
    Because $\sum_{k\in P^{t}}(\hat R_k^{t-1} - \bar R_k^{t-1})\le\sum_{k\in P^{t-1}}(\hat R_k^{t-1} - \bar R_k^{t-1})$, the above inequality implies
    \begin{equation}\label{eq:gen:11}
        \sum_{k\in P^t}\left(\hat r_k^t - \bar r_k^t\right) >
        \One{\sum_{k\in I}\hat d_k^t < \sum_{k\in I}\bar d_k^t} \sum_{k\in I}\bar r_k^t
    \end{equation}

    Because users only under-report their demand, $\hat d_k^t \le \bar d_k^t$, it holds that $\sum_k \bar r_i^t \ge \sum_k \hat r_i^t$, i.e., the total resources allocated to the users does not increase when user $1$ deviates. Combining this fact with \eqref{eq:gen:11} we get that
    \begin{equation}\label{eq:gen:12}
        \sum_{k\notin P^t}\left(\bar r_k^t - \hat r_k^t\right)
        >
        \One{\sum_{k\in I}\hat d_k^t < \sum_{k\in I}\bar d_k^t} \sum_{k\in I}\bar r_k^t
    \end{equation}

    We notice that because of \eqref{eq:gen:11}, there exists a user $i\in P^t$ for whom $\hat r_i^t > \bar r_i^t$; because of \eqref{eq:gen:12}, there exists a user $j\notin P^t$ for whom $\bar r_j^t > \hat r_j^t$. Additionally for that $j$ we can assume that $\hat d_j^t = \bar d_j^t$ because:
    \begin{itemize}
        \item If users in $I$ do not deviate then for all $k$, $\hat d_k^t = \bar d_k^t$.
        
        \item If $\sum_{k\in I}\hat d_k^t < \sum_{k\in I}\bar d_k^t$, then \eqref{eq:gen:12} implies $\sum_{k\notin P^t \cup I}\left(\bar r_k^t - \hat r_k^t\right) > 0$, i.e., $j\notin I$ and we assumed that only users in $I$ deviate.
    \end{itemize}
    
    Thus we have $\hat d_i^t \le \bar d_i^t$ (since no user over-reports), $\hat d_j^t = \bar d_j^t$, $\hat r_i^t > \bar r_i^t$, and $\hat r_j^t < \bar r_j^t$. Now \Cref{lem:gen:more_less} proves that $(\hat R_i^t - \bar R_i^t)/w_i \le (\hat R_j^t - \bar R_j^t)/w_j$. This leads to a contradiction, because $i\in P^t$ and $j\notin P^t$, i.e., $\hat R_i^t - \bar R_i^t > 0 \ge \hat R_j^t - \bar R_j^t$.
\end{proof}

Next we prove that users do not want to over-report their demand.

\generalover*

Similar to \cref{thm:single:no_over_report}, we first prove an auxiliary lemma.

\begin{lemma}\label{lem:gen:over_aux}
    Fix an epoch $T_0$ and the allocations of an outcome $\{\hat R_k^{T_0-1}\}_{k\in [n]}$. Fix another epoch $T\ge T_0$ and assume that in epochs $T_0+1,T_0+2,\ldots,T$ users in $I$ do not over-report their demand. Then, for any $\a\in[0,1]$, the users in $I$ cannot increase their utility in round $T$ by over-reporting their demand in epoch $T_0$.
\end{lemma}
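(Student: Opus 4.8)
The plan is to imitate the proof of \cref{lem:single:over_aux}, replacing the single deviator $1$ by the coalition $I$ and per-user quantities by sums over $I$. Given the outcome $\{\hat\cdot\}$, construct an alternative outcome $\{\bar\cdot\}$ that is identical to it in every epoch and for every user, except that in epoch $T_0$ each $i\in I$ reports $\bar d_i^{T_0}=\min\{\hat d_i^{T_0},d_i^{T_0}\}$ instead of $\hat d_i^{T_0}$; thus $\{\bar\cdot\}$ involves no over-reporting by $I$ in $T_0$, and if no member of $I$ over-reports in $T_0$ the two outcomes coincide and there is nothing to prove. Since the demands agree up to and including epoch $T_0-1$ (and for all users $k\notin I$ in every epoch), the two outcomes agree through $T_0-1$: $\hat R_k^{T_0-1}=\bar R_k^{T_0-1}$ and $\hat U_k^{T_0-1}=\bar U_k^{T_0-1}$ for all $k$. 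It then suffices to prove $\sum_{i\in I}\hat U_i^T\le\sum_{i\in I}\bar U_i^T$.

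The next step is to record the epoch-$T_0$ structure. Passing from $\{\hat\cdot\}$ to $\{\bar\cdot\}$ only tightens the demand caps of the members of $I$, leaving all prior cumulative allocations and all other caps fixed. By the same max-min monotonicity used (for a single user) inside the proof of \cref{lem:single:over_aux}, every $k\notin I$ weakly gains, $\hat r_k^{T_0}\le\bar r_k^{T_0}$, and every $i\in I$ satisfies $\bar r_i^{T_0}\ge\min\{d_i^{T_0},\hat r_i^{T_0}\}$: either the tightened cap $\bar d_i^{T_0}\ge\hat r_i^{T_0}$ is still not a binding constraint, so $i$ keeps its allocation or gains from the resources freed by the other members of $I$, or $\bar d_i^{T_0}<\hat r_i^{T_0}$, which forces $\bar d_i^{T_0}=d_i^{T_0}$, and then $i$ — being above the (weakly raised) water level — attains exactly $d_i^{T_0}$. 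Put $W=\sum_{i\in I}(\hat r_i^{T_0}-d_i^{T_0})^+\ge 0$, the total over-allocation wasted by the coalition in epoch $T_0$. These facts give: (i) $\sum_k(\hat R_k^{T_0}-\bar R_k^{T_0})^+=\sum_{i\in I}(\hat r_i^{T_0}-\bar r_i^{T_0})^+\le W$, using $\hat r_i^{T_0}-\bar r_i^{T_0}\le(\hat r_i^{T_0}-d_i^{T_0})^+$; and (ii) since $\bar u_i^{T_0}=\bar r_i^{T_0}$ (as $\bar r_i^{T_0}\le\bar d_i^{T_0}\le d_i^{T_0}$) and $\hat u_i^{T_0}=\hat r_i^{T_0}-(\hat r_i^{T_0}-d_i^{T_0})^+$, the coalition's epoch-$T_0$ utility change is $\sum_{i\in I}(\hat u_i^{T_0}-\bar u_i^{T_0})=\sum_{i\in I}(\hat R_i^{T_0}-\bar R_i^{T_0})-W$.

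For $\tau\in[T_0+1,T]$ the members of $I$ do not over-report (the hypothesis of the lemma), so $\hat u_i^\tau=\hat r_i^\tau$ and $\bar u_i^\tau=\bar r_i^\tau$ for $i\in I$, and telescoping allocations gives $\sum_{i\in I}\sum_{\tau=T_0+1}^{T}(\hat u_i^\tau-\bar u_i^\tau)=\sum_{i\in I}\big[(\hat R_i^T-\bar R_i^T)-(\hat R_i^{T_0}-\bar R_i^{T_0})\big]$. Adding this to (ii) collapses everything to $\sum_{i\in I}\hat U_i^T-\sum_{i\in I}\bar U_i^T=\sum_{i\in I}(\hat R_i^T-\bar R_i^T)-W$. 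To finish, bound the first term with \cref{lem:gen:aux_bound}: in every epoch $\tau\in[T_0+1,T]$ we have $\hat d_i^\tau=\bar d_i^\tau$ for $i\in I$, so the indicator on the right-hand side of \eqref{eq:gen:aux} vanishes and $\sum_k(\hat R_k^\tau-\bar R_k^\tau)^+$ does not increase; summing over these epochs yields $\sum_k(\hat R_k^T-\bar R_k^T)^+\le\sum_k(\hat R_k^{T_0}-\bar R_k^{T_0})^+\le W$ by (i). Hence $\sum_{i\in I}(\hat R_i^T-\bar R_i^T)\le\sum_{i\in I}(\hat R_i^T-\bar R_i^T)^+\le\sum_k(\hat R_k^T-\bar R_k^T)^+\le W$, giving $\sum_{i\in I}\hat U_i^T-\sum_{i\in I}\bar U_i^T\le 0$. (\cref{thm:gen:no_over_report} then follows by applying this lemma repeatedly to the last epoch in which some member of $I$ over-reports, exactly as \cref{thm:single:no_over_report} follows from \cref{lem:single:over_aux}.)

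The main obstacle is the epoch-$T_0$ monotonicity claim for a coalition rather than a single user, i.e.\ that tightening several demand caps at once still weakly increases every outsider's allocation and leaves each member $i$ with at least $\min\{d_i^{T_0},\hat r_i^{T_0}\}$; the single-user version is used only implicitly in \cref{lem:single:over_aux}, and for a set of users one must argue it carefully, in particular checking that the per-epoch guarantee $g_i(\cdot)$ — which also shrinks when $d_i^{T_0}$ is lowered — does not interfere, since $g_i(\bar d_i^{T_0})\le\bar d_i^{T_0}\le\min\{d_i^{T_0},\hat r_i^{T_0}\}$ still lies below the threshold we need. Everything after that is routine bookkeeping on top of the already-established \cref{lem:gen:aux_bound}.
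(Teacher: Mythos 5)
Your construction and bookkeeping follow the paper's proof almost exactly: the same alternative outcome with $\bar d_i^{T_0}=\min\{d_i^{T_0},\hat d_i^{T_0}\}$ for $i\in I$, the same observation that the coalition's epoch-$T_0$ utility change equals its allocation change minus the wasted excess, the same telescoping over $[T_0+1,T]$ using the no-over-reporting hypothesis, and the same application of \cref{lem:gen:aux_bound} with a vanishing indicator. All of that is correct.

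The one genuine gap is the step you yourself flag as ``the main obstacle'': the epoch-$T_0$ claim that every outsider weakly gains and that each $i\in I$ retains $\bar r_i^{T_0}\ge\min\{d_i^{T_0},\hat r_i^{T_0}\}$ is asserted from water-filling intuition but never proved, and it is the only part of the lemma that requires an actual argument. The paper proves an equivalent statement, its \eqref{eq:gen:91} --- any user with $\hat r_i^{T_0}>\bar r_i^{T_0}$ must be an over-reporter in $I$ sitting at $\bar r_i^{T_0}=\bar d_i^{T_0}=d_i^{T_0}$ --- by contradiction via the comparison lemma \cref{lem:gen:more_less}: if such an $i$ had $\bar r_i^{T_0}<\bar d_i^{T_0}$, then Pareto optimality of the bar allocation forces the existence of some $j$ with $\bar r_j^{T_0}>\hat r_j^{T_0}$, and comparing $i$ and $j$ yields $(\hat R_i^{T_0}-\bar R_i^{T_0})/w_i\le(\hat R_j^{T_0}-\bar R_j^{T_0})/w_j$, a positive quantity bounded by a nonpositive one. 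If you close your gap this way, note one subtlety: \cref{lem:gen:more_less} assumes $\hat d_i^{T_0}\le\bar d_i^{T_0}$ for the gaining user, which is reversed for an over-reporter; that hypothesis is only used inside the lemma's proof to derive $\bar r_i^{T_0}<\bar d_i^{T_0}$ and the slack above the guarantee $g_i(\cdot)$, both of which must then be checked directly in this setting. So your reduction from the epoch-$T_0$ claim to the conclusion is sound and matches the paper, but as written the proposal leaves the core of the lemma unestablished.
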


\begin{proof}
    To prove the lemma we are going to create another outcome in which the over-reports in epoch $T_0$ are changed to a truthful ones and prove that this does not decrease $I$'s total utility in epoch $T$.
    
    For all $k,t$, let $\bar d_k^t = \hat d_k^t$, except for epoch $T_0$ and users $i\in I$, where $\bar d_i^{T_0}$ are the actual demands if they were over-reports: $\bar d_i^{T_0} = \min\{d_i^{T_0}, \hat d_i^{T_0}\}$ for $i\in I$. This means that the two outcomes are the same before epoch $T_0$, i.e., for all $k$, $\bar R_k^{T_0-1} = \hat R_k^{T_0-1}$ and $\bar U_k^{T_0-1} = \hat U_k^{T_0-1}$. In epoch $T_0$ users' allocations are different. First, let $I'\sub I$ be the users who over-report their demand in $T_0$, i.e., the users $i$ for whom $\bar d_i^{T_0} = d_i^{T_0} < \hat d_i^{T_0}$. Now we prove that for any user $i$ who got more resources, $\hat r_i^{T_0} > \bar r_i^{T_0}$, that user must be in $I'$ and that additional resources must be in excess of their true demand:
    \begin{equation}\label{eq:gen:91}
        \textrm{if }\; \hat r_i^{T_0} - \bar r_i^{T_0} = \hat R_i^{T_0} - \bar R_i^{T_0} > 0
        \;\textrm{ then }\; \bar r_i^{T_0} = \bar d_i^{\,T_0} \;\textrm{ and }\; i\in I'
    \end{equation}
    To prove \eqref{eq:gen:91} we study two cases:
    \begin{itemize}
        \item If $\hat r_i^{T_0} - \bar r_i^{T_0} = \hat R_i^{T_0} - \bar R_i^{T_0} > 0$ and $\bar r_i^{T_0} < \bar d_i^{\,T_0}$, then $\hat r_i^{T_0} < \hat d_i^{T_0}$. This means user $i$'s demands are not satisfied both in $\hat d^{T_0}$ and $\bar d^{T_0}$, which implies that, since $\hat r_i^{T_0} > \bar r_i^{T_0}$ and for all users $k$, $\bar d_k^{T_0} \le \hat d_k^{T_0}$, for some other user $j$, $\hat r_j^{T_0} - \bar r_j^{T_0} = \hat R_j^{T_0} - \bar R_j^{T_0} < 0$. This, because of \Cref{lem:gen:more_less} and $\bar d_j^{T_0} \le \hat d_j^{\,T_0}$, entails that $(\hat R_i^{T_0} - \bar R_i^{T_0})/w_i \le (\hat R_j^{T_0} - \bar R_j^{T_0})/w_j$, a contradiction.
        
        \item If $\hat r_i^{T_0} - \bar r_i^{T_0} = \hat R_i^{T_0} - \bar R_i^{T_0} > 0$ and $i\notin I'$, then, because user $i$ got more resources in $\hat d^{T_0}$ with the same demand, for some user $j$, $\hat r_j^{T_0} - \bar r_j^{T_0} = \hat R_j^{T_0} - \bar R_j^{T_0} < 0$, which because of \Cref{lem:gen:more_less}, $\bar d_i^{T_0} = \hat d_i^{T_0}$, and $\bar d_j^{\,T_0} \le \hat d_j^{T_0}$ entails that $(\hat R_i^{T_0} - \bar R_i^{T_0})/w_i \le (\hat R_j^{T_0} - \bar R_j^{T_0})/w_j$, a contradiction.
    \end{itemize}
    
    Because of \eqref{eq:gen:91} we see that any additional resources that users in $I$ get in $T_0$ when they over-report are in excess of their demand (because for $i\in I'$, $\bar d_i^{\,T_0}$ is the true demand). This means that $\sum_{i\in I}(\hat R_i^{T_0} - \bar R_i^{T_0})^+ = x \ge 0$ is in excess of their demand and therefore
    \begin{equation}\label{eq:gen:85}
        \sum_{i\in I}\left(\hat U_i^{T_0} - \bar U_i^{T_0}\right)
        =
        \sum_{i\in I}\left(\hat R_i^{T_0} - \bar R_i^{T_0}\right) - x \le 0
    \end{equation}
    
    Additionally, because users in $I$ do not over-report $\hat d$ or $\bar d$ in epochs $T_0+1$ to $T$, it holds that for $t\in [T_0+1, T]$ $I$' utility is the same as the resources they receives: $\bar u_i^t = \bar r_i^t$ and $\hat u_i^t = \hat r_i^t$ for $i\in I$.  This fact, combined with \eqref{eq:gen:85} proves that
    \begin{equation*}
        \forall t\in[T_0,T],\;
        \sum_{i\in I}\left(\hat U_i^t - \bar U_i^t\right)
        =
        \sum_{i\in I}\left(\hat R_i^t - \bar R_i^t\right) - x
    \end{equation*}
    
    Thus, in order for this over-reporting to be a strictly better strategy, it most hold that $\sum_{i\in I}(\hat R_i^t - \bar R_i^t) > x$. We will complete the proof by proving that the opposite holds. Since in epochs $t\in[T_0+1, T]$ it holds that $\hat d_i^t = \bar d_i^t$ for $i\in I$, we can use \cref{lem:gen:aux_bound} to sum \eqref{eq:gen:aux} for all $t\in [T_0+1, T]$ and get that
    \begin{equation*}
        \sum_k\left(\hat R_k^T - \bar R_k^T\right)^+
        -
        \sum_k\left(\hat R_k^{T_0} - \bar R_k^{T_0}\right)^+
        \le 0
    \end{equation*}
    
    The above, because $(\hat R_k^T - \bar R_k^T)^+\ge 0$, $\hat R_k^{T_0} - \bar R_k^{T_0} \le 0$ for $k\notin I$ (from \eqref{eq:gen:91}), and $\sum_{i\in I}(\hat R_i^{T_0} - \bar R_i^{T_0})^+ = x$, proves that $\sum_{i\in I}(\hat R_i^T - \bar R_i^T) \le x$. This completes the proof.
\end{proof}

Now \cref{thm:gen:no_over_report} can be viewed as a corollary of the above lemma.
Finally, we prove the inventive compatibility bound for adversarial demands.

\generalupper*

\begin{proof}
    Because of \Cref{thm:gen:no_over_report} we assume without loss of generality that users do not over-report their demand. This entails that we can focus on the users' allocations instead of their utilities. 
    
    Fix an epoch $t$ and let $T\le t$ be the last time where $\sum_{i\in I}\hat r_i^t > \sum_{i\in I}r_i^t$. We notice that
    \begin{equation}\label{eq:gen:60}
        \sum_{i\in I}\left(\hat R_i^t - R_i^t\right)
        \;=\;
        \sum_{i\in I}\left(\hat R_i^T - R_i^T\right) + \sum_{\tau=T+1}^t\sum_{i\in I}\left(\hat r_i^T - r_i^T\right)
        \;\le\;
        \sum_{i\in I}\left(\hat R_i^T - R_i^T\right)
    \end{equation}
    
    Because $\sum_{k\in I}\hat r_k^T > \sum_{k\in I}r_k^T$ and $\sum_{k\in [n]}\hat r_k^T \le \sum_{k\in [n]}r_k^T$ (since users do not over-report their demand) there must exist a $i\in I$ such that $\hat r_i^T > r_i^T$ and a $j\notin I$ such that $\hat r_j^T < r_j^T$. Because $\hat d_i^T\le d_i^T$ and $\hat d_j^T = d_j^T$ we can use \Cref{lem:gen:aux_bound} to get
    \begin{equation}\label{eq:gen:61}
        \frac{w_j}{w_i}\left(\hat R_i^T - R_i^T\right) \le \hat R_j^T - R_j^T
    \end{equation}
    
    If we use \Cref{lem:gen:aux_bound} and sum \eqref{eq:gen:aux} for epochs up to $T$ we get
    \begin{equation*}
        \sum_{k\in [n]}(\hat R_k^T - R_k^T)^+ \le \sum_{k\in I} R_k^T
    \end{equation*}
    which combined with \eqref{eq:gen:61} gives
    \begin{equation}\label{eq:gen:62}
        \frac{w_j}{w_i}\left(\hat R_i^T - R_i^T\right)^+ + \sum_{k\in I}(\hat R_k^T - R_k^T) \le \sum_{k\in I} R_k^T \ou[\le]{t\ge T}{} \sum_{k\in I} R_k^t
    \end{equation}
    
    \eqref{eq:gen:60} and \eqref{eq:gen:62} prove that $\sum_{k\in I}\hat R_k^t \le 2\sum_{k\in I}R_k^t$, which proves the first part of the lemma.
    
    If $|I|=1$, then $I=\{i\}$ where $i$ is the user appearing in \eqref{eq:gen:62}. If $\hat R_i^T-R_i^T < 0$ the desired bound is true; otherwise, \eqref{eq:gen:60} and \eqref{eq:gen:62} prove that
    \begin{equation*}
        \min_{j\neq i}\frac{w_i}{w_j}\left(\hat R_i^t - R_i^t\right) + \hat R_i^t - R_i^t \le R_i^t
    \end{equation*}
    which proves the desired bound: $\hat R_i^t\le \left(1 + \max_{j\neq i}\frac{w_i}{w_i + w_j}\right) R_i^t$.
\end{proof}
\section{Deferred Proof of Section \ref{sec:time}}
\label{sec:app:time}

We first prove a more general version of \Cref{lem:single:aux_bound}.

\begin{lemma}\label{lem:time:aux_bound}
    Fix an epoch $t$ and the allocations of two different outcomes $\{\hat R_k^{t-1}\}_{k\in [n]}$ and $\{\bar R_k^{t-1}\}_{k\in [n]}$. Assume that $\{\bar d_k^t\}_{k\in [n]}$ are some users' demands and that $\{\hat d_k^t\}_{k\in [n]}$ are the same demands except user $1$'s, who deviates but not by over-reporting, i.e., $\hat d_1^t \le \bar d_1^t$. Then, in \nameS for any $\a\in[0,1]$, it holds that
    \begin{equation}\label{eq:time:aux_old}
        \sum_{k\in [n]}\left(\hat R_k^t - R_k^t\right)^+
        -
        \sum_{k\in [n]}\left(\hat R_k^{t-1} - R_k^{t-1}\right)^+
        \le
        f^t
    \end{equation}
    where $f^t =\min\left( (r_1^t - \hat r_1^t)^+,\,(R_1^t - \hat R_1^t)^+ \right)$.
\end{lemma}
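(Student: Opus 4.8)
The plan is to follow the proof of \cref{lem:single:aux_bound} while sharpening its right-hand side, exploiting that $f^t=\min\{(r_1^t-\hat r_1^t)^+,\,(R_1^t-\hat R_1^t)^+\}$ is a minimum of two quantities that can be bounded by essentially independent arguments. Writing $D^t=\sum_{k\in[n]}(\hat R_k^t-R_k^t)^+$, the claim $D^t-D^{t-1}\le f^t$ is equivalent to the conjunction of $D^t-D^{t-1}\le(r_1^t-\hat r_1^t)^+$ and $D^t-D^{t-1}\le(R_1^t-\hat R_1^t)^+$, so I would prove these two inequalities in turn. Two facts will be used throughout, exactly as in \cref{lem:single:aux_bound}: with $P^t=\{k:\hat R_k^t>R_k^t\}$ one has $\sum_{k\in P^t}(\hat R_k^{t-1}-R_k^{t-1})\le D^{t-1}$; and in a single-resource epoch the total amount allocated equals $\min\{\mathcal R,\sum_k d_k^t\}$, so under-reporting user~$1$'s demand at epoch $t$ cannot increase it, that is, $\sum_k\hat r_k^t\le\sum_k r_k^t$ (this holds for every $\a\in[0,1]$ since $\sum_k g(d_k^t)\le\a\mathcal R\le\mathcal R$).

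For $D^t-D^{t-1}\le(r_1^t-\hat r_1^t)^+$ I would run the contradiction argument of \cref{lem:single:aux_bound} nearly verbatim: if it fails, the two facts above produce an $i\in P^t$ with $\hat r_i^t>r_i^t$ and a $j\notin P^t$ with $r_j^t>\hat r_j^t$, and \cref{lem:single:more_less} applied to this pair contradicts $i\in P^t,\ j\notin P^t$. The only delicate point is that $j$ must satisfy $\hat d_j^t=d_j^t$ for \cref{lem:single:more_less} to apply: if user~$1$ reports truthfully at epoch $t$ all demands coincide and there is nothing to check, while if she under-reports then, since $(r_1^t-\hat r_1^t)^+\ge r_1^t-\hat r_1^t$, deleting the $k=1$ term from the strict inequality $\sum_{k\notin P^t}(r_k^t-\hat r_k^t)>(r_1^t-\hat r_1^t)^+$ leaves a strictly positive sum over $k\neq 1$, so a witness $j\neq 1$ exists.

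The new content is the bound $D^t-D^{t-1}\le(R_1^t-\hat R_1^t)^+$. If $1\in P^t$ the right-hand side is $0$ and the argument above applies unchanged; if $1\notin P^t$ and also $1\notin P^{t-1}$ then $R_1^t-\hat R_1^t=(R_1^{t-1}-\hat R_1^{t-1})+(r_1^t-\hat r_1^t)\ge(r_1^t-\hat r_1^t)^+$ and the bound follows from the first one. The hard case is the ``crossover'' $1\in P^{t-1}$ but $1\notin P^t$, where $r_1^t-\hat r_1^t$ can exceed $R_1^t-\hat R_1^t$ and, if moreover every $k\notin P^t$ with $k\neq 1$ has $r_k^t\le\hat r_k^t$ (so user~$1$ is under-reporting and is the sole source of freed resources), no legitimate ``giver'' $j\neq 1$ is available for \cref{lem:single:more_less}. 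I would handle this subcase by direct accounting: use $1\in P^{t-1}$ to write $D^{t-1}\ge(\hat R_1^{t-1}-R_1^{t-1})+\sum_{k\in P^t,\,k\neq 1}(\hat R_k^{t-1}-R_k^{t-1})$, expand $D^t=\sum_{k\in P^t,\,k\neq 1}\big[(\hat R_k^{t-1}-R_k^{t-1})+(\hat r_k^t-r_k^t)\big]$ using $1\notin P^t$, and note $\sum_{k\in P^t,\,k\neq 1}(\hat r_k^t-r_k^t)\le\sum_{k\neq 1}(\hat r_k^t-r_k^t)\le r_1^t-\hat r_1^t$ (the first inequality because $\sum_{k\notin P^t,\,k\neq 1}(r_k^t-\hat r_k^t)\le 0$ in this subcase, the second because $\sum_k\hat r_k^t\le\sum_k r_k^t$). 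Combining gives $D^t-D^{t-1}\le(r_1^t-\hat r_1^t)-(\hat R_1^{t-1}-R_1^{t-1})=R_1^t-\hat R_1^t=(R_1^t-\hat R_1^t)^+$. The complementary subcase, in which $\hat d_1^t=d_1^t$ or some $j\neq 1$ with $j\notin P^t$ has $r_j^t>\hat r_j^t$, reduces once more to the \cref{lem:single:more_less} argument.

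I expect the main obstacle to be organizing this crossover case cleanly: the point to pin down is that whenever the \cref{lem:single:more_less} route is blocked, user~$1$ must be the unique source of resources freed at epoch $t$, so conservation of the total allocation forces the increase of $D$ to be at most $r_1^t-\hat r_1^t$, while the surplus $\hat R_1^{t-1}-R_1^{t-1}$ that user~$1$ no longer contributes to $D^t$ accounts for the gap down to $R_1^t-\hat R_1^t$. Everything else is a transcription of \cref{lem:single:aux_bound} with the refined right-hand side, and the final statement is obtained by taking the minimum of the two bounds.
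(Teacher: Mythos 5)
Your proof is correct and follows essentially the same route as the paper's: the same potential $P^t=\{k:\hat R_k^t>R_k^t\}$, the same conservation-of-total-allocation step, and the same contradiction via \cref{lem:single:more_less} applied to a gainer $i\in P^t$ and a giver $j\notin P^t$ with $j\neq 1$. The only organizational difference is in the crossover case $1\in P^{t-1}$, $1\notin P^t$: the paper shows the contradiction hypothesis itself forces $\sum_{k\in P^t}(\hat r_k^t-r_k^t)>r_1^t-\hat r_1^t$ and hence a giver $j\neq 1$ always exists, whereas you split off the ``no giver'' subcase and dispatch it by direct accounting --- both are valid and yield the same bound.
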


\begin{proof}
    Let $P^t = \{k: \hat R_k^t > R_k^t\}$ and suppose by contradiction that
    \begin{equation}\label{eq:time:18}
        \sum_{k\in P^t} \left( \hat R_k^t -  R_k^t \right)^+
        -
        \sum_{k\in P^{t-1}} \left( \hat R_k^{t-1} -  R_k^{t-1} \right)^+
        >
        f^t
    \end{equation}

    In order to get a contradiction we are going to show that the following two conditions are implied by \eqref{eq:time:18}:
    \begin{enumerate}[label=(\Roman*)]
        \item\label{cond:1} There exists a user $i\in P^t$, such that $\hat r_i^t > r_i^t$.
        \item\label{cond:2} There exists a user $j\notin P^t$, such that $j\neq 1$ and $\hat r_j^t < r_j^t$.
    \end{enumerate}
    The reason conditions \ref{cond:1} and \ref{cond:2} lead to a contradiction is the following: users $i$ and $j$ satisfy the conditions of \Cref{lem:single:more_less}, meaning that $\hat R_i^t - R_i^t \leq \hat R_j^t - R_j^t$.
    However this is a contradiction due to the facts that $i\in P^t$ and $j\notin P^t$

    To prove that \eqref{eq:time:18} implies conditions \ref{cond:1} and \ref{cond:2}, we will now distinguish two cases; these are shown in \Cref{prop:aux1,prop:aux2}.

    \begin{proposition}\label{prop:aux1}
        If $\hat r_1^t \geq r_1^t$ or $\hat R_1^t \geq R_1^t$, \eqref{eq:time:18} implies conditions \ref{cond:1} and \ref{cond:2}.
    \end{proposition}

    \begin{proof}[Proof of \Cref{prop:aux1}]
        In this case we have that $f^t = 0$. Because of the definition of $P^t$, we can re-write \eqref{eq:time:18}
        \begin{equation}\label{eq:time:19}
            \sum_{k\in P^t} \hat r_k^t - r_k^t
            > 0
        \end{equation}

        \eqref{eq:time:19} implies \ref{cond:1}. \eqref{eq:time:19} and the fact that $\sum_k r_k^t \geq \sum_k \hat r_k^t$ (which comes from user $1$ not over-reporting her demand) prove that
        \begin{equation*}
            \sum_{k\notin P^t} r_k^t - \hat r_k^t
            > 0
        \end{equation*}

        The above implies condition \ref{cond:2}: there exists a $j\notin P^t$ such that $\hat r_j^t < r_j^t$. The reason that the aforementioned $j$ cannot be user $1$ is because of our assumptions: either $\hat r_1^t \geq r_1^t$ or $\hat R_1^t \geq R_1^t$.
    \end{proof}

    \begin{proposition}\label{prop:aux2}
        If $\hat r_1^t < r_1^t$ and $\hat R_1^t < R_1^t$, \eqref{eq:time:18} implies conditions \ref{cond:1} and \ref{cond:2}.
    \end{proposition}

    \begin{proof}[Proof of \Cref{prop:aux2}]
        Because $\hat R_1^t < R_1^t$, it holds that $1\notin P^t$. Consider two cases:
        \begin{itemize}
            \item If $\hat R_1^{t-1} < R_1^{t-1}$, then $f^t = r_1^t - \hat r_1^t$. In this case \eqref{eq:time:18} implies
            \begin{equation*}
                \sum_{k\in P^t} \hat r_k^t - r_k^t > r_1^t - \hat r_1^t > 0
            \end{equation*}

            \item If $\hat R_1^{t-1} \geq R_1^{t-1}$, then $f^t = R_1^t - \hat R_1^t$ and $1\in P^{t-1}$. In this case \eqref{eq:time:18} implies
            \begin{eqnarray*}
                \sum_{k\in P^t} \left( \hat R_k^t -  R_k^t \right)
                -
                \sum_{k\in P^{t-1}} \left( \hat R_k^{t-1} -  R_k^{t-1} \right)
                &>&
                R_1^t - \hat R_1^t\\
                \sum_{k\in P^t} \left( \hat R_k^t -  R_k^t \right)
                -
                \sum_{k\in P^{t-1}\setminus\{1\}} \left( \hat R_k^{t-1} -  R_k^{t-1} \right)
                &>&
                r_1^t - \hat r_1^t\\
                \sum_{k\in P^t} \hat r_k^t - r_k^t &>& r_1^t - \hat r_1^t > 0
            \end{eqnarray*}
            where to get the last inequality we use the fact that $1\notin P^t$.
        \end{itemize}

        Thus in both cases the following inequality holds:
        \begin{equation}\label{eq:time:20}
            \sum_{k\in P^t} \hat r_k^t - r_k^t > r_1^t - \hat r_1^t > 0
        \end{equation}

        \eqref{eq:time:20} implies \ref{cond:1}. \eqref{eq:time:20} and $\sum_k r_k^t \geq \sum_k \hat r_k^t$ (which is true because user $1$ does not over-report) prove that
        \begin{eqnarray*}
            \sum_{k\notin P^t} r_k^t - \hat r_k^t
            &>& r_1^t - \hat r_1^t\\
            \sum_{\substack{k\notin P^t\\k\neq 1}} r_k^t - \hat r_k^t
            &>& 0
        \end{eqnarray*}

        The above implies condition \ref{cond:2}, which completes the proposition's proof.
    \end{proof}

    Due to \Cref{prop:aux1,prop:aux2} we have proven that \eqref{eq:time:18} always leads to a contradiction. This proves the lemma.
\end{proof}

We now use the above lemma to prove a corollary that directly bounds $\hat R_1^t - R_1^t$.

\begin{corollary}\label{cor:time:aux_bound}
    Let $f^t =\min\left( (r_1^t - \hat r_1^t)^+,\,(R_1^t - \hat R_1^t)^+ \right)$ and assume that user $1$ does not over-report her demand. Then, in \nameS for any $\a\in[0,1]$ and every epoch $t$,
    \begin{equation}\label{eq:time:aux}
        2\left(\hat R_1^t - R_1^t\right)
        \le
        \sum_{\tau=1}^t f^\tau
    \end{equation}
\end{corollary}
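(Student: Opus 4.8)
The plan is to combine the just-established \cref{lem:time:aux_bound} with a factor-of-two argument analogous to the proof of \cref{thm:single:upper}. Write $\Phi^\tau:=\sum_{k\in[n]}(\hat R_k^\tau-R_k^\tau)^+$ for the total over-allocation after epoch $\tau$. \cref{lem:time:aux_bound} says precisely that $\Phi^\tau-\Phi^{\tau-1}\le f^\tau$, and since $\Phi^0=0$, summing over epochs yields $\Phi^\tau\le\sum_{s=1}^\tau f^s$ for every $\tau$; because each $f^s\ge 0$, these partial sums are nondecreasing in $\tau$. Hence it suffices to exhibit an epoch $T\le t$ with $2(\hat R_1^t-R_1^t)\le \Phi^T$, since then $2(\hat R_1^t-R_1^t)\le\Phi^T\le\sum_{s=1}^T f^s\le\sum_{s=1}^t f^s$.

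If $\hat R_1^t\le R_1^t$ there is nothing to prove, so assume $\hat R_1^t>R_1^t$. Since $\hat R_1^t=\sum_{\tau=1}^t\hat r_1^\tau>\sum_{\tau=1}^t r_1^\tau=R_1^t$, some epoch $\tau\le t$ has $\hat r_1^\tau>r_1^\tau$; let $T$ be the largest such epoch. By maximality, $\hat r_1^\tau\le r_1^\tau$ for all $\tau\in\{T+1,\dots,t\}$, so $\hat R_1^t-R_1^t\le\hat R_1^T-R_1^T$, and in particular $\hat R_1^T-R_1^T>0$. I would then apply \cref{lem:single:more_less} at epoch $T$: since user $1$ does not over-report, the total allocated amount does not increase under her deviation (as used in the proof of \cref{lem:single:aux_bound}), i.e.\ $\sum_k\hat r_k^T\le\sum_k r_k^T$; combined with $\hat r_1^T>r_1^T$ this produces a user $j\ne 1$ with $\hat r_j^T<r_j^T$. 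As $j\ne 1$ we have $\hat d_j^T=d_j^T$, and $\hat d_1^T\le d_1^T$ because there is no over-reporting, so the hypotheses of \cref{lem:single:more_less} hold for the pair $(1,j)$ at epoch $T$, giving $\hat R_1^T-R_1^T\le\hat R_j^T-R_j^T$. Both of these quantities are positive, so $\Phi^T\ge(\hat R_1^T-R_1^T)+(\hat R_j^T-R_j^T)\ge 2(\hat R_1^T-R_1^T)\ge 2(\hat R_1^t-R_1^t)$, which is what was needed, and \eqref{eq:time:aux} follows.

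The step I expect to need the most care is the use of this ``reach-back'' epoch $T$: \cref{lem:single:more_less} can only be invoked at an epoch where user $1$ strictly gains resources within that single round, and this need not be epoch $t$ itself, so one must argue simultaneously that passing from $t$ back to $T$ does not decrease $\hat R_1^T-R_1^T$ relative to $\hat R_1^t-R_1^t$ (maximality of $T$) and that it does not weaken the bound ($\sum_{s\le T}f^s\le\sum_{s\le t}f^s$, using $f^s\ge 0$). Everything else — telescoping \cref{lem:time:aux_bound}, and the observation that under-reporting cannot increase the total allocation — is routine. Equivalently, the whole argument can be organized as an induction on $t$: the inductive step is immediate when $\hat r_1^t\le r_1^t$ (then $\hat R_1^t-R_1^t\le\hat R_1^{t-1}-R_1^{t-1}$ and one invokes the bound at $t-1$), and otherwise it is exactly the case $T=t$ treated above.
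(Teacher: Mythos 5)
Your proof is correct and follows essentially the same route as the paper's: pick the last epoch $T\le t$ with $\hat r_1^T>r_1^T$, telescope \cref{lem:time:aux_bound} up to $T$, produce a second user $j\ne 1$ with $\hat r_j^T<r_j^T$, and apply \cref{lem:single:more_less} to charge the gain twice against the total over-allocation. (You even cite the right lemma here, \cref{lem:single:more_less}, where the paper's text mistakenly references \cref{lem:single:aux_bound}.)
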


\begin{proof}
    Fix an epoch $t$ and let $t'\le t$ be the last epoch before $t$ where $\hat r_1^{t'} > r_1^{t'}$. If no such epoch exists, then $\hat R_1^t \le R_1^t$, in which case the lemma holds. We notice that $\hat R_1^{t} - R_1^{t} \le \hat R_1^{t'} - R_1^{t'}$.
    
    Using \Cref{lem:time:aux_bound} and summing \eqref{eq:time:aux_old} for $t$ from $1$ to $t'$ we get
    \begin{equation*}
        \sum_k\left(\hat R_k^{t'} - R_k^{t'}\right)^+
        \le
        \sum_{\tau=1}^{t'}f^\tau
    \end{equation*}
    
    Because $\hat r_1^{t'} > r_1^{t'}$ and $\sum_i r_i^{t'} \ge \sum_i \hat r_i^{t'}$ (since user $1$ does not over-report) it holds that for some user $j\neq 1$, $\hat r_j^{t'} < r_j^{t'}$. This means we can use \Cref{lem:single:aux_bound} to prove that $\hat R_1^{t'} - R_1^{t'} \le \hat R_j^{t'} - R_j^{t'}$, which makes the above inequality
    \begin{equation*}
        2(\hat R_1^t - R_1^t)
        \le
        2(\hat R_1^{t'} - R_1^{t'})
        \le
        \hat R_1^{t'} - R_1^{t'} + \hat R_j^{t'} - R_j^{t'}
        \le
        \sum_{\tau=1}^{t'}f^\tau
        \le
        \sum_{\tau=1}^t f^\tau
    \end{equation*}
    
    This completes the proof.
\end{proof}

Now we prove a series of lemmas with the notation introduced in \Cref{sec:time}, in order to prove \Cref{thm:time:repeat}, which we restate for completeness.

\repeat*

\begin{lemma}\label{lem:time:prev_batch}
    If user $1$ does not over-report her demand, for any $\ell=0,1,2,\dots$ and any $t_\ell\in [s_\ell, e_\ell)$ it holds that
    \begin{equation*}
        2\left(\hat R_1^{t_\ell} - R_1^{t_\ell}\right)
        \leq
        R_1^{s_\ell-1} - \sum_{k=0}^{\ell-1}\left(\hat R_1^{e_k} - R_1^{s_k-1}\right)
    \end{equation*}
\end{lemma}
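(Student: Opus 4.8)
The plan is to feed \cref{cor:time:aux_bound} a careful accounting of the one-epoch quantities $f^\tau$ over the alternating structure of the intervals. Since user $1$ does not over-report, \cref{cor:time:aux_bound} applies at epoch $t_\ell$ and gives $2\big(\hat R_1^{t_\ell}-R_1^{t_\ell}\big)\le\sum_{\tau=1}^{t_\ell}f^\tau$, so it suffices to prove that $\sum_{\tau=1}^{t_\ell}f^\tau\le R_1^{s_\ell-1}-\sum_{k=0}^{\ell-1}\big(\hat R_1^{e_k}-R_1^{s_k-1}\big)$. First I would partition $\{1,\dots,t_\ell\}$ into the ``down'' stretches $[1,s_0-1]$ and $[e_k,s_{k+1}-1]$ for $k=0,\dots,\ell-1$, and the ``up'' stretches $[s_k,e_k-1]$ for $k=0,\dots,\ell-1$ together with the partial stretch $[s_\ell,t_\ell]$. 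On each up stretch I claim $\hat R_1^\tau\ge R_1^\tau$ throughout: $\hat R_1^{s_k}>R_1^{s_k}$ by definition, and if $\tau_0>s_k$ were the first epoch in the stretch with $\hat R_1^{\tau_0}<R_1^{\tau_0}$ then $\hat R_1^{\tau_0-1}\ge R_1^{\tau_0-1}$, so $\tau_0$ would be an epoch where user $1$ ``begins having less'', forcing $\tau_0\ge e_k$ (resp.\ $\tau_0\ge e_\ell>t_\ell$ for the partial stretch) by the minimality in the definition of $e_k$ (resp.\ $e_\ell$) --- a contradiction. Hence $(R_1^\tau-\hat R_1^\tau)^+=0$ and therefore $f^\tau=0$ on every up stretch, so $\sum_{\tau=1}^{t_\ell}f^\tau=\sum_{\tau=1}^{s_0-1}f^\tau+\sum_{k=0}^{\ell-1}\sum_{\tau=e_k}^{s_{k+1}-1}f^\tau$.

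The heart of the argument is to pin down $f^{e_k}$ exactly. From $\hat R_1^{e_k-1}\ge R_1^{e_k-1}$ and $\hat R_1^{e_k}<R_1^{e_k}$,
\begin{equation*}
    r_1^{e_k}-\hat r_1^{e_k}=\big(R_1^{e_k}-\hat R_1^{e_k}\big)-\big(R_1^{e_k-1}-\hat R_1^{e_k-1}\big)\ge R_1^{e_k}-\hat R_1^{e_k}>0,
\end{equation*}
so in $f^{e_k}=\min\big\{(r_1^{e_k}-\hat r_1^{e_k})^+,\,(R_1^{e_k}-\hat R_1^{e_k})^+\big\}$ the second argument is the smaller one, giving $f^{e_k}=R_1^{e_k}-\hat R_1^{e_k}$ and hence $r_1^{e_k}-f^{e_k}=\hat R_1^{e_k}-R_1^{e_k-1}$. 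On the remaining epochs of the stretch I would use only the crude bound $f^\tau\le(r_1^\tau-\hat r_1^\tau)^+\le r_1^\tau$; combined with the previous identity this gives $\sum_{\tau=e_k}^{s_{k+1}-1}f^\tau\le\big(\sum_{\tau=e_k}^{s_{k+1}-1}r_1^\tau\big)-\big(\hat R_1^{e_k}-R_1^{e_k-1}\big)$, while the initial stretch contributes only $\sum_{\tau=1}^{s_0-1}f^\tau\le R_1^{s_0-1}$.

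Summing over $k$ and using $\sum_{\tau=e_k}^{s_{k+1}-1}r_1^\tau=R_1^{s_{k+1}-1}-R_1^{e_k-1}$,
\begin{equation*}
    \sum_{\tau=1}^{t_\ell}f^\tau\le R_1^{s_0-1}+\sum_{k=0}^{\ell-1}\Big(R_1^{s_{k+1}-1}-R_1^{e_k-1}\Big)-\sum_{k=0}^{\ell-1}\Big(\hat R_1^{e_k}-R_1^{e_k-1}\Big)=\sum_{j=0}^{\ell}R_1^{s_j-1}-\sum_{k=0}^{\ell-1}\hat R_1^{e_k},
\end{equation*}
since the $R_1^{e_k-1}$ terms cancel and $R_1^{s_0-1}+\sum_{k=0}^{\ell-1}R_1^{s_{k+1}-1}=\sum_{j=0}^{\ell}R_1^{s_j-1}$. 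As $\sum_{j=0}^{\ell}R_1^{s_j-1}-\sum_{k=0}^{\ell-1}\hat R_1^{e_k}=R_1^{s_\ell-1}-\sum_{k=0}^{\ell-1}\big(\hat R_1^{e_k}-R_1^{s_k-1}\big)$, plugging this into $2\big(\hat R_1^{t_\ell}-R_1^{t_\ell}\big)\le\sum_{\tau=1}^{t_\ell}f^\tau$ would complete the proof.

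I expect the main obstacle to be not any single estimate but the index bookkeeping: checking that the up and down stretches genuinely tile $\{1,\dots,t_\ell\}$ (in particular that $[1,s_0-1]$ has no preceding up stretch whereas each $[e_k,s_{k+1}-1]$ does, and that $s_\ell-1$ is the right endpoint of the last down stretch), that $f$ really vanishes on the partial final stretch $[s_\ell,t_\ell]$ because $t_\ell<e_\ell$, and that the $R_1^{e_k-1}$ terms telescope away exactly as claimed. The conceptual point worth stressing is that one must resist the temptation to bound $f^\tau\le r_1^\tau$ everywhere: the improvement over the naive $R_1^{s_\ell-1}$ bound comes precisely from computing $f^{e_k}$ exactly at the first epoch of each down stretch, which ``sets aside'' $\hat R_1^{e_k}-R_1^{e_k-1}$ units of user $1$'s true allocation that the naive bound would have spent.
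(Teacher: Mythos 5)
Your proposal is correct and follows essentially the same route as the paper's proof: apply \cref{cor:time:aux_bound}, observe that $f^\tau=0$ on the up stretches, compute $f^{e_k}=R_1^{e_k}-\hat R_1^{e_k}$ exactly at the start of each down stretch, bound $f^\tau\le r_1^\tau$ elsewhere, and telescope. The only difference is cosmetic bookkeeping (you carry $R_1^{e_k-1}$ where the paper carries $R_1^{e_k}-f^{e_k}$), and your explicit verification that $\hat R_1^\tau\ge R_1^\tau$ throughout each up stretch, including the partial final one, is a detail the paper leaves implicit.
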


\begin{proof}
    We are going to use \Cref{cor:time:aux_bound}: for every $k\in[0,\ell-1]$ and $t\in[s_k, e_k)$ it holds that the r.h.s. of \eqref{eq:time:aux} is $f^t=0$, because $\hat R_1^t \geq R_1^t$. Fix a $t_\ell\in[s_\ell,e_\ell)$ and we notice that
    \begin{eqnarray*}
        2\left(\hat R_1^{t_\ell} - R_1^{t_\ell}\right)
        \le
        \sum_{\tau=1}^{t_\ell} f^\tau
        &=&
            \sum_{\tau=1}^{s_0-1} f^\tau +
            \sum_{\tau=e_0}^{s_1-1} f^\tau + \ldots +
            \sum_{\tau=e_{\ell-1}}^{s_\ell-1} f^\tau \\
        &\ou[\le]{f^\tau\le r_1^\tau}{}& 
            R_1^{s_0-1} + 
            \left( R_1^{s_1-1} - R_1^{e_0} + f^{e_0} \right) + \ldots +
            \left( R_1^{s_\ell-1} - R_1^{e_{\ell-1}} + f^{e_{\ell-1}} \right) \\
        &=&
            R_1^{s_\ell-1} - \sum_{k=0}^{\ell-1} \left( R_1^{e_k} - g_{e_k} - R_1^{s_k-1} \right)
    \end{eqnarray*}

    Now all that is left to complete the proof is to prove that for every $k$, $R_1^{e_k} - f^{e_k} \geq \hat R_1^{e_k}$. This actually holds with an equality: due to the definition of $e_k$, $\hat R_1^{e_k-1} > R_1^{e_k-1}$ and $\hat R_1^{e_k} \leq R_1^{e_k}$, meaning that $f^{e_k} = R_1^{e_k} - \hat R_1^{e_k}$.
\end{proof}

\begin{lemma}\label{lem:time:rec_ineq}
    Assume that for every $\ell=0,1,\ldots$ there exists a $t_\ell\in [s_\ell,e_\ell)$ such that $\hat R_1^{t_\ell} \geq \g R_1^{t_\ell}$ for some $\g\in[1, 3/2)$. Then for any such $\{t_\ell\}_\ell$ and any $\ell\geq 1$:
    \begin{equation*}
        R_1^{t_\ell}
        \geq
        \frac{\g-1}{3-2\g}\sum_{k=0}^{\ell-1} R_1^{t_{k}}
    \end{equation*}
\end{lemma}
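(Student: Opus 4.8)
The plan is to derive this directly from \cref{lem:time:prev_batch} by substituting the hypothesis $\hat R_1^{t_\ell}\ge \g R_1^{t_\ell}$ and then replacing the right-hand side terms with lower bounds obtained from monotonicity. Throughout, I use that both $R_1^{\cd}$ and $\hat R_1^{\cd}$ are cumulative allocations, hence non-decreasing in the epoch index, and that we may assume user~$1$ never over-reports her demand (\cref{thm:single:no_over_report}), so that \cref{lem:time:prev_batch} applies.

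First I would rewrite $\hat R_1^{t_\ell}\ge \g R_1^{t_\ell}$ as $\hat R_1^{t_\ell}-R_1^{t_\ell}\ge (\g-1)R_1^{t_\ell}$ and plug this into the left-hand side of \cref{lem:time:prev_batch}, obtaining
\begin{equation*}
    2(\g-1)R_1^{t_\ell}
    \;\le\;
    R_1^{s_\ell-1}-\sum_{k=0}^{\ell-1}\bigl(\hat R_1^{e_k}-R_1^{s_k-1}\bigr).
\end{equation*}
Then I would bound the right-hand side. Since $s_\ell-1<s_\ell\le t_\ell$, monotonicity of $R_1^{\cd}$ gives $R_1^{s_\ell-1}\le R_1^{t_\ell}$. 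For each $k\le\ell-1$, since $t_k\in[s_k,e_k)$ we have $s_k-1<t_k<e_k$, so $\hat R_1^{e_k}\ge\hat R_1^{t_k}\ge\g R_1^{t_k}$ and $R_1^{s_k-1}\le R_1^{t_k}$, whence $\hat R_1^{e_k}-R_1^{s_k-1}\ge(\g-1)R_1^{t_k}$.

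Substituting these estimates yields $2(\g-1)R_1^{t_\ell}\le R_1^{t_\ell}-(\g-1)\sum_{k=0}^{\ell-1}R_1^{t_k}$, i.e. $(3-2\g)R_1^{t_\ell}\ge(\g-1)\sum_{k=0}^{\ell-1}R_1^{t_k}$, and since $\g<3/2$ we have $3-2\g>0$, so dividing gives the claim. There is no substantial obstacle here — the argument is a short chain of monotonicity inequalities applied to \cref{lem:time:prev_batch}; the only points needing care are correctly invoking that lemma under the no-over-reporting reduction and keeping track of the signs so that the final division by $3-2\g>0$ preserves the direction of the inequality.
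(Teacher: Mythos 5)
Your proposal is correct and follows essentially the same route as the paper: both proofs start from \cref{lem:time:prev_batch}, use monotonicity of the cumulative allocations to replace $R_1^{s_\ell-1}$ by $R_1^{t_\ell}$ and each $\hat R_1^{e_k}-R_1^{s_k-1}$ by $\hat R_1^{t_k}-R_1^{t_k}\ge(\g-1)R_1^{t_k}$, and then rearrange and divide by $3-2\g>0$. Your explicit remark about the no-over-reporting reduction needed to invoke \cref{lem:time:prev_batch} is a detail the paper leaves implicit, but the argument is the same.
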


\begin{proof}
    Fix an $\ell\geq 1$. We use \Cref{lem:time:prev_batch} and get that
    \begin{equation*}
        2\left(\hat R_1^{t_\ell} - R_1^{t_\ell}\right)
        \leq
        R_1^{s_\ell-1} - \sum_{k=0}^{\ell-1}\left(\hat R_1^{e_k} - R_1^{s_k-1}\right)
        \leq
        R_1^{t_\ell} - \sum_{k=0}^{\ell-1}\left(\hat R_1^{t_k} - R_1^{t_k}\right)
    \end{equation*}

    Using the fact that for every $k=0,\ldots,\ell$, $\hat R_1^{t_k} \geq \g R_1^{t_k}$, the above inequality becomes
    \begin{equation*}
        (2\g-3) R_1^{t_\ell}
        \leq
        - (\g-1)\sum_{k=0}^{\ell-1} R_1^{t_\ell}
    \end{equation*}
    which proves the lemma.
\end{proof}

\begin{corollary}
    If the conditions of \Cref{lem:time:rec_ineq} hold, then for all $\ell\geq 1$,
    \begin{equation}\label{eq:time:rec_sol}
        R_1^{t_\ell}
        \geq
        \frac{\g-1}{2-\g}\left(\frac{2-\g}{3-2\g}\right)^\ell
        R_1^{t_0}
    \end{equation}
\end{corollary}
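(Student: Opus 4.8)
The plan is to convert the summation bound of \cref{lem:time:rec_ineq} into a clean geometric recursion on partial sums, and then unwind it. If $\g=1$ the claimed inequality is trivial (the right-hand side is $0$ and $R_1^{t_0}\ge 0$), so I would assume $\g\in(1,3/2)$ and write $c=\frac{\g-1}{3-2\g}$, which is strictly positive in this range. \cref{lem:time:rec_ineq} then reads $R_1^{t_\ell}\ge c\sum_{k=0}^{\ell-1}R_1^{t_k}$ for every $\ell\ge 1$.

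Next I would introduce the partial sums $S_\ell=\sum_{k=0}^{\ell}R_1^{t_k}$. Adding $S_{\ell-1}$ to both sides of the bound from \cref{lem:time:rec_ineq} gives $S_\ell = S_{\ell-1}+R_1^{t_\ell}\ge (1+c)S_{\ell-1}$ for all $\ell\ge 1$. Iterating this from $\ell-1$ down to $0$ yields $S_{\ell-1}\ge (1+c)^{\ell-1}S_0=(1+c)^{\ell-1}R_1^{t_0}$, and plugging this back into $R_1^{t_\ell}\ge cS_{\ell-1}$ gives $R_1^{t_\ell}\ge c(1+c)^{\ell-1}R_1^{t_0}$.

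It then remains to match constants. I would compute $1+c = \frac{(3-2\g)+(\g-1)}{3-2\g}=\frac{2-\g}{3-2\g}$ and observe $c=\frac{\g-1}{3-2\g}=\frac{\g-1}{2-\g}\cdot\frac{2-\g}{3-2\g}=\frac{\g-1}{2-\g}(1+c)$, so that $c(1+c)^{\ell-1}=\frac{\g-1}{2-\g}(1+c)^{\ell}=\frac{\g-1}{2-\g}\left(\frac{2-\g}{3-2\g}\right)^{\ell}$, which is exactly \eqref{eq:time:rec_sol}. The only thing to be slightly careful about is that the geometric iteration requires $1+c>0$ (true since $c>0$) so that the inequality direction is preserved at each step; there is no real obstacle here, as the whole argument is a one-line recursion plus elementary algebra. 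I would finally remark that the exponential growth of $t_\ell$ claimed in \cref{thm:time:repeat} follows by combining \eqref{eq:time:rec_sol} with the hypothesis $R_1^{t}\in\Theta(t)$.
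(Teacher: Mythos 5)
Your proof is correct. Both you and the paper start from the same single ingredient, \cref{lem:time:rec_ineq}, i.e.\ the recursion $R_1^{t_\ell}\ge c\sum_{k=0}^{\ell-1}R_1^{t_k}$ with $c=\frac{\g-1}{3-2\g}$, so the substance is the same; the difference is purely in how the recursion is unwound. The paper proves \eqref{eq:time:rec_sol} directly by strong induction on $\ell$, which forces it to sum a geometric series inside the inductive step to recover the closed form. You instead pass to the partial sums $S_\ell=\sum_{k=0}^{\ell}R_1^{t_k}$, observe that the lemma gives the multiplicative recursion $S_\ell\ge(1+c)S_{\ell-1}$, unroll it to $S_{\ell-1}\ge(1+c)^{\ell-1}R_1^{t_0}$, and substitute back. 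This is the standard device for ``each term dominates a constant fraction of the sum of all previous terms'' recursions, and it buys you a cleaner argument: no geometric series to resum, and the constant-matching $1+c=\frac{2-\g}{3-2\g}$, $c=\frac{\g-1}{2-\g}(1+c)$ is elementary. Your side remarks are also in order: the case $\g=1$ is trivial since the right-hand side of \eqref{eq:time:rec_sol} vanishes, and positivity of $c$ and $1+c$ for $\g\in(1,3/2)$ ensures the inequality direction is preserved at every step. No gaps.
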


\begin{proof}
    We will prove the corollary with induction on $\ell$. For $\ell=1$, \eqref{eq:time:rec_sol} follows from \Cref{lem:time:rec_ineq}.

    Assume that for some $L$, \eqref{eq:time:rec_sol} holds for all $\ell=1,2,\ldots,L$. Using \Cref{lem:time:rec_ineq} we have that
    \begin{equation*}
        R_1^{t_{L+1}}
        \geq
        \frac{\g-1}{3-2\g}\sum_{\ell=0}^L R_1^{t_\ell}
        \ou[\geq]{\eqref{eq:time:rec_sol}}{}
        R_1^{t_0}
        \frac{\g-1}{3-2\g}
        \left( 1 + \sum_{\ell=1}^{L} \frac{\g-1}{2-\g}\left(\frac{2-\g}{3-2\g}\right)^\ell \right)
        = R_1^{t_0} \frac{\g-1}{2-\g}\left(\frac{2-\g}{3-2\g}\right)^{L+1}
    \end{equation*}
\end{proof}

\begin{corollary}
    If the conditions of \Cref{lem:time:rec_ineq} hold and for all $t$, $R_1^t \in \Theta(t)$, then for all $\ell\ge 1$
    \begin{equation*}
        t_\ell = \left(\frac{2-\g}{3-2\g}\right)^\ell \Omega(t_0)
    \end{equation*}
\end{corollary}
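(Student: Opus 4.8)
The plan is to obtain this corollary directly from the preceding corollary, equation~\eqref{eq:time:rec_sol}, together with the hypothesis $R_1^t\in\Theta(t)$; no further reasoning about the mechanism is needed, since all of that work is already packaged into the chain of lemmas ending at \eqref{eq:time:rec_sol}. First I would unpack $R_1^t\in\Theta(t)$ into explicit constants: fix $c_1,c_2>0$ with $c_1 t\le R_1^t\le c_2 t$ for every epoch $t$.

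Then I would apply \eqref{eq:time:rec_sol} at an arbitrary index $\ell\ge 1$ and sandwich it between these bounds, using $R_1^{t_\ell}\le c_2 t_\ell$ on the left and $R_1^{t_0}\ge c_1 t_0$ on the right:
\[
    c_2\, t_\ell
    \;\ge\;
    R_1^{t_\ell}
    \;\ge\;
    \frac{\g-1}{2-\g}\left(\frac{2-\g}{3-2\g}\right)^\ell R_1^{t_0}
    \;\ge\;
    c_1\,\frac{\g-1}{2-\g}\left(\frac{2-\g}{3-2\g}\right)^\ell t_0 .
\]
Dividing by $c_2$ gives $t_\ell \ge \frac{c_1}{c_2}\cdot\frac{\g-1}{2-\g}\left(\frac{2-\g}{3-2\g}\right)^\ell t_0$. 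Since $\g$ is a fixed constant with $\g>1$, the only regime in which the statement is non-vacuous, and $c_1,c_2$ depend only on the sequence $\{R_1^t\}_t$, the prefactor $\frac{c_1}{c_2}\cdot\frac{\g-1}{2-\g}$ is a positive constant; absorbing it into $\Omega(\cdot)$ is exactly the claim $t_\ell=\left(\frac{2-\g}{3-2\g}\right)^\ell\Omega(t_0)$.

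There is essentially no obstacle in this final step: it is a routine translation of the multiplicative lower bound \eqref{eq:time:rec_sol} into $\Theta$/$\Omega$ notation. The one point worth a sentence of care is that $\frac{\g-1}{2-\g}>0$ holds precisely because $\g\in(1,3/2)$, so the hidden constant in $\Omega(t_0)$ is genuinely positive, and moreover $\frac{2-\g}{3-2\g}>1$ for $\g>1$, so $t_\ell$ indeed grows at least geometrically in $\ell$. Combined with the first half of \Cref{thm:time:repeat}, the bound $t_\ell=O(s_\ell)$ obtained from $\hat R_1^{t_\ell}\ge\g R_1^{t_\ell}$, $R_1^t\in\Theta(t)$, and \Cref{lem:time:prev_batch}, this completes the picture of how the ``profitable'' intervals must spread out over time.
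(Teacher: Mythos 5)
Your proof is correct and matches the paper's intent exactly: the paper states this corollary without proof, treating it as an immediate consequence of \eqref{eq:time:rec_sol} and $R_1^t\in\Theta(t)$, and your unpacking of the $\Theta$-constants is precisely the routine argument being omitted. Your side remarks about $\g>1$ making the prefactor $\frac{\g-1}{2-\g}$ positive and the ratio $\frac{2-\g}{3-2\g}$ exceed $1$ are accurate and worth including.
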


This corollary proves the second part of \Cref{thm:time:repeat}. Now we are going to prove the first part.

\begin{lemma}
    Fix an $\ell\in\{0,1,\ldots\}$ and assume that for some $t_\ell\in [s_\ell,e_\ell)$ it holds that $\hat R_1^{t_\ell} \geq \g R_l^{t_\ell}$, for some constant $\g>1$. If $R_1^t\in\Theta(t)$ for all $t$, then for all $\ell\in\{0,1,\ldots\}$
    \begin{equation*}
        t_\ell \in O(s_\ell)
    \end{equation*}
\end{lemma}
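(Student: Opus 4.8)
The plan is to derive the bound directly from \cref{lem:time:prev_batch}, the hypothesis $\hat R_1^{t_\ell}\ge\g R_1^{t_\ell}$, and the growth assumption $R_1^t\in\Theta(t)$. By \cref{thm:single:no_over_report} we may assume without loss of generality that user $1$ never over-reports her demand, which is exactly the hypothesis needed to invoke \cref{lem:time:prev_batch}.

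First I would show that the subtracted sum in \cref{lem:time:prev_batch} is nonnegative, so that discarding it only weakens the inequality. Indeed, for each $k$ the sequence $\hat R_1^{\cd}$ is nondecreasing and $e_k\ge s_k$, while by definition of $s_k$ we have $\hat R_1^{s_k}>R_1^{s_k}\ge R_1^{s_k-1}$; hence $\hat R_1^{e_k}\ge\hat R_1^{s_k}\ge R_1^{s_k-1}$, i.e.\ $\hat R_1^{e_k}-R_1^{s_k-1}\ge 0$. Dropping the sum, \cref{lem:time:prev_batch} gives $2(\hat R_1^{t_\ell}-R_1^{t_\ell})\le R_1^{s_\ell-1}$. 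Plugging in $\hat R_1^{t_\ell}\ge\g R_1^{t_\ell}$ yields $2(\g-1)R_1^{t_\ell}\le R_1^{s_\ell-1}$.

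To finish, I would use $R_1^t\in\Theta(t)$: fix constants $c_1,c_2>0$ with $c_1 t\le R_1^t\le c_2 t$ for all $t$. Then $c_1 t_\ell\le R_1^{t_\ell}\le\frac{1}{2(\g-1)}R_1^{s_\ell-1}\le\frac{c_2}{2(\g-1)}s_\ell$, so $t_\ell\le\frac{c_2}{2c_1(\g-1)}s_\ell=O(s_\ell)$, with a constant depending only on $\g$ and the constants hidden in $R_1^t\in\Theta(t)$. The degenerate case $s_\ell=1$ (where $R_1^{s_\ell-1}=R_1^0=0$) cannot occur, since it would force $R_1^{t_\ell}=0$ while $R_1^{t_\ell}\ge c_1 t_\ell>0$. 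I do not expect any real obstacle here: the statement is essentially a one-line corollary of \cref{lem:time:prev_batch}; the only points needing attention are the reduction to the no-over-reporting case and the sign of the discarded sum, both of which are routine.
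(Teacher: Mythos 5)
Your proof is correct and follows essentially the same route as the paper: apply \cref{lem:time:prev_batch}, observe the subtracted sum is nonnegative so it can be dropped, obtain $2(\g-1)R_1^{t_\ell}\le R_1^{s_\ell-1}$, and conclude via $R_1^{t_\ell}=\Theta(t_\ell)$ and $R_1^{s_\ell-1}=\Theta(s_\ell)$. The paper states the key inequality without detail ("we can easily prove"), and your justification of the sign of the discarded sum and the reduction to no over-reporting are exactly the intended routine steps.
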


\begin{proof}
    Using \Cref{lem:time:prev_batch} and $\hat R_1^{t_\ell} \geq \g R_1^{t_\ell}$ we can easily prove that
    \begin{equation*}
        2(\g-1) R_1^{t_\ell} \leq R_1^{s_\ell-1}
    \end{equation*}

    The lemma follows from the facts that $R_1^{t_\ell} = \Theta(t_\ell)$ and $R_1^{s_\ell-1} = \Theta(s_\ell)$.
\end{proof}




\end{document}